\documentclass[smallcondensed,a4paper]{svjour3}
\pdfoutput=1
\smartqed
\usepackage{graphicx}

\usepackage{newtxtext,newtxmath} 

\usepackage[utf8]{inputenc}
\usepackage[T1]{fontenc}
\usepackage[ngerman,english]{babel}
\usepackage{llncsstuffthm}
\usepackage{cwdefs}
\usepackage{amssymb}
\usepackage{amsfonts}
\usepackage{multirow}
\usepackage{longtable}
\usepackage{booktabs}
\usepackage{amsmath}
\usepackage{bigdelim}
\usepackage{comment}
\usepackage{xspace}
\usepackage{booktabs}
\usepackage{setspace}
\usepackage{upgreek}
\usepackage{enumitem}
\usepackage{rotating}
\usepackage{tikz}
\usetikzlibrary{shapes.geometric}
\usetikzlibrary{shapes.symbols}
\usetikzlibrary{positioning}
\usepackage{array}
\usepackage{xcolor}
\usepackage{colortbl}
\usepackage{capt-of}
\usepackage[labelfont=bf,hypcap=false]{caption}
\usepackage{pdfpages}

\usepackage{hyperref}
\hypersetup{
hyperfootnotes=false,
breaklinks=true,
hidelinks,
linktoc=all,
colorlinks=false
}

\title{Craig Interpolation and Access Interpolation\\ with Clausal First-Order Tableaux}

\author{Christoph Wernhard}
\institute{\email{info@christophwerhard.com}}

\date{Revision: October 18, 2018}

\newcommand{\binop}{\otimes}

\newcounter{equivcounter}
\newcounter{entailcounter}

\newcommand{\tup}[1]{\boldsymbol{#1}}

\newcommand{\uus}{\tup{u}}
\newcommand{\vvs}{\tup{v}}

\newcommand{\ffs}{\tup{f}}
\newcommand{\ggs}{\tup{g}}

\renewcommand{\xxs}{\tup{x}}

\newcommand{\yys}{\tup{y}}

\newcommand{\Qs}{\tup{Q}}
\newcommand{\Rs}{\tup{R}}

\newcommand{\Ss}{\tup{S}}

\newcommand{\emptyseq}{\epsilon}

\setcounter{secnumdepth}{3}

\def\squareforxed{$\vartriangleleft$\hspace*{-0.01em}}

\def\xed{\ifmmode\squareforqed\else{\unskip\nobreak\hfil
\penalty50\hskip1em\null\nobreak\hfil\squareforxed
\parfillskip=0pt\finalhyphendemerits=0\endgraf}\fi}

\renewcommand{\subst}[3]{\warnSubstIsUndefined}

\newcommand{\var}[1]{\f{var}(#1)}

\newcommand{\parg}[1]{\f{arg}(#1)}
\newcommand{\garg}[1]{\f{garg}(#1)}

\renewcommand{\setminus}{\backslash}

\tikzset{
itria/.style={
  draw,
  solid,
  thin,
  isosceles triangle,
  isosceles triangle apex angle=60,
  shape border rotate=90,yshift=-6.8ex}
}

\newcommand{\tria}[1]
{{ node[itria]
    {\small \makebox[0.9em]{\rule[-0.55ex]{0pt}{2.2ex}#1} }}}

\newcommand{\linked}{ edge from parent[very thick] }

\newcommand{\closednode}[1]{node [label={below: \raisebox{1.4ex}[1.4ex]{$*$}}]
  {#1} }

\newcommand{\tableauscale}{0.7}

\newcommand{\nodeMarkLeafonly}[1]
{ node [draw,rectangle,inner sep=0pt,line width=1.5pt] {#1} }

\newcommand{\nodeMarkContig}[1]
{ node [draw,ellipse,inner sep=1pt,line width=1pt] {#1} }

\newcommand{\nodeMarkRegular}[1]
{ node [draw,tape,inner sep=1pt,line width=1pt] {#1} }

\newenvironment{tableaufig}[1]
{\noindent\begin{minipage}{\linewidth}%
\captionsetup[figure]{font=small,labelfont=bf}%
\centering%
\captionof{figure}{ #1}}
{\end{minipage}}

\newenvironment{tableaufigTwoCol}[1]
{\noindent\begin{minipage}[t]{0.5\linewidth}%
\captionsetup[figure]{font=small,labelfont=bf}%
\centering%
\captionof{figure}{#1}}
{\end{minipage}}

\newcommand{\rewrite}{\Rightarrow}

\newcommand{\aipol}[1]{\f{acc\hyph ipol}(#1)}

\newcommand{\SetOfS}{\mathcal{S}}

\newcommand{\ACIT}{ACI-Tableau\xspace}
\newcommand{\ACITX}{ACI-Tableaux\xspace}
\newcommand{\acit}{ACI-tableau\xspace}
\newcommand{\acitx}{ACI-tableaux\xspace}

\newenvironment{tabularlabtext}
{\begin{center}\begin{tabular}{R{2em}@{\hspace{0.5em}}L{30em}}}
{\end{tabular}\end{center}}

\newenvironment{arrayprf}
{\begin{array}{Z{2.5em}@{\hspace{1em}}X{32em}}}
{\end{array}}

\newenvironment{arrayprflong}
{\begin{longtable}{Z{2.5em}@{\hspace{1em}}X{32em}}}
{\end{longtable}}

\newcommand{\centerlong}[1]{\ \par
\vspace{-12pt}{\centering #1}\vspace{-10pt}}

\newenvironment{arrayprfeq}
{\begin{array}{Z{2.5em}@{\hspace{1em}}Y{1.5em}X{30.5em}}}
{\end{array}}

\newcommand{\algoinput}{\smallskip \noindent\textsc{Input: }}
\newcommand{\algooutput}{\smallskip \noindent\textsc{Output: }}
\newcommand{\algomethod}{\smallskip \noindent\textsc{Method: }}
\newcommand{\algoskip}{\vspace{2pt}}

\newcommand{\nlit}[1]{\f{lit}(#1)}
\newcommand{\nclause}[1]{\f{clause}(#1)}
\newcommand{\nparent}[1]{\f{parent}(#1)}
\newcommand{\nside}[1]{\f{side}(#1)}
\newcommand{\ntgt}[1]{\f{tgt}(#1)}
\newcommand{\nipol}[1]{\f{ipol}(#1)}
\newcommand{\nbranch}[2]{\f{branch}_{#1}(#2)}
\newcommand{\ncopy}[1]{\f{copy}(N)}

\newcommand{\ncode}[1]{\f{path\hyph string}(#1)}
\newcommand{\nbadlits}[1]{\f{bad\hyph literals}(#1)}

\newcommand{\poss}[1]{\f{pos}(#1)}
\newcommand{\subform}[2]{#1|_{#2}}

\newcommand{\bpatt}{binding pattern\xspace}
\newcommand{\bpatts}{binding patterns\xspace}

\newcommand{\bpp}[1]{\f{bp}(#1)}

\newcommand{\RQFO}{RQFO\xspace}
\newcommand{\RQFOT}{RQFOT\xspace}

\newcommand{\vout}{\mathbf{v}}

\newcommand{\dx}{\mathbf{x}}

\newcommand{\ddp}{\f{d}}

\newcommand{\eeq}{\f{e}}
\newcommand{\skf}{\f{f}}
\newcommand{\skg}{\f{g}}

\newcommand{\DEFP}[1]{\f{DEF}(#1)}
\newcommand{\DEFPNOT}[1]{\f{DEF}(\lnot #1)}
\newcommand{\defp}[1]{\f{def}_{#1}(F)}

\definecolor{tcolaaa}{rgb}{0.9,0.0,0.0}
\definecolor{tcolbbb}{rgb}{0.0,0.0,0.9}

\newcommand{\taaa}[1]{\textcolor{tcolaaa}{#1}}
\newcommand{\tbbb}[1]{\textcolor{tcolbbb}{#1}}

\definecolor{tcolaaabg}{rgb}{0.9,0.9,0.9}
\definecolor{tcolbbbbg}{rgb}{1.0,1.0,1.0}

\newcommand{\vbar}{\raisebox{-0.60ex}{\rule{0pt}{2.35ex}}}

\renewcommand{\taaa}[1]{\colorbox{tcolaaabg}{$#1\vbar$}}
\renewcommand{\tbbb}[1]{\colorbox{tcolbbbbg}{$#1\vbar$}}

\newcommand{\taaatxt}[1]{\colorbox{tcolaaabg}{#1}}

\newcommand{\dom}[1]{\f{dom}(#1)}
\newcommand{\rng}[1]{\f{rng}(#1)}

\newcommand{\revsubst}[2]{#1 \la #2^{-1} \ra}
\newcommand{\revsubststandalone}[1]{\la #1^{-1} \ra}
\newcommand{\toprevsubst}[2]{#1 \la\!\la #2^{-1} \ra\!\ra}

\newcommand{\pred}[1]{\mathcal{V}_{\f{P}}(#1)}
\newcommand{\const}[1]{\mathcal{V}_{\f{C}}(#1)}
\newcommand{\fun}[1]{\mathcal{V}_{\f{F}}(#1)}
\newcommand{\lit}[1]{\mathcal{L}(#1)}

\renewcommand{\pred}[1]{\mathcal{P}(#1)}
\renewcommand{\const}[1]{\mathcal{C}(#1)}
\renewcommand{\fun}[1]{\mathcal{F}(#1)}
\renewcommand{\var}[1]{\mathcal{V}(#1)}

\renewcommand{\pred}[1]{\f{pred}(#1)}
\renewcommand{\const}[1]{\f{const}(#1)}
\renewcommand{\fun}[1]{\f{fun}(#1)}
\renewcommand{\var}[1]{\f{var}(#1)}
\renewcommand{\lit}[1]{\f{lit}(#1)}

\newcommand{\violet}[1]{{\color{\colfgviolet}#1}}

\newcommand{\red}[1]{{\color{\colfgred}#1}}
\newcommand{\blue}[1]{{\color{\colfgblue}#1}}

\renewcommand{\violet}[1]{#1}
\renewcommand{\red}[1]{#1}
\renewcommand{\blue}[1]{#1}

\newcommand{\F}{\red{F}}
\newcommand{\G}{\blue{G}}
\renewcommand{\H}{\violet{H}}
\newcommand{\FA}{\red{F_{\aaa}}}
\newcommand{\FB}{\blue{F_{\bbb}}}
\newcommand{\lnotFB}{\blue{\lnot F_{\bbb}}}
\newcommand{\aaa}{\red{\f{red}}}
\newcommand{\bbb}{\blue{\f{blue}}}

\newcommand{\nbranchA}[1]{\red{\f{branch}_{\aaa}(#1)}}
\newcommand{\nbranchB}[1]{\blue{\f{branch}_{\bbb}(#1)}}

\renewcommand{\aaa}{\mathsc{*L}}
\renewcommand{\bbb}{\f{R}}

\newcommand{\emptysubst}{\f{Identity}}

\newcommand{\sterm}{\text{-term}}

\newcommand{\sided}{two-sided\xspace}
\newcommand{\Sided}{Two-Sided\xspace}

\renewcommand{\aaa}{\f{L}}
\renewcommand{\bbb}{\f{R}}

\newcommand{\xxsq}{\tup{x}} 
\newcommand{\yysq}{\tup{y}}

\newcommand{\xxsa}{\tup{u}} 
\newcommand{\yysa}{\tup{v}}

\newcommand{\stsk}{\phi}    
\newcommand{\stskf}{\stsk|_{\yysq}}    
\newcommand{\stskg}{\stsk|_{\xxsq}}  
\newcommand{\stski}{\phi}   

\newcommand{\stz}{\lambda}           
\newcommand{\stzig}{\stz|_{\yysq}}   
\newcommand{\stzif}{\stz|_{\xxsq}}   

\newcommand{\stt}{\mu}           
\newcommand{\stig}{\stt|_{\yysa}}   
\newcommand{\stif}{\stt|_{\xxsa}}   

\newcommand{\sth}{\eta}  

\newcommand{\stv}{\sigma} 
\newcommand{\stren}{\rho} 

\hyphenation{tab-leaux}
\hyphenation{tab-leau}

\newcommand{\CTIG}{CTI\xspace}
\newcommand{\CTIF}{CTI\xspace}

\makeatletter
\newlength{\negph@wd}
\DeclareRobustCommand{\negphantom}[1]{%
  \ifmmode
    \mathpalette\negph@math{#1}%
  \else
    \negph@do{#1}%
  \fi
}
\newcommand{\negph@math}[2]{\negph@do{$\m@th#1#2$}}
\newcommand{\negph@do}[1]{%
  \settowidth{\negph@wd}{#1}%
  \hspace*{-\negph@wd}%
}
\makeatother

\newwrite\keyslogfile
\openout\keyslogfile=\jobname.log.prkeys

\newcommand{\prset}[2]
           {\pgfkeysifdefined{/prlabval/#1}
             {\write\keyslogfile{WARNING Duplicate: #1}}
             {\pgfkeyssetvalue{/prlabval/#1}{#2}}}

\IfFileExists{./\jobname.aux.prkeys}{\input{\jobname.aux.prkeys}}{}
\newwrite\keysfile
\openout\keysfile=\jobname.aux.prkeys

\newcounter{prlabcounter}

\newcommand{\prlReset}[1]
           {\pgfkeyssetvalue{/prlabcontext}{#1}
             \setcounter{prlabcounter}{0}}

\prlReset{default}

\newcommand{\prl}[1]
           {\stepcounter{prlabcounter}(\theprlabcounter)%
             \immediate\write\keysfile{%
               \unexpanded{\prset}{\pgfkeysvalueof{/prlabcontext}:#1}{\theprlabcounter}}}

\newcommand{\pref}[1]
           {\pgfkeysifdefined{/prlabval/\pgfkeysvalueof{/prlabcontext}:#1}
             {(\pgfkeysvalueof{/prlabval/\pgfkeysvalueof{/prlabcontext}:#1})%
               \immediate\write\keyslogfile{Referenced: \pgfkeysvalueof{/prlabcontext}:#1}}
             {(??)%
               \immediate\write\keyslogfile{WARNING Undefined: \pgfkeysvalueof{/prlabcontext}:#1}}}
           
\newcommand{\prefglobal}[1]
           {\pgfkeysifdefined{/prlabval/#1}%
             {(\pgfkeysvalueof{/prlabval/#1})}%
             {(??)}}

\newcommand{\ic}{\f{c}}
\newcommand{\id}{\f{d}}
\newcommand{\ie}{\f{e}}
\newcommand{\ia}{\f{t}}
\newcommand{\ig}{\f{b}}
\newcommand{\iq}{\f{q}}

\newcommand{\inject}[1]{#1^*}

\newcommand{\hparenc}{\hphantom{\{}}

\renewcommand{\NG}{G^{\lnot}}

\newcommand{\casesection}[1]
           {\paragraph{\textbf{#1.}}}

\newlist{caselist}{description}{2}
\setlist[caselist]
        {leftmargin=1.5em,font=\itshape,style=nextline,itemsep=1.0ex,parsep=0pt}

\newcommand{\casepara}[1]{\item[#1:]}
           
\newcommand{\waicond}{WAI\xspace}

\newcommand{\LL}{\f{L}}
\newcommand{\RR}{\f{R}}
\newcommand{\XS}{s}

\newcommand{\stx}{\theta}

\newcommand{\stcr}{\sigma}

\newcommand{\sti}{\mu}

\newcommand{\extabscale}{0.5}
\newcommand{\extabld}{10ex}

\newcommand{\dk}{\f{d}}
\newcommand{\rk}{\f{r}}
\newcommand{\sk}{\f{s}}
\newcommand{\ffk}{\f{f}}
\newcommand{\ggk}{\f{g}}

\newcommand{\pk}{\f{p}}
\newcommand{\hk}{\f{h}}
\newcommand{\ak}{\f{a}}
\newcommand{\bk}{\f{b}}

\begin{document}

\includepdf[pages={1-2}]{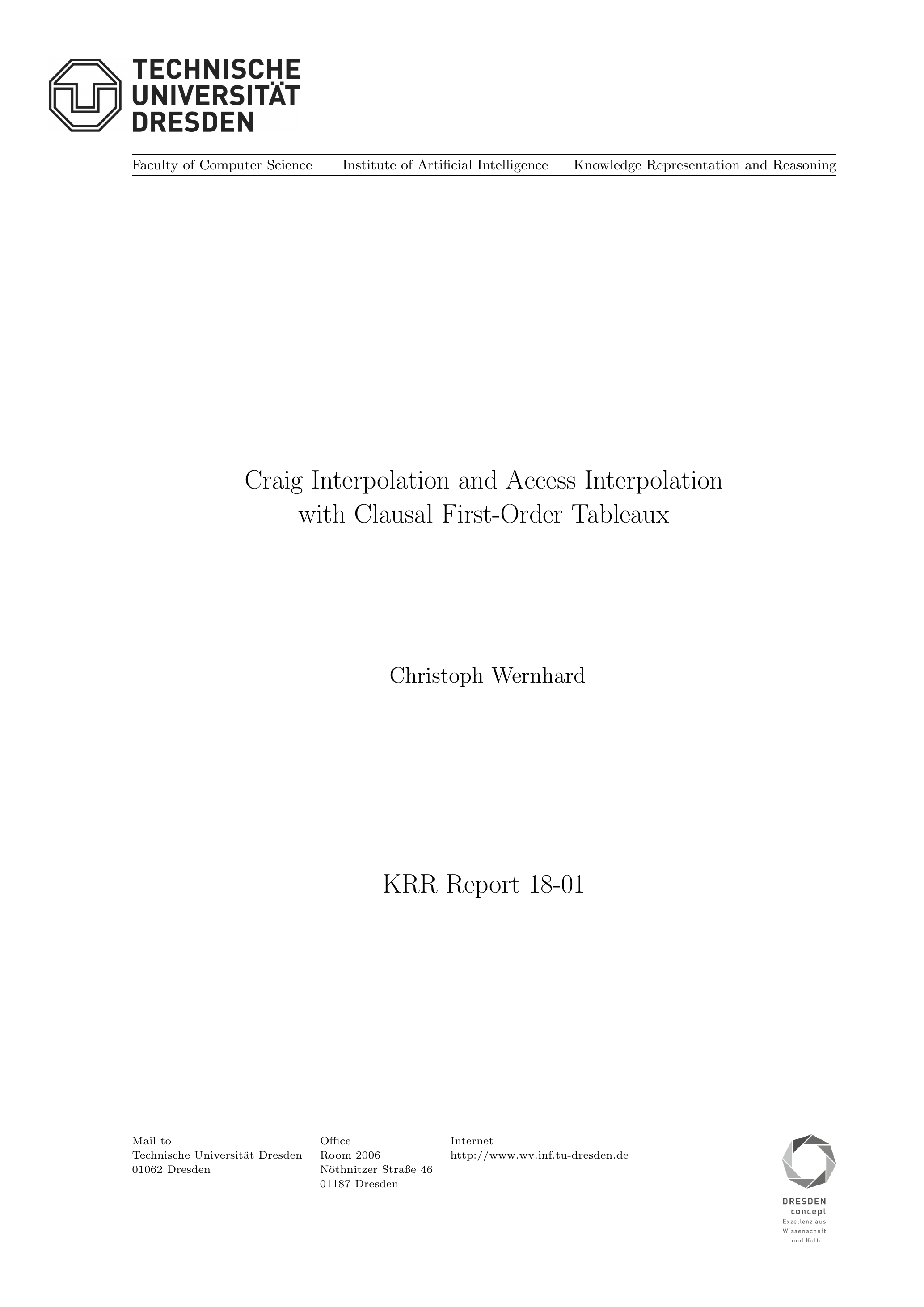}
\setcounter{page}{1}

\maketitle

\begin{abstract}
  We develop foundations for computing Craig interpolants and similar
  intermediates of two given formulas with first-order theorem provers that
  construct clausal tableaux.  Provers that can be understood in this way
  include efficient machine-oriented systems based on calculi of two families:
  goal-oriented like model elimination and the connection method, and
  bottom-up like the hyper tableau calculus.  The presented method for
  Craig-Lyndon interpolation involves a lifting step where terms are replaced
  by quantified variables, similar as known for resolution-based
  interpolation, but applied to a differently characterized ground formula and
  proven correct more abstractly on the basis of Herbrand's theorem,
  independently of a particular calculus.  Access interpolation is a recent
  form of interpolation for database query reformulation that applies to
  first-order formulas with relativized quantifiers and constrains the
  quantification patterns of predicate occurrences.  It has been previously
  investigated in the framework of Smullyan's non-clausal tableaux.  Here, in
  essence, we simulate these with the more machine-oriented clausal tableaux
  through structural constraints that can be ensured either directly by
  bottom-up tableau construction methods or, for closed clausal tableaux
  constructed with arbitrary calculi, by postprocessing with restructuring
  transformations.

\keywords{Craig interpolation \and first-order theorem proving \and clausal
  tableaux \and connection method \and hyper tableaux \and interpolant lifting
  \and query reformulation \and relativized quantifiers}
\end{abstract}

\section{Introduction}
\label{sec-intro}
By Craig's interpolation theorem \cite{craig:linear}, for two first-order
formulas $F$ and $G$ such that $F$ entails $G$ there exists a third
first-order formula $H$ that is entailed by $F$, entails $G$ and is such that
all predicate and function symbols occurring in it occur in both $F$ and $G$.
Such a \name{Craig interpolant}~$H$ can be \emph{constructed} from given
formulas $F$ and $G$, for example by a calculus that allows to extract~$H$
from a proof that $F$ entails $G$, or, equivalently, that the implication $F
\imp G$ is valid.  Automated construction of interpolants has many
applications, in the area of computational logic most notably in symbolic
model checking, initiated with \cite{mcmillan:2003}, and in query
reformulation
\cite{marx:2007,nash:2010,borgida:2010,toman:wedell:book,benedikt:guarded,benedikt:etal:2014:generating,toman:2015:tableaux,benedikt:book,benedikt:2017,toman:2017}.
The foundation for the latter application field is the observation that a
reformulated query can be viewed as a \emph{definiens} of a given query where
only symbols from a given set, the target language of the reformulation, occur
in the definiens.  The existence of such definientia, that is, definability
\cite{tarski:35}, or \name{determinacy} as it is called in the database
context, can be expressed as validity and their synthesis as interpolant
construction. For example, a \name{definiens~$H$ of a unary predicate~$\pk$
  within a first-order formula~$F$} can be characterized by the following
conditions:
\begin{enumerate}
\item $F$ entails $\forall x\, (\pk(x) \equi H)$.
\item $\pk$ does not occur in  $H$.
\end{enumerate}
The variable $x$ is allowed there to occur free in $H$. We further assume that
$x$ does not occur free in $F$ and let $F^\prime$ denote $F$ with $\pk$ replaced
by a fresh symbol $\pk^\prime$. Now the characterization of \name{definiens} by
the two conditions given above can be equivalently expressed as
\begin{center}
\name{$H$ is a
  Craig interpolant of the two formulas $F \land \pk(x)$ and $\lnot (F^\prime
  \land \lnot \pk^\prime(x))$}.
\end{center}
A definiens $H$ exists if and only if it is
valid that the first formula implies the second one.

The construction of Craig interpolants of given first-order formulas has been
elegantly specified in the framework of tableaux by Smullyan
\cite{smullyan:book:68,fitting:book}. Although this has been taken as
foundation for applications of interpolation in query reformulation
\cite{toman:wedell:book,benedikt:book}, it has been hardly used as a basis for
the practical computation of first-order interpolants with automated reasoning
systems, where the focus so far has been on interpolant extraction from
specially constrained resolution proofs (see
\cite{bonacina:15:on:ipol,kovacs:17} for recent overviews and discussions).

Here we approach the computation of interpolants from another paradigm of
automated reasoning, the construction of a \name{clausal tableau}.
Expectations are that, on the one hand, the elegance of Smullyan's
interpolation method for non-clausal tableaux can be utilized and, on the
other hand, the foundation for efficient practical implementations is laid.
Various efficient theorem proving methods can be viewed as operating by
constructing a clausal tableau \cite{handbook:tableaux:letz} (or \name{clause
  tableau} \cite{handbook:ar:haehnle}). They can be roughly divided into two
major families: First, methods that are goal-sensitive, typically proceeding
with the tableau construction ``top-down'', by ``backward chaining'', starting
with clauses from the theorem in contrast to the axioms.  Aside of clausal
tableaux in the literal sense, techniques to specify and investigate such
methods include model elimination \cite{loveland:1969}, the connection method
\cite{bibel:1981}, and the Prolog technology theorem prover \cite{pttp}.  One
of the leading first-order proving systems of the 1990s, \name{SETHEO}
\cite{setheo}, followed that approach.  The \name{leanCoP} system
\cite{leancop} along with its recent derivations
\cite{kaliszyk15:tableaux,femalecop} as well as the \name{CM} component of
\name{PIE} \cite{cw-mathlib,cw-pie} are implementations in active duty today.
The second major family of methods constructs clausal tableaux ``bottom-up'',
in a ``forward-chaining'' manner, by starting with positive axioms and
deriving positive consequences.  With the focus of their suitability to
construct model representations, these methods have been called
\name{bottom-up model generation (BUMG)} methods \cite{bumg}.  They include,
for example, \name{SATCHMO} \cite{satchmo} and the hyper tableau calculus
\cite{hypertab}, with implementations such as \name{Hyper}, formerly called
\name{\mbox{E-KRHyper}} \cite{cw-ekrhyper,cw-krhyper,hyper:2013}.
Hyper tableau methods
are also used in high-per\-for\-mance description logic reasoners
\cite{dl:hypertab}.  It appears that the chase method from the database field,
which recently got attention anew in knowledge representation (see, e.g.,
\cite{grau:2013:acyclicity}), can also be understood as such a bottom-up
tableau construction. Methods of the instance-based approach to theorem
proving (see \cite{baumgartner:2010:ibased} for an overview) should in general
be applicable to construct a clausal tableau after proving, from the instances
involved in the proof, although the proof construction itself might not
proceed by tableau construction.  For a systematic overview of different
variants of tableaux structures and methods, including clausal tableaux with
respect to both considered major paradigms see \cite{handbook:ar:haehnle}.

An essential distinction of clausal tableau methods from resolution-based
methods is that at the tableau construction only instances of \emph{input
  clauses} are created and incorporated.  Clauses are not broken apart and
joined as in a resolution step. Nevertheless, clausal tableau methods might be
complemented by preprocessors that perform such operations.
An essential distinction from non-clausal tableau methods is that with the
clausal form only a particularly simple formula structuring has to be
considered, in essence sets of clauses. Through preprocessing with conversion
to prenex form and Skolemization, the handling of quantifications amounts for
clausal tableau methods just to the handling of free variables.

The tableau-based method for Craig interpolation presented here proceeds in
two stages, with some similarity to resolution-based methods discussed in
\cite{huang:95,baaz:11,bonacina:15:on:ipol,kovacs:17} that compute in a first
stage a so-called \name{relational}, \name{weak} or \name{provisional}
interpolant which satisfies the vocabulary restriction on interpolants with
respect to predicate symbols but not necessarily with respect to function and
constant symbols. The result of the first stage is in the second stage lifted
to an actual interpolant of the original input formulas by replacing terms
with variables and prepending a specific quantifier prefix.  In our
tableau-based method the two stages are separated at a different place, more
directly related to Herbrand's theorem, without need of an additional notion
such as \name{relational interpolant}. In the first stage an actual Craig
interpolant of a finite unsatisfiable subset of the Herbrand expansion of the
Skolemized and clausified input formulas is constructed. The involved ground
clauses can be obtained as instances of clauses of the closed tableau computed
by a first-order prover for a set of first-order clauses.  With respect to
interpolation, the closed clausal tableau can be considered just as
\emph{given}, abstracting from the method by which it has been constructed.
This leads to a lean formalism for interpolation and justifies the practical
implementation of Craig interpolation with arbitrary high-performance
first-order theorem provers that construct clausal tableaux, without need to
modify inference rules or other prover internals.

There are many known ways to strengthen Craig's interpolation theorem by
ensuring that for given formulas~$F$ and~$G$ that satisfy certain syntactic
restrictions there exists an interpolant~$H$ that also satisfies certain
syntactic restrictions. For example, that predicates occur in~$H$ only with
\emph{polarities} with which they occur in both $F$ and $G$.  (A predicate
occurs with \name{positive} (\name{negative}) polarity in a formula if it
occurs there in the scope of an even (odd) number of negation operators.)  The
respective strengthened interpolation theorem has been explicated by Lyndon
\cite{lyndon}, hence we call Craig interpolants that meet this restriction
\name{Craig-Lyndon interpolants}.
\name{Access interpolation} \cite{benedikt:book} is a variant of Craig-Lyndon
interpolation that applies to formulas in which quantifiers only occur
relativized by atoms, as for example in
\begin{equation}
\label{ex-relat}
\forall x\, (\f{r}(x) \imp \exists y \exists z\, (\f{s}(x,y,z) \land \mathit{true})).
\end{equation}
With each occurrence of a relativizing atom a \defname{binding pattern} or
\defname{access pattern} is associated, which comprises the predicate, the
polarity of the occurrence and the argument positions of those variables that
are not quantified by the associated quantifier.  For example,
in~(\ref{ex-relat}) we have for the occurrence of~$\f{r}(x)$ the predicate
$\f{r}$ in negative polarity with the empty set of argument positions and for
the occurrence of $\f{s}(x,y,z)$ the predicate $\f{s}$ in positive polarity
and the set~$\{1\}$ of argument positions, because $x$ at the first argument
position in the occurrence of $\f{s}(x,y,z)$ is not quantified by $\exists y
\exists z$.  Positions specified in the set are also called \name{input}
positions, while the quantified positions are \name{output} positions,
corresponding to their role in a naive formula evaluation.  Access
interpolation strengthens Craig-Lyndon interpolation by requiring that also
the binding patterns occurring in the interpolant formula are subsumed by
binding patterns occurring in a specific way in the input formulas.

In \cite{benedikt:book} it has been shown that many tasks in database query
reformulation can be expressed in terms of access interpolation, applied to
construct definientia of queries that are in a certain vocabulary and involve
only certain binding patterns which makes them evaluable in a certain sense.
A variant of Craig-Lyndon interpolation by Otto \cite{otto:interpolation:2000}
has been suggested in \cite{nash:2010} as a technique to take relativization
into account. In \cite{benedikt:book} access interpolation is presented as a
generalization of Otto's interpolation and constructively proven on the basis
of Smullyan's tableau method following the presentation in
\cite{fitting:book}.

Access interpolants involve only relativized quantification, which seems
incompatible with a global quantifier prefix as computed by the lifting
technique sketched above for Craig interpolation, at least if predicates used
as relativizers are permitted to have empty extensions.\footnote{If
  relativizer predicates are assumed to have nonempty extensions, quantifiers
  together with their relativizing literals can be moved to the prefix,
  justified in essence by the following entailments shown here for a unary
  relativizer predicate $r$, but holding analogously also for relativizers
  with larger arity.  If $F$ and $G$ are formulas such that $x$ does not occur
  free in $G$, then:
\[
\begin{array}{rcl}
\exists x\, r(x) & \entails &
(\exists x\, (r(x) \land F) \lor G)
\equi
\exists x\, (r(x) \land (F \lor G)).\\
\exists x\, r(x) & \entails &
(\forall x\, (r(x) \imp F) \land G)
\equi
\exists x\, (r(x) \imp (F \land G)).
\end{array}
\]}
Hence, the method for access interpolation presented here extracts the
interpolant from a tableau in a single stage, where a form of lifting that
only applies to subformulas corresponding to scopes of relativized quantifiers
is incorporated.  In essence, Smullyan's techniques for non-clausal tableau
are simulated with the more machine-oriented clausal tableaux and variable
handling through Skolemization. Correspondence to Smullyan's tableaux is
achieved by a structure preserving normal form and certain structural
requirements on the clausal tableaux. These are already met by hyper tableaux.
In the general case they can be ensured with restructuring transformations,
applied in a postprocessing step to closed clausal tableaux obtained from
provers.

The contributions of this work can be summarized as follows:
\begin{enumerate}
  
\item Foundations to perform Craig interpolation and related forms of
  interpolation for first-order logic with clausal tableau methods are
  developed. They provide:
\begin{enumerate}
\item A basis for implementing interpolation with efficient machine-oriented
  theorem provers for first-order logic that can be understood as constructing
  clausal tableaux. With methods and systems of two main families,
  goal-oriented ``top-down'' and forward-chaining ``bottom-up'', there is a
  wide range of potential applications.

\item A relatively simple framework to prove constructively the existence of
  interpolants with further syntactic properties, beyond the restriction on
  symbols required by Craig interpolants.  The involved constructions are,
  moreover, suited for realization by practical systems. In the paper such
  constructions are shown for Craig-Lyndon interpolation, interpolation from a
  Horn formula, and, with access interpolation, for a form of quantifier
  relativization.
\end{enumerate}
  
\item Interpolant lifting, which is in principle known from resolution-based
  approaches since the mid-nineties, is placed at a new and apparently more
  natural position within the overall task of first-order interpolation, where
  it is independent of a particular calculus. A detailed correctness proof
  that resides on a small technical basis is presented.

\item For access interpolation, a key technique for query reformulation, the
  first practically implementable methods are described.  
  
\item Conversions between closed clausal tableaux are developed that transform
  arbitrarily structured inputs to clausal tableaux with a restricted
  structure that in essence simulates non-clausal tableaux or tableaux that
  are constrained in specific ways, as, for example, computed by hyper tableau
  methods. They justify the application of practical methods that construct
  unrestricted clausal tableaux, such as, for example, goal-oriented
  ``top-down'' first-order theorem proving methods, to tasks like access
  interpolation which require a certain tableau structuring.

\end{enumerate}

Proofs are given for all theorem, lemma and proposition statements that do not
pertain to the considered logics in general. Proofs which involve intricacies
or subtleties are given in detail.

The rest of this paper is structured in two main parts:
Sections~\ref{sec-notation}--\ref{sec-cli-related} are concerned with
Craig-Lyndon interpolation and Sections~\ref{sec-ai}--\ref{sec-ai-conclusion}
with access interpolation.  After notation and basic terminology have been
specified in Sect.~\ref{sec-notation}, precise accounts of \name{clausal
  tableau} and related notions are given in Sect.~\ref{sec-tableaux}.  In
Sect.~\ref{sec-ipol-basic} the extraction of ground interpolants from closed
clausal ground tableaux is specified and proven correct.  The generalization
of this method to first-order formulas, which involves preprocessing by
Skolemization and postprocessing of ground interpolants by lifting is
specified and proven correct in Sect.~\ref{sec-ipol-lift}, and in
Sect.~\ref{sec-lift-related} compared with related approaches from the
literature.  In Sect.~\ref{sec-tab-constraints} constraints on clausal
tableaux are specified that characterize positive hyper tableaux, which are
typically computed by ``bottom-up'' methods. On this basis a construction of
Craig-Lyndon interpolants that inherit the Horn property from the first
interpolation input is shown.  Section~\ref{sec-cli-related} concludes the
part on Craig-Lyndon interpolation with a discussion of possible refinements
of our method and issues for further research. We then turn to access
interpolation.  In Sect.~\ref{sec-ai} a brief overview on our approach is
given, underlying notions from the literature are recapitulated, and a
structure-preserving clausal normalization of the relativized input formulas
is described.  The extraction of an access interpolant from a closed clausal
ground tableau that is for such clauses and meets certain structural
constraints is then specified and proven correct in
Sect.~\ref{sec-access-extract}.  These structural constraints are met by
positive hyper tableaux. For the general case they can be ensured with tableau
transformations, specified in Sect.~\ref{sec-access-convert} and illustrated
with examples in Sect.~\ref{sec-access-convert-examples}.
Section~\ref{sec-ai-conclusion} concludes the part on access interpolation
with a discussion of possible refinements of our method, issues for further
research and related work.  Section~\ref{sec-conclusion} concludes the paper
with an abstract view on its main contributions.

A work-in-progress poster of this research at an earlier stage was
presented at the\linebreak \name{TABLEAUX~2017} conference.

\section{Notation and Basic Terminology}
\label{sec-notation}

We basically consider first-order logic without equality.\footnote{This does
  not preclude to represent equality as a predicate with axioms that express
  reflexivity, symmetry, transitivity and substitutivity.}  Atoms are of the
form $p(t_1,\ldots,t_n)$, where $p$ is a \defname{predicate symbol} (briefly
\defname{predicate}) with associated arity $n \geq 0$ and $t_1,\ldots,t_n$ are
terms formed from \defname{function symbols} (briefly \name{functions}) with
associated arity~$\geq 0$ and \defname{individual variables} (briefly
\name{variables}).  Function symbols with arity $0$ are also called
\defname{individual constants} (briefly \name{constants}).

Unless specially noted, a \defname{formula} is understood as a formula of
first-order logic without equality, constructed from atoms, constant operators
$\true$, $\false$, the unary operator $\lnot$, binary operators $\land, \lor$
and quantifiers $\forall, \exists$ with their usual meaning.  Further binary
operators~$\imp$, $\equi$ as well as $n$-ary versions of $\land$ and $\lor$
can be understood as meta-level shorthands.  Also quantification upon a set of
variables is used as shorthand for successive quantification upon each of its
elements.  The operators $\land$ and $\lor$ bind stronger than $\imp$ and
$\equi$. The scope of $\lnot$, the quantifiers, and the $n$-ary connectives is
the immediate subformula to the right.  Formulas in which no functions with
exception of constants occur are called \defname{relational}.  Formulas in
which no predicates with arity larger than zero and no quantifiers occur are
called \defname{propositional}.

A subformula occurrence has in a given formula \defname{positive (negative)
  polarity}, or is said to occur \defname{positively (negatively)} in the
formula, if it is in the scope of an even (odd) number of negations.  If $E$
is a term or a formula, then the set of variables that occur \defname{free}
in~$E$ is denoted by $\var{E}$, the set of functions occurring in~$E$ by
$\fun{E}$, and the set of constants occurring in~$E$ by $\const{E}$.  If $F$
is a formula, then the set of pairs of predicates occurring in~$F$ coupled
with an identifier of the respective polarity of the atom in which they occur
is denoted by $\pred{F}$, the set of pairs of atoms occurring in $F$ coupled
with an identifier of the respective polarity in which they occur as
$\lit{F}$, and the set of terms that occur as argument of a predicate (in
contrast to just as argument of a function) as $\parg{F}$.
The notation $\var{E}$, $\fun{E}$ and $\const{E}$ is also used with sets~$E$
of terms or formulas, where it stands for the union of values of the
respective function applied to each member of $E$.  A formula without free
variables is called a \defname{sentence}.  A term or quantifier-free formula
in which no free variable occurs is called \defname{ground}. A ground formula
is thus a special case of a sentence.  Symbols not present in the formulas and
other items under discussion are called \defname{fresh}.

A \defname{literal} is an atom or a negated atom.  If $A$ is an atom, then the
\name{complement} of $A$ is $\lnot A$ and the complement of $\lnot A$ is $A$.
The complement of a literal~$L$ is denoted by $\du{L}$.
A \defname{clause} is a (possibly empty) disjunction of literals.  A
\defname{clausal formula} is a (possibly empty) conjunction of clauses, called
the \defname{clauses in} the formula.

The notion of \name{substitution} used here follows
\cite{baader:snyder:unificationtheory}: A \defname{substitution} is a mapping
from variables to terms which is almost everywhere equal to identity.  If
$\sigma$ is a substitution, then the \defname{domain} of $\sigma$ is the set
of variables $\dom{\sigma} \eqdef \{x \mid x\sigma \neq x\}$, the
\defname{range} of $\sigma$ is $\rng{\sigma} \eqdef \bigcup_{x \in
  \dom{\sigma}} \{x\sigma\}$, and the \defname{restriction} of $\sigma$ to a
set~$\xxs$ of variables, denoted by $\sigma|_{\xxs}$, is the substitution
which is equal to the identity everywhere except over $\xxs \cap
\dom{\sigma}$, where it is equal to~$\sigma$.  The identity substitution is
denoted by $\emptysubst$.  A substitution can be represented as a function by
a set of bindings of the variables in its domain, e.g., $\{x_1 \mapsto t_1,
\ldots, x_n \mapsto t_n\}$.  The application of a substitution~$\sigma$ to a
term or a formula~$E$ is written as $E\sigma$, $E\sigma$ is called an
\defname{instance} of $E$ and $E$ is said to \defname{subsume} $E\sigma$.
Composition of substitutions is written as juxtaposition. Hence, if $\sigma$
and $\gamma$ are both substitutions, then $E\sigma\gamma$ stands for
$(E\sigma)\gamma$.

\label{sec-revsubst}
For injective substitutions we use the following additional notation: If
$\sigma$ is an injective substitution and $E$ is a term or a formula, then
$\revsubst{E}{\sigma}$ denotes $E$ with all occurrences of subterms~$s$ that
are in the range of $\sigma$ and are not a strict subterm of another subterm
in the range of $\sigma$ replaced by the variable that is mapped by~$\sigma$
to $s$.
As an example let $\sigma = \{x \mapsto f(a),\, y \mapsto g(f(a))\}$. Then
 \[\revsubst{p(h(f(a),g(f(a))))}{\sigma} = p(h(x,y)).\]

The \defname{principal functor} of a term that is not a variable is its
outermost function symbol.  If $S$ is a set of function symbols, then a term
with a principal functor in $S$ is also called an \defname{$S$-term}.

We write $F \entails G$ for \name{$F$ entails $G$}; $\valid F$ for \name{$F$
  is valid}; and $F \equiv G$ for \name{$F$ is equivalent to $G$}, that is,
\name{$F \entails G$ and $G \entails F$}.  On occasion we write a sequence of
statements with these operators where the right and left, respectively,
arguments of subsequent statements are identical in a chained way, such as,
for example, $F \entails G \entails H$ for \name{$F \entails G$ and $G
  \entails H$}.

\newpage
\section{Clausal First-Order Tableaux}
\label{sec-tableaux}

The following definition makes the variant of clausal tableaux that we use as
basis for interpolation precise. It is targeted at modeling tableau structures
produced by efficient fully automated first-order proving systems based on
different calculi.\hspace{-1em}
\begin{defn}[Clausal Tableau and Related Notions]

\sdlab{def-tab}
Let $F$ be a clausal formula.  A \defname{clausal tableau} (briefly
\name{tableau}) \defname{for} $F$ is a finite ordered tree whose nodes~$N$
with exception of the root are labeled with a literal, denoted by
$\nlit{N}$, such that the following condition is met: For each node~$N$ of the
tableau the disjunction of the labels of all its children in their
left-to-right order, denoted by $\nclause{N}$, is an instance of a clause
in~$F$. A value of $\nclause{N}$ for a node~$N$ in a tableau is called a
\defname{clause of} the tableau.

\smallskip

\sdlab{def-tab-closed} A node~$N$ of a tableau is called \defname{closed} if
and only if it has an ancestor~$N^\prime$ with $\nlit{N^\prime} =
\du{\nlit{N}}$.  With a closed node~$N$, a particular such ancestor $N^\prime$
is associated as \defname{target of}~$N$, written~$\ntgt{N}$.  A tableau is
called \defname{closed} if and only if all of its leaves are closed.

\smallskip

\sdlab{def-tab-ground} A tableau is called \defname{ground} if and only if for
all its nodes~$N$ it holds that $\nlit{N}$ is ground.
\end{defn}
The most immediate relationship of clausal tableaux to the semantics of
clausal formulas is that the universal closure of a clausal formula is
unsatisfiable if and only if there exists a closed clausal tableaux for the
clausal formula. Knowing that there are sound and complete calculi that
operate by constructing a closed clausal tableau for an unsatisfiable clausal
formula, and taking into account Herbrand's theorem we can state the following
proposition:%
\hspace{-10pt}
\begin{prop}[Unsatisfiability and Computation of Closed Clausal Tableaux]
\label{prop-tab-complete}
There is an effective method that computes from a clausal formula~$F$ a closed
clausal tableau for~$F$ if and only if $\forall x_1 \ldots \forall x_n\, F$,
where $\{x_1,\ldots,x_n\} = \var{F}$, is unsatisfiable.  Moreover, this also
holds if terms in the literal labels of tableau nodes are constrained to
ground terms formed from functions occurring in~$F$ and, in case there is no
constant occurring in $F$, an additional fresh constant.

\end{prop}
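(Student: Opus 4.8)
The plan is to unbundle the statement into three ingredients: soundness of closed clausal tableaux, their refutation completeness together with the existence of complete proof procedures, and the Herbrand refinement that restricts node labels to ground terms. I would prove the two directions of the equivalence separately, extract the ``effective method'' from the completeness side, and then obtain the ``moreover'' clause by invoking Herbrand's theorem.

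For the soundness direction (a closed tableau forces unsatisfiability), I would argue by contradiction. Suppose $T$ is a closed clausal tableau for $F$ and $\mathcal{I}$ is a model of $\forall x_1 \ldots \forall x_n\, F$. Since for every node~$N$ the clause $\nclause{N}$ is an instance of a clause of~$F$, the structure $\mathcal{I}$ satisfies $\nclause{N}$ under every variable assignment; in particular, at each node at least one child literal is true in~$\mathcal{I}$. Following such true children from the root produces a branch all of whose literals are simultaneously true in~$\mathcal{I}$. A branch whose literals are all true cannot contain a complementary pair $\nlit{N'} = \du{\nlit{N}}$, so its leaf is not closed, contradicting that $T$ is closed. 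Hence no model of the universal closure exists.

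For the completeness-and-effectiveness direction, I would appeal to the existence of sound and complete clausal tableau calculi (model elimination, the connection method, hyper tableaux), which are refutation-complete semi-decision procedures. Running such a calculus on~$F$ gives the required effective method: by completeness it halts with a closed clausal tableau whenever $\forall x_1 \ldots \forall x_n\, F$ is unsatisfiable, and by the soundness argument above any closed tableau it returns witnesses unsatisfiability. The success of the method therefore coincides exactly with unsatisfiability of the universal closure, which is what the ``if and only if'' asserts.

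The ``moreover'' clause would come from Herbrand's theorem. By it, $\forall x_1 \ldots \forall x_n\, F$ is unsatisfiable iff some finite set of ground instances of clauses of~$F$, with all terms drawn from the Herbrand universe~$H$ built from the functions and constants of~$F$ --- seeded by a fresh constant when $F$ contains no constant, so that $H$ is nonempty --- is propositionally unsatisfiable. From such a finite unsatisfiable set of ground clauses I would build a closed clausal ground tableau whose node labels lie in~$H$, using propositional refutation completeness over the finitely many ground atoms involved (a semantic-tree or splitting argument suffices); conversely, a closed ground tableau over~$H$ is a special case of a closed clausal tableau, so the soundness argument already applies. The step I expect to require the most care is the degenerate case where $F$ has no constant: there the Herbrand universe would be empty without intervention, so the fresh constant must be introduced to keep $H$ and hence the set of ground instances nonempty, and one must check that this does not alter (un)satisfiability of the universal closure.
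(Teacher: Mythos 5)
Your proposal is correct and follows essentially the same route the paper takes: the paper gives no explicit proof but justifies the proposition in the preceding paragraph by appeal to the existence of sound and complete clausal-tableau calculi together with Herbrand's theorem, which are exactly the ingredients you unpack (with a direct soundness argument and the fresh-constant seeding of the Herbrand universe filled in). No gaps.
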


Our objective is here interpolant construction on the basis of clausal
tableaux produced by fully automated systems. This has effect on some aspects
of our formal notion of \name{clausal tableau}: All occurrences of variables
in the literal labels of a tableau according to Definition~\ref{def-tab} are
free and the scope of these variables spans all literal labels of the whole
tableau.  In more technical terms, this means that the tableaux are \name{free
  variable tableaux} (see \cite[p.~158ff]{handbook:tableaux:letz}) with
\name{rigid} variables (see \cite[p.~114]{handbook:ar:haehnle}).  Tableaux
with only clause-local variables can, however, of course be expressed by just
using different variables in each tableau clause. Thus, although our notion of
tableaux involves rigid variables, this does not in any way imply that
interpolant computation based on it applies only to tableaux whose
\emph{construction} by a prover had involved rigid variables.

Another aspect concerns the definition of \name{closed} for nodes and for
tableaux: A tableau is closed if all of its \emph{leaves} are closed, which
does, however, not exclude that also an inner node of a closed tableau might
be closed.  For the \emph{construction} of a closed tableau in theorem proving
it is pointless to attach children to an already closed node. In our context,
however, operations such as instantiating literal labels and certain tableau
transformations might introduce inner closed nodes.  To let the results of
such operations be tableaux again, we thus have to permit closed inner nodes.
A tableau simplification to eliminate these is shown in
Sect.~\ref{sec-access-convert}.

\section{Ground Interpolant Extraction from Clausal Tableaux}
\label{sec-ipol-basic}

As shown by Craig \cite{craig:linear}, for first-order sentences $\F$ and $\G$
such that $\F \entails \G$, an ``intermediate'' sentence $H$ such that $F
\entails H \entails G$ can be constructed, whose predicates and functions are
occurring in both $F$ and $G$. That this also holds if in addition the
polarities of predicate occurrences in $H$ are constrained to polarities in
which they occur in both $F$ and $G$ is attributed to Lyndon \cite{lyndon},
such that formulas $\H$ are sometimes called \name{Lyndon interpolants} in
analogy to \name{Craig interpolants}.  We call them here \name{Craig-Lyndon
  interpolants}:
\begin{defn}[Craig-Lyndon Interpolant]
\label{def-cli}
Let $F, G$ be sentences such that $F \entails G$.  A \defname{Craig-Lyndon
  interpolant} of $\F$ and $\G$ is a sentence $\H$ such that
\begin{enumerate}
\item \label{def-cli-sem} $\F \entails \H \entails \G$.
\item \label{def-cli-pred} 
$\pred{H} \subseteq \pred{F} \cap \pred{G}$.
\item \label{def-cli-fun}
$\fun{H} \subseteq \fun{\F} \cap \fun{\G}$.
\end{enumerate}
\end{defn}
The notion of Craig-Lyndon interpolant is specified here for \emph{sentences}
in contrast to \emph{formulas}~$F$, $G$ and $H$. This is without loss of
generality because free variables in~$F$, $G$ and $H$ would, with respect to
interpolation, be handled exactly like constants.

Smullyan \cite{smullyan:book:68} specifies in his framework of non-clausal
tableaux an elegant technique to extract a Craig-Lyndon interpolant from a
tableau that represents a proof of $\F \entails \G$, which is also presented
in Fitting's book \cite{fitting:book}.  The handling of propositional
connectives in this method can be straightforwardly transferred to clausal
tableaux.  Quantifiers, however, have to be processed differently to match
their treatment in clausal tableaux by conversion to prenex form and
Skolemization.
The overall interpolant extraction from a closed clausal tableau then proceeds
in two stages, analogously as described for resolution-based methods in
\cite{huang:95,baaz:11,bonacina:15:on:ipol,kovacs:17}.  In the first stage a
``rough interpolant'' is constructed which needs postprocessing by replacing
terms with variables and prepending a quantifier prefix on these variables to
yield an actual interpolant.  This second stage will be specified in
Sect.~\ref{sec-ipol-lift} and discussed further in
Sect.~\ref{sec-lift-related}.  As we will see now, on the basis of clausal
tableaux the first stage can be specified and verified with proofs by a
straightforward adaption of Smullyan's method in an almost trivially simple
way.

Our interpolant construction is based on a variant of clausal tableaux where
nodes have an additional \name{side} label that is shared by siblings and
indicates whether the tableau clause is an instance of an input clause derived
from the formula of the left side or the formula on the right side of the
entailment underlying the interpolation:
\begin{defn}[\Sided Clausal Tableau and Related Notions]

\sdlab{def-coltab} Let $\FA, \FB$ be clausal formulas.  A \defname{\sided
  clausal tableau for} $\FA$ \defname{and} $\FB$ (or briefly \defname{tableau}
for the \defname{two} formulas) is a clausal tableau for $\FA \land \FB$ whose
nodes~$N$ with exception of the root are labeled additionally with a
\name{side} $\nside{N} \in \{\aaa, \bbb\}$, such that the following conditions
are met:
\begin{enumerate}
\item If $N$ and
$N^\prime$ are siblings, then $\nside{N} = \nside{N^\prime}$.
\item If $N^\prime$ is a child of $N$, then $\nclause{N}$ is an instance of a
 clause in $F_{\nside{N^\prime}}$.
\end{enumerate}
The \name{side} of a clause~$\nclause{N}$ in a tableau is the value of the
side label of the children of~$N$.

\medskip

\sdlab{def-branch}
For $S \in \{\aaa,\bbb\}$ and nodes $N$ of a \sided clausal tableau define
\[\nbranch{S}{N}\; \eqdef \bigwedge_{N^\prime \in B \text{ and }
\nside{N^\prime} = S}\hspace{-3em} \nlit{N^\prime},\] where $B$ is the union
of $\{N\}$ with the set of the ancestors of $N$.
\end{defn}
The following definition specifies an adaption of the handling of
propositional connectives in \cite[Chap.~XV]{smullyan:book:68} and
\cite[Chap.~8.12]{fitting:book} to construct interpolants from non-clausal
tableaux. Differently from these works, the specification is here not in terms
of tableau manipulation rules that deconstruct the tableau bottom-up, but
inductively, as a function that maps a node to a formula.
\begin{defn}[Interpolant Extraction from a Clausal Ground Tableau]
  \label{def-ipol}
Let $N$ be a node of a closed \sided clausal ground tableau.  The value of
\defname{$\nipol{N}$} is a ground formula, defined inductively as follows:
\begin{enumerate}[label={\roman*}.,leftmargin=2em]
\item If $N$ is a leaf, then the value of $\nipol{N}$ is determined by the
  values of $\nside{N}$ and $\nside{\ntgt{N}}$ as specified in the
  following table:
\end{enumerate}
\[
\begin{array}{c@{\hspace{1em}}c@{\hspace{1em}}c}
\nside{N} &  \nside{\ntgt{N}} & \nipol{N}\\\midrule
\aaa & \aaa & \false\\[0.5ex]
\aaa & \bbb & \nlit{N}\\[0.5ex]
\bbb & \aaa & \du{\nlit{N}}\\[0.5ex]
\bbb & \bbb & \true
\end{array}
\]

\begin{enumerate}[label={\roman*}.,leftmargin=2em]
\setcounter{enumi}{1}
\item \label{item-children} If $N$ is an inner node with children $N_1,
  \ldots, N_n$ where $n \geq 1$, then the value of $\nipol{N}$ is composed
  from the values of $\f{ipol}$ for the children, disjunctively or
  conjunctively, depending on the side label of the children (which is the
  same for all of them), as specified in the following table:
\end{enumerate}
\[
\begin{array}{c@{\hspace{1em}}c}
\nside{N_1} & \nipol{N}\\\midrule
\aaa & \bigvee_{i=1}^n \nipol{N_i}\\[1ex]
\bbb & \bigwedge_{i=1}^n \nipol{N_i}
\end{array}
\]
\end{defn}

The following lemma associates semantic and syntactic properties with the
formula obtained as value of applying $\f{ipol}$ to the root of a closed
ground tableau. These properties imply the conditions required from a
Craig-Lyndon interpolant (Definition~\ref{def-cli}).
\begin{lem}[Correctness of Interpolant Extraction from 
Clausal Ground Tableaux]
\label{lem-ground-ipol-correct}
Let $\FA, \FB$ be clausal ground formulas and let $T$ be a closed \sided
clausal ground tableau for $\FA$ and $\FB$. If $N$ is the root of $T$,
then 
\begin{enumerate}
\item $\FA \entails \violet{\nipol{N}} \entails \lnotFB$.
\item $\lit{\violet{\nipol{N}}} \subseteq
\lit{\FA} \cap \lit{\lnotFB}$.
\end{enumerate}
\end{lem}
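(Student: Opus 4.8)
The plan is to prove, by structural induction on the tableau $T$ working from the closed leaves up to the root, a statement \emph{local} to each node that is strictly stronger than the two claims and that specialises to them at the root. For an arbitrary node $N$ I would carry the invariant consisting of the two semantic entailments
\[
\FA \land \nbranch{\aaa}{N} \entails \nipol{N}
\qquad\text{and}\qquad
\FB \land \nbranch{\bbb}{N} \entails \lnot\nipol{N},
\]
together with the syntactic inclusion $\lit{\nipol{N}} \subseteq \lit{\FA} \cap \lit{\lnotFB}$. Since the root carries neither a literal nor a side, both branch conjunctions $\nbranch{\aaa}{N}$ and $\nbranch{\bbb}{N}$ are empty there and hence equivalent to $\true$; the two entailments then collapse to $\FA \entails \nipol{N}$ and $\FB \entails \lnot\nipol{N}$, i.e.\ to $\FA \entails \nipol{N} \entails \lnotFB$, which is the lemma's first assertion, while the inclusion is the second. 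So once the invariant is established at every node, reading it off at the root proves the lemma.

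For the base case I would treat a closed leaf $N$ by the four rows of the first table of Definition~\ref{def-ipol}, using throughout that closure means $\nlit{\ntgt{N}} = \du{\nlit{N}}$ and that, because $\ntgt{N}$ is an ancestor of $N$, both $\nlit{N}$ and $\nlit{\ntgt{N}}$ occur in the branch conjunction of their respective sides. In the rows $(\aaa,\aaa)$ and $(\bbb,\bbb)$ the two complementary literals land on the \emph{same} side, so that side's branch conjunction is contradictory; this discharges the one entailment whose right-hand side reduces to $\false$, while the other entailment, whose right-hand side reduces to $\true$, is trivial. In the mixed rows $(\aaa,\bbb)$ and $(\bbb,\aaa)$ the relevant literal $\nlit{N}$ or $\du{\nlit{N}}$ sits on exactly the branch needed, so the entailments hold directly. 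For the inclusion I would use that every non-root node's literal is a disjunct of a ground clause which, being ground, \emph{is} a clause of $F_{\nside{\cdot}}$, so that $\nlit{N}$ lies in $\lit{\FA}$ or in $\lit{\FB}$ according to its side; combined with the observation that negating a clausal formula merely flips all polarities, so that $\lit{\lnotFB}$ consists of the complements of the members of $\lit{\FB}$, each of the four rows places the single literal of $\nipol{N}$ (or none) into $\lit{\FA} \cap \lit{\lnotFB}$.

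For the inductive step at an inner node $N$ with children $N_1,\dots,N_n$ of common side $S$, I would first record how the branches change: passing from $N$ to $N_i$ appends $\nlit{N_i}$ to the side-$S$ conjunction and leaves the other side's conjunction unchanged. Take $S = \aaa$ (the case $S = \bbb$ is dual, swapping $\land$ with $\lor$ and $\false$ with $\true$). Then $\nipol{N} = \bigvee_{i=1}^n \nipol{N_i}$. The side-$\bbb$ entailment is obtained by conjoining the children's side-$\bbb$ entailments $\FB \land \nbranch{\bbb}{N} \entails \lnot\nipol{N_i}$ into $\FB \land \nbranch{\bbb}{N} \entails \bigwedge_i \lnot\nipol{N_i} = \lnot\nipol{N}$. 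The side-$\aaa$ entailment follows from a case split: since $\nclause{N} = \bigvee_i \nlit{N_i}$ is, being ground, literally a clause of $\FA$, we have $\FA \entails \bigvee_i \nlit{N_i}$, and each child gives $\FA \land \nbranch{\aaa}{N} \land \nlit{N_i} \entails \nipol{N_i}$; hence in any model of $\FA \land \nbranch{\aaa}{N}$ some disjunct $\nlit{N_i}$ holds and forces $\nipol{N_i}$, so $\bigvee_i \nipol{N_i}$ holds. The inclusion is immediate, since $\lit{\nipol{N}} = \bigcup_i \lit{\nipol{N_i}}$ and each summand lies in $\lit{\FA} \cap \lit{\lnotFB}$ by the induction hypothesis.

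The one place demanding care is the design of the invariant itself: the entailments must be stated relative to the \emph{side-specific} branch conjunctions, and one must track that only same-side literals are appended on descent while the connective in the $\nipol{\cdot}$ recursion switches between $\lor$ and $\land$ with the side. These two asymmetries are precisely what make the disjunctive part of the $S=\aaa$ step and the conjunctive part of the $S=\bbb$ step go through, and getting them aligned is the main obstacle. The ground hypothesis is used crucially and repeatedly, as it is what lets a tableau clause be identified with an actual clause of $\FA$ or $\FB$, both for the entailment $\FA \entails \bigvee_i \nlit{N_i}$ and for the literal memberships. Finally I would note that Definition~\ref{def-ipol} handles inner nodes by its second table regardless of whether they happen to be closed, and that the inductive step never appeals to closure of an inner node, so the admissibility of inner closed nodes causes no difficulty.
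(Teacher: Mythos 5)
Your proposal is correct and follows essentially the same route as the paper: an induction from the closed leaves to the root on a branch-relative invariant, the same four-way case split at leaves by the sides of $N$ and $\ntgt{N}$, and the same use of $\FA \entails \nclause{N}$ (resp.\ $\FB$) at inner nodes. The only cosmetic differences are that you state the second entailment in contrapositive form ($\FB \land \nbranch{\bbb}{N} \entails \lnot\nipol{N}$ rather than $\nipol{N} \entails \lnotFB \lor \lnot\nbranch{\bbb}{N}$) and carry the global inclusion $\lit{\nipol{N}} \subseteq \lit{\FA} \cap \lit{\lnotFB}$ directly instead of the paper's branch-relative version, which it converts at the end via the observation that all branch literals occur in clauses of the respective side's formula.
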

\begin{proof}
We show the following property of $\f{ipol}$ that invariantly holds for all
nodes of the tableau, including the root, which immediately implies the
proposition: For all nodes~$N$ of $T$ it holds that

\begin{enumerate}[label={(\alph*)}]
\item \label{item-lem-gi-sem}
 $\red{\FA \land \nbranchA{N}} \entails \violet{\nipol{N}} 
 \entails \blue{\lnotFB  \lor \lnot \nbranchB{N}}$.
\item \label{item-lem-gi-syn}
$\lit{\violet{\nipol{N}}} \subseteq
\lit{\red{\FA \land \nbranchA{N}}} \cap
\lit{\blue{\lnotFB  \lor \lnot \nbranchB{N}}}$.
\end{enumerate}

\noindent
This is proven by induction on the tableau structure, proceeding from leaves
upwards.  We prove the base case, where $N$ is a leaf, by showing
\ref{item-lem-gi-sem} and \ref{item-lem-gi-syn} for all possible values of
$\nside{N}$:

  \begin{itemize}
  \item Case $\nside{N} = \aaa$:
    \smallskip
    \begin{itemize}
    \item Case $\nside{\ntgt{N}} = \aaa$: Immediate since then
      $\nbranchA{N} \entails \false = \violet{\nipol{N}}$.
    \item Case $\nside{\ntgt{N}} = \bbb$: Then $\violet{\nipol{N}} = \nlit{N}$.
      Properties \ref{item-lem-gi-sem} and \ref{item-lem-gi-syn} follow
      because $\nlit{N}$ is a conjunct of $\nbranchA{N}$ and $\du{\nlit{N}}$
      is a conjunct of $\nbranchB{N}$.
    \end{itemize}
    \smallskip
  \item Case $\nside{N} = \bbb$:
    \smallskip
    \begin{itemize}
      \item Case $\nside{\ntgt{N}} = \aaa$: Then $\violet{\nipol{N}} =
        \du{\nlit{N}}$.  Properties \ref{item-lem-gi-sem} and
        \ref{item-lem-gi-syn} follow because $\du{\nlit{N}}$ is a conjunct of
        $\nbranchA{N}$ and $\nlit{N}$ is a conjunct of $\nbranchB{N}$.
      \item Case $\nside{\ntgt{N}} = \bbb$: 
        Immediate since then
        $\violet{\nipol{N}} = \true \entails \lnot \nbranchB{N}$.
    \end{itemize}
  \end{itemize}

\noindent
To show the induction step, assume that $N$ is an inner node with children
$N_1, \ldots, N_n$. Consider the case where the side of the children is
$\aaa$.  The induction step for the case where the side of the children is
$\bbb$ can be shown analogously.  By the induction hypothesis we can assume
that for all $i \in \{1,\ldots n\}$ it holds that
\[\red{\FA \land \nbranchA{N_i}} \entails \violet{\nipol{N_i}}
 \entails \blue{\lnotFB \lor \lnot \nbranchB{N_i}},\]
which, since $\nside{N_i} = \aaa$, is equivalent to
\[\red{\FA \land \nbranchA{N} \land \nlit{N_i}} \entails \violet{\nipol{N_i}}
 \entails \blue{\lnotFB  \lor \lnot \nbranchB{N}}.\]
Since $\nipol{N} = \bigvee_{i=1}^{n} \nipol{N_i}$ it follows that
\[\red{\FA \land \nbranchA{N} \land \bigvee_{i = 1}^{n}{\nlit{N_i}}} 
\entails \violet{\nipol{N}}
 \entails \blue{\lnotFB \lor \lnot \nbranchB{N}}.\] 
Because $\bigvee_{i = 1}^{n}{\nlit{N_i}} = \nclause{N}$ is an instance of a
clause in $\FA$ and thus entailed by $\FA$ the semantic requirement
\ref{item-lem-gi-sem} of the induction conclusion follows:
\[\red{\FA \land \nbranchA{N}} \entails \violet{\nipol{N}}
 \entails \blue{\lnotFB \lor \lnot \nbranchB{N}}.\]
The syntactic requirement \ref{item-lem-gi-syn} follows from the induction
assumption and because in general for all nodes~$N$ of a \sided clausal
ground tableau for clausal ground formulas $\FA$ and $\FB$ it holds that all
literals in $\nbranchA{N}$ occur in some clause of $\FA$ and all literals in
$\nbranchB{N}$ occur in some clause of $\FB$.  \qed
\end{proof}

\noindent
Lemma~\ref{lem-ground-ipol-correct} immediately yields a construction method
for Craig-Lyndon interpolants of propositional and, more general, ground
formulas, or, in other words, quantifier-free first-order formulas.  We call
the method \name{\CTIG}, suggesting \name{Clausal Tableau Interpolation}.  In
Sect.~\ref{sec-ipol-lift} below it will be generalized to first-order
sentences in full.
\begin{proc}[The \CTIG Method for Craig-Lyndon Interpolation on
    Ground Formulas]\hspace{-10pt}
  \label{proc-cti-ground}

  \algoinput Ground formulas $F$ and $G$ such that
  $F \entails G$.

  \algomethod Convert $F$ and $\lnot G$ to equivalent clausal ground formulas
  and compute a closed \sided clausal ground tableau for them.  Let $N$ be the
  root of the tableau and compute the value of $\nipol{N}$.
  
  \algooutput Return the value of $\nipol{N}$.  The output is a ground formula
  that is a Craig-Lyndon interpolant of the input formulas.
\end{proc}

\noindent
The procedure is correct: The existence of a closed \sided clausal tableau as
required follows from Proposition~\ref{prop-tab-complete}, that the result is ground
and is a Craig-Lyndon interpolant of $F$ and~$G$ follows from
Lemma~\ref{lem-ground-ipol-correct} and Definition~\ref{def-cli}.

\section{First-Order Interpolant Extraction from Clausal Tableaux}
\label{sec-ipol-lift}

Procedure~\ref{proc-cti-ground} provides a method to compute Craig-Lyndon
interpolants of ground formulas.  We now generalize it to first-order
sentences with arbitrary quantifications.
The starting point is a ground interpolant obtained from a closed clausal
ground tableaux according to Lemma~\ref{lem-ground-ipol-correct}.  The tableau
is now for two clausal formulas that have been obtained from first-order
sentences by Skolemization, conversion to clausal form and instantiation. By a
postprocessing lifting operation, the ground interpolant is converted to an
interpolant of the two original first-order input sentences.  Terms with
function symbols that do not occur in both of them are there replaced by
variables and a suitable quantifier prefix upon these variables is prepended.
The postprocessing is easy to implement, it effects at most a linear increase
of the formula size and its computational effort amounts to sorting the
replaced terms according to their size. Similar lifting techniques have been
shown for resolution-based methods in \cite{huang:95} and \cite[Lemma
  8.2.2]{baaz:11}. We discuss the relationship to these in
Sect.~\ref{sec-lift-related}.

Before we specify the first-order interpolation procedure and prove its
correctness we note that to capture the semantics of Skolemization and to
eliminate function symbols that occur only in one the two interpolation inputs
we use second-order quantification upon functions and predicates in
intermediate formulas, that is, formulas used in the procedure specification
and within the correctness proof.  In particular, we apply the following
properties:
\begin{prop}[Second-Order Skolemization]
\label{prop-second-order-skolemization}
Let $F$ be a formula. Assume that $x_1, \ldots, x_n, y$ are variables that do
not occur bound in $F$ and that $f$ is an $n$-ary function symbol that does
not occur at all in $F$. Then
\[\forall x_1 \ldots \forall x_n \exists y\, F\; \equiv\;
\exists f \forall x_1 \ldots \forall x_n\, F\{y \mapsto f(x_1,\ldots,x_n)\}.\]
\end{prop}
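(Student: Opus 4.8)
The statement to prove is Proposition~\ref{prop-second-order-skolemization} (Second-Order Skolemization), which asserts a logical equivalence between a first-order formula with a $\forall\exists$ quantifier block and a second-order formula quantifying existentially over a function symbol.

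Let me think about this. We have a formula $F$, variables $x_1, \ldots, x_n, y$ not bound in $F$, and an $n$-ary function symbol $f$ not occurring in $F$. We want:
$$\forall x_1 \ldots \forall x_n \exists y\, F \equiv \exists f \forall x_1 \ldots \forall x_n\, F\{y \mapsto f(x_1,\ldots,x_n)\}.$$

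This is the standard semantic justification for Skolemization, phrased with a second-order existential quantifier over $f$ making it an equivalence rather than just an equisatisfiability.

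The proof is semantic. We work in a structure $\mathcal{M}$ with domain $D$, and a variable assignment (for the free variables of $F$ other than $x_1, \ldots, x_n, y$). We need to show that $\mathcal{M}$ satisfies the left side iff it satisfies the right side.

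**The two directions.**

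Right-to-left direction ($\Leftarrow$): Suppose $\mathcal{M} \models \exists f\, \forall x_1 \ldots \forall x_n\, F\{y \mapsto f(x_1,\ldots,x_n)\}$. Then there is some interpretation $\phi: D^n \to D$ for $f$ such that for all $d_1, \ldots, d_n \in D$, we have $F\{y \mapsto f(\mathbf{x})\}$ holds under the assignment $x_i \mapsto d_i$. For each tuple $\mathbf{d} = (d_1, \ldots, d_n)$, set $e = \phi(\mathbf{d})$; since $f$ does not occur in $F$, the only role of $f(\mathbf{x})$ is to supply the value $e$ for $y$, so $F$ holds with $x_i \mapsto d_i$ and $y \mapsto e$. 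This witnesses $\exists y\, F$ for each $\mathbf{d}$, giving $\mathcal{M} \models \forall \mathbf{x}\, \exists y\, F$.

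Left-to-right direction ($\Rightarrow$): Suppose $\mathcal{M} \models \forall \mathbf{x}\, \exists y\, F$. Then for each $\mathbf{d} \in D^n$ there exists some $e \in D$ with $F$ holding under $x_i \mapsto d_i$, $y \mapsto e$. Define $\phi: D^n \to D$ by choosing, for each $\mathbf{d}$, such a witness $e$; interpret $f$ by $\phi$. Then substituting $f(\mathbf{x})$ for $y$ recovers exactly this witness, so $F\{y \mapsto f(\mathbf{x})\}$ holds for all $\mathbf{d}$, establishing $\mathcal{M} \models \exists f\, \forall \mathbf{x}\, F\{y \mapsto f(\mathbf{x})\}$. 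Here I would note that if $D$ is nonempty the choice function $\phi$ exists, and structures have nonempty domains by convention; the choice of witnesses invokes the axiom of choice at the meta-level.

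**Key obstacle and subtleties.**
The plan is a routine semantic argument, so the ``hard part'' is really just bookkeeping of side conditions. The two conditions on $f$ and the variables are exactly what make the substitution clean: because $x_1, \ldots, x_n, y$ are not bound in $F$, the substitution $\{y \mapsto f(x_1,\ldots,x_n)\}$ captures no variables and behaves correctly under the assignment; and because $f$ does not occur in $F$, reinterpreting $f$ freely does not disturb the truth value of $F$ itself, so $f(x_1,\ldots,x_n)$ acts purely as a name for the chosen witness. I would make explicit the substitution lemma connecting the value of $F\{y \mapsto f(\mathbf{x})\}$ under an assignment to the value of $F$ under the modified assignment $y \mapsto \phi(d_1,\ldots,d_n)$; this is the one technical point worth stating, and it is where the hypotheses are used. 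The only genuinely nontrivial ingredient is the appeal to choice in the $\Rightarrow$ direction to assemble the pointwise witnesses into a single function interpreting $f$.
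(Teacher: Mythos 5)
Your proof is correct: this is the standard semantic argument for Skolemization, with the two side conditions (no capture because $x_1,\ldots,x_n,y$ are unbound in $F$; reinterpretability because $f$ does not occur in $F$) and the appeal to choice placed exactly where they belong. The paper itself states this proposition without proof, treating it as a general fact about the underlying logic, so there is no authorial argument to compare against; yours is the expected one.
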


\begin{prop}[Inessential Quantifications in Entailments]
  \label{prop-qu-entailment}
  Let $F,G$ be formulas
and let $\xxs, \yys$ be sets of predicate and function symbols such that $\xxs
\cap (\pred{G} \cup \fun{G}) = \yys \cap (\pred{F} \cup \fun{F}) = \emptyset$.
Then 
\[\exists \xxs\,F \entails \forall \yys\, G\;
\text{ if and only if }\; F \entails G.\]
\end{prop}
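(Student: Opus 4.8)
The plan is to argue semantically, treating $\exists \xxs$ and $\forall \yys$ as second-order quantifiers ranging over the possible interpretations of the predicate and function symbols collected in $\xxs$ and in $\yys$, over a fixed domain and with all other symbols held fixed. The single tool I would rely on is the coincidence (locality) lemma: the truth of a formula in a structure depends only on the domain and on the interpretation of the symbols that actually occur in it. The two disjointness hypotheses are present precisely so that I may invoke this lemma across reinterpretations of $\xxs$ and of $\yys$.

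The direction from left to right is essentially trivial. Assuming $\exists \xxs\,F \entails \forall \yys\, G$, I take any structure $\mathcal{M}$ with $\mathcal{M} \models F$. Choosing the identity reinterpretation of $\xxs$ as witness shows $\mathcal{M} \models \exists \xxs\,F$, hence $\mathcal{M} \models \forall \yys\, G$; instantiating the outer $\forall \yys$ again by the identity reinterpretation of $\yys$ then yields $\mathcal{M} \models G$. Thus $F \entails G$.

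For the converse I assume $F \entails G$ and take a structure $\mathcal{M} \models \exists \xxs\,F$; the goal is $\mathcal{M} \models \forall \yys\, G$. Fix an arbitrary reinterpretation $\mathcal{M}_2$ of the symbols in $\yys$ that agrees with $\mathcal{M}$ on all other symbols; it suffices to show $\mathcal{M}_2 \models G$. From $\mathcal{M} \models \exists \xxs\,F$ I obtain a witnessing structure $\mathcal{M}_1$ that agrees with $\mathcal{M}$ except on $\xxs$ and satisfies $F$. The key construction is a ``patched'' structure $\mathcal{M}_3$ over the same domain that copies the $\xxs$-interpretation from $\mathcal{M}_1$, the $\yys$-interpretation from $\mathcal{M}_2$, and agrees with $\mathcal{M}$ on every remaining symbol. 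Using $\yys \cap (\pred{F} \cup \fun{F}) = \emptyset$, the structures $\mathcal{M}_3$ and $\mathcal{M}_1$ agree on every symbol occurring in $F$, so coincidence gives $\mathcal{M}_3 \models F$; then $F \entails G$ gives $\mathcal{M}_3 \models G$; and using $\xxs \cap (\pred{G} \cup \fun{G}) = \emptyset$, the structures $\mathcal{M}_3$ and $\mathcal{M}_2$ agree on every symbol occurring in $G$, so coincidence gives $\mathcal{M}_2 \models G$, as required.

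The only real subtlety I would flag is that $\xxs$ and $\yys$ are not assumed disjoint, so the patched structure $\mathcal{M}_3$ might appear to need two different values on a symbol in $\xxs \cap \yys$. But any such symbol lies in neither $\pred{F} \cup \fun{F}$ nor $\pred{G} \cup \fun{G}$ by the two hypotheses, hence occurs in neither $F$ nor $G$; its interpretation in $\mathcal{M}_3$ is therefore irrelevant to both coincidence steps and may be fixed arbitrarily. Beyond this bookkeeping the argument is routine, so I would keep the write-up at the level of the three reinterpretation-and-coincidence steps sketched above rather than spelling out the coincidence lemma itself.
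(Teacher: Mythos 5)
The paper offers no proof of this proposition at all—it falls under the blanket convention announced in the introduction that statements which ``pertain to the considered logics in general'' are left unproven—so there is no official argument to measure yours against. Your reinterpretation-and-coincidence argument is the standard one, and it is correct under the reading that symbols in $\yys$ do not occur in $F$ and symbols in $\xxs$ do not occur in $G$: the left-to-right direction indeed needs no hypotheses, the patched structure $\mathcal{M}_3$ is the right device, and your handling of $\xxs \cap \yys$ is exactly the necessary bookkeeping.

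There is, however, one respect in which your proof does not reach the statement as the author intends it. In Sect.~2 the paper defines $\pred{F}$ as a set of \emph{pairs} of a predicate and a polarity, and the remark immediately following the proposition insists that the side conditions are polarity-sensitive: $\exists p\, F \entails \forall p\, G$ iff $F \entails G$ is claimed already when $p$ occurs in $F$ only positively and in $G$ only negatively, so that $p$ may occur in \emph{both} formulas. The coincidence lemma cannot deliver this case, since $\mathcal{M}_3$ and $\mathcal{M}_2$ then genuinely disagree on a symbol occurring in $G$. To cover it you must patch monotonically rather than by copying: interpret such a $p$ in $\mathcal{M}_3$ as the union $p^{\mathcal{M}_1} \cup p^{\mathcal{M}_2}$, use that purely positive occurrence makes $F$ monotone in $p$ (so $\mathcal{M}_1 \models F$ gives $\mathcal{M}_3 \models F$) and purely negative occurrence makes $G$ antitone in $p$ (so $\mathcal{M}_3 \models G$ gives $\mathcal{M}_2 \models G$). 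That said, every place the paper actually invokes the proposition—in the proof of Theorem~1 it is applied only to the sets $\ffs$ and $\ggs$ of \emph{function} symbols, each absent from the respective other formula—is covered by your version, so the gap concerns only the extra generality advertised in the remark, not the correctness of anything downstream.
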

Proposition~\ref{prop-qu-entailment} includes the special case of
quantification upon nullary functions, that is, constants, which is actually
\emph{first-order} quantification upon them in the role of variables.  On the
right side of the equivalence stated by the proposition, where they occur
free, they can be viewed as constants or as free variables.  Notice that
$\pred{G}$ and $\pred{F}$ in the preconditions take polarity into account.
That is, if a predicate $p$ occurs in $F$ only with, say, positive polarity
and in $G$ only with negative polarity, then, by
Proposition~\ref{prop-qu-entailment} it holds that $\exists p\, F \entails \forall
p\, G$ holds if and only if $F \entails G$, although $p$ occurs in $F$ as well
as in $G$.

We are now ready to specify the \CTIF method in full, which generalizes
Procedure~\ref{proc-cti-ground} by allowing first-order sentences with
arbitrary quantifications as inputs:
\begin{proc}[The \CTIF Method for Craig-Lyndon Interpolation]
\label{proc-cti-fol}

 \algoinput First-order sentences $F$ and $G$ such that $F \entails G$.

 \algomethod Clausify $F$ and $\lnot G$ to obtain equivalent sentences $\exists
 \ffs_{\ic} \forall \uus_{\ic}\, F_{\ic}$ and $\exists \ggs_{\ic} \forall
 \vvs_{\ic}\, \NG_{\ic}$, respectively, where $\ffs_{\ic}$ and $\ggs_{\ic}$
 are the introduced Skolem functions and $F_{\ic}$ and $\NG_{\ic}$ are clausal
 formulas whose variables are $\uus_{\ic}$ and $\vvs_{\ic}$, respectively.
 Assume w.l.o.g. that $\ffs_{\ic}$ and $\ggs_{\ic}$ are disjoint.  Let $k$ be
 a fresh constant.  Construct a closed \sided clausal ground tableau for
 $F_{\ic}$ and $\NG_{\ic}$ in which all literal labels are instantiated with
 terms formed from $k$ and functions that occur in $F_{\ic}$ or in
 $\NG_{\ic}$. Let~$F_{\ig}$ be the conjunction of the clauses of the tableau
 with side~$\aaa$ and let~$\NG_{\ig}$ be the conjunction of the clauses of the
 tableau with side~$\bbb$.  Let~$H_{\ig}$ be $\nipol{N}$, where $N$ is the
 root of the tableau.  Define:
\[
\begin{array}{r@{\hspace{0.5em}}c@{\hspace{0.5em}}l}
  \ffs & \eqdef & \fun{F_{\ic}} \setminus \fun{\NG_{\ic}}.\\
  \ggs & \eqdef & (\fun{\NG_{\ic}} \setminus \fun{F_{\ic}}) \cup \{k\}.
\end{array}
\]
(Alternatively, it is also possible to place $k$ into $\ffs$ instead
of~$\ggs$. Further possibilities are discussed in
Sect.~\ref{sec-cli-related-choices} below.)
Let $\xxsa$ and $\yysa$ be fresh sequences of variables and let $\stt$ be an
injective substitution with domain $\xxsa \cup \yysa$ such that
\[
\begin{array}{r@{\hspace{0.5em}}c@{\hspace{0.5em}}l}
\rng{\stif} & = & \{t \mid t \text{ is a } \ggs\sterm \text{ occurring in }
H_{\ig} \text{ in a position other than}\\
&& \hphantom{t \mid t} \text{as strict
  subterm of another } \ffs\sterm \text{ or } \ggs\sterm\}.\\
\rng{\stig} & = & \{t \mid t \text{ is an } \ffs\sterm \text{ occurring in }
H_{\ig} \text{ in a position other than}\\
&& \hphantom{t \mid t} \text{as strict
  subterm of another } \ffs\sterm \text{ or } \ggs\sterm\}.
\end{array}
\]
Construct $H_{\iq}$ as
\[H_{\iq}\; \eqdef\; \revsubst{H_{\ig}}{\stt}.\]
Construct the quantifier prefix $Q_1 z_1 \ldots Q_n z_n$ as follows: Let
$\{z_1,\ldots,z_n\}$ be the members of $\xxsa \cup \yysa$ that occur in
$H_{\iq}$ ordered such that for $i,j \in \{1,\ldots,n\}$ it holds that if
$z_i\stt$ is a strict subterm of $z_j\stt$, then $i < j$ and, for $i \in
\{1,\ldots,n\}$, let $Q_i \eqdef \forall$ if $z_i \in \xxsa$ and let $Q_i
\eqdef \exists$ if $z_i \in \yysa$.

\algooutput Return
\[Q_1 z_1
 \ldots Q_n z_n\, H_{\iq}.\] The output is a Craig-Lyndon interpolant of the
 input sentences.
\end{proc}

\medskip

Procedure~\ref{proc-cti-fol} indeed generalizes
Procedure~\ref{proc-cti-ground}: For ground inputs both procedures proceed
identically.  Correctness of the procedure is stated with the following
theorem, which will be proven in detail.  The proof is followed by
Example~\ref{examp-thm-lifting}, which illustrates items mentioned in the
proof for a pair of concrete input sentences.
\begin{thm}[Correctness of the \CTIF Method]
  \label{thm-lifting}
  If $F$ and $G$ are first-order sentences such that $F \entails G$, then
  Procedure~\ref{proc-cti-fol} applied to $F$ and $G$ outputs a Craig-Lyndon
  interpolant of $F$ and $G$.
\end{thm}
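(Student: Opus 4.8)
The plan is to reduce everything to the ground case already settled by Lemma~\ref{lem-ground-ipol-correct}, and then treat the vocabulary restrictions and the semantic entailments of Definition~\ref{def-cli} separately, the latter being where the lifting construction does its work. First I would check that the construction is well defined: since $F \entails G$, the sentence $F \land \lnot G$ is unsatisfiable, so by the Skolemization equivalence (Proposition~\ref{prop-second-order-skolemization}) the matrix $\forall \uus_{\ic} \forall \vvs_{\ic}\,(F_{\ic} \land \NG_{\ic})$ is unsatisfiable, and Proposition~\ref{prop-tab-complete} supplies a closed \sided clausal ground tableau over terms built from $k$ and the functions of $F_{\ic}, \NG_{\ic}$. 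Applying Lemma~\ref{lem-ground-ipol-correct} with $\FA = F_{\ig}$ and $\FB = \NG_{\ig}$ then gives both $F_{\ig} \entails H_{\ig} \entails \lnot \NG_{\ig}$ and the literal inclusion $\lit{H_{\ig}} \subseteq \lit{F_{\ig}} \cap \lit{\lnot \NG_{\ig}}$, which are the two inputs everything else builds on.

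For the vocabulary conditions (items~\ref{def-cli-pred} and~\ref{def-cli-fun}) I would track symbols through the construction. Replacing maximal $\ffs$- and $\ggs$-terms by variables and prepending quantifiers neither alters the predicate occurrences with their polarities nor introduces functions, so it suffices to bound $\pred{H_{\ig}}$ and $\fun{H_{\iq}}$. The literal inclusion yields $\pred{H_{\ig}} \subseteq \pred{F_{\ig}} \cap \pred{\lnot \NG_{\ig}} \subseteq \pred{F} \cap \pred{G}$, using that clausification and Skolemization preserve predicate--polarity pairs and that passing through $\lnot G$ and back to $\lnot \NG_{\ig}$ flips polarity twice. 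For functions, every symbol surviving the inverse substitution lies outside $\ffs \cup \ggs$; combined with $\fun{H_{\ig}} \subseteq \fun{F_{\ig}} \cap \fun{\NG_{\ig}}$ and the freshness and disjointness of the Skolem sets, which give $\fun{F_{\ic}} \cap \fun{\NG_{\ic}} = \fun{F} \cap \fun{G}$ and exclude $k$, this forces $\fun{H_{\iq}} \subseteq \fun{F} \cap \fun{G}$. Since the prefix binds exactly the fresh variables occurring in $H_{\iq}$, the output is a sentence.

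The crux is the semantic condition $F \entails Q_1 z_1 \ldots Q_n z_n\, H_{\iq} \entails G$, which I would establish by a symmetric pair of lifting arguments. For the left inclusion, universal instantiation gives $\forall \uus_{\ic}\, F_{\ic} \entails F_{\ig} \entails H_{\ig} = H_{\iq}\stt$. I would then prove the key claim that, because the $\ggs$-functions do not occur in $\forall \uus_{\ic}\, F_{\ic}$, a ground entailment of the form $\Phi \entails \psi\stt$ lifts to $\Phi \entails Q_1 z_1 \ldots Q_n z_n\, \psi$. The natural device is a model-theoretic (Skolem-game) argument, equivalently a careful induction over the prefix: given a model of $\Phi$, the universally chosen values for the $\xxsa$-variables (maximal $\ggs$-terms) are realized by reinterpreting their outermost, for $\Phi$ fresh, $\ggs$-symbols, so that the ground entailment still applies in the modified model; and the $\yysa$-variables (maximal $\ffs$-terms) are witnessed by the actual interpretations of their outer $\ffs$-symbols. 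Finally, the $F$-Skolem functions $\ffs_{\ic} \subseteq \ffs$ have been lifted out of $H_{\iq}$, so Proposition~\ref{prop-qu-entailment} reintroduces $\exists \ffs_{\ic}$ and yields $F = \exists \ffs_{\ic} \forall \uus_{\ic}\, F_{\ic} \entails Q_1 z_1 \ldots Q_n z_n\, H_{\iq}$. The right inclusion is the mirror image: from $\NG_{\ig} \entails \lnot H_{\ig} = (\lnot H_{\iq})\stt$ one lifts with the roles of $\ffs/\ggs$ and of $\forall/\exists$ exchanged, negating the prefix flips each quantifier, and Proposition~\ref{prop-qu-entailment} reintroduces $\exists \ggs_{\ic}$, giving $\lnot G \entails \lnot Q_1 z_1 \ldots Q_n z_n\, H_{\iq}$, i.e.\ the output entails $G$.

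The main obstacle I anticipate is exactly this lifting claim: making rigorous that the particular prefix --- existential for $\ffs$-terms, universal for $\ggs$-terms, ordered so that a subterm is always bound further out --- soundly generalizes the ground interpolant. The delicate point is the interleaving of $\exists$ and $\forall$ that is \emph{forced} by lifted terms nested inside one another, which arises precisely when a term occurs both maximally and as a strict subterm of another lifted term elsewhere, producing a hidden Skolem-style dependency $z_j\stt = \theta(\ldots z_i\stt \ldots)$. The reinterpretation step for a universal $\ggs$-variable, and the choice of witness for an existential $\ffs$-variable, are only valid once the variables for all of its lifted subterms are already bound; verifying that the subterm ordering guarantees this invariant uniformly, and that it does so simultaneously in both entailment directions, is the technically demanding part of the proof.
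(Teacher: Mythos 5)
Your setup (unsatisfiability of $F_{\ic}\land \NG_{\ic}$, existence of the tableau via Proposition~\ref{prop-tab-complete}, Lemma~\ref{lem-ground-ipol-correct} for $F_{\ig}\entails H_{\ig}\entails\lnot\NG_{\ig}$) and your treatment of the vocabulary conditions match the paper and are fine. The gap is in the key lifting claim. You propose to lift the entailment $\forall\uus_{\ic}\,F_{\ic}\entails H_{\iq}\stt$ directly, by a model argument in which the universally chosen values for the $\xxsa$-variables are ``realized by reinterpreting their outermost, for $\Phi$ fresh, $\ggs$-symbols.'' This step fails as stated: two distinct maximal $\ggs$-terms may share the same outermost symbol, e.g.\ $x_1\stt=\ggk(\ffk_1(\f{c}))$ and $x_2\stt=\ggk(\ffk_2(\f{c}))$ with $\ffk_1,\ffk_2\in\ffs$ occurring in $\Phi$. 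In a model where $\ffk_1(\f{c})$ and $\ffk_2(\f{c})$ denote the same element, no reinterpretation of $\ggk$ can give $x_1$ and $x_2$ independent values, so the ground entailment cannot be instantiated to arbitrary pairs in that model. The claim is still \emph{true} in first-order logic without equality, but rescuing your argument requires first inflating the model so that syntactically distinct ground terms denote distinct elements --- a nontrivial lemma that is exactly what breaks down in the presence of equality (cf.\ the error in Huang's lifting theorem discussed in Sect.~\ref{sec-lift-related}). Your closing paragraph locates the difficulty in the $\exists/\forall$ interleaving forced by nested terms, but this collapsing-arguments problem is a separate and equally essential obstruction that your sketch does not address.

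The paper avoids the need for any such model surgery by not lifting $H_{\iq}$ directly. Instead it lifts $F_{\iq}$ and $\NG_{\iq}$ under a common full prefix $\Rs$ (with $\Rs H_{\iq}\equiv\Qs H_{\iq}$ sandwiched between them), exploiting that $F_{\ig}$ --- unlike $H_{\ig}$ --- is a conjunction of instances of input clauses. Concretely, the re-grounding substitution $\stsk$ maps each existential variable back to its witness term with the universal variables kept abstract inside; the base case $F\entails\exists\ffs\forall\xxsq\,F_{\iq}\stskf$ then holds because $F_{\iq}\stskf$ is literally an instance of variable-renamed copies of clauses of $F_{\ic}$ (this is what the intermediate formulas $F_{\ie}$, $F_{\ia}$ and the substitutions $\stv,\sth,\stren$ in Fig.~\ref{fig-lifting} establish), and the existential quantifiers are then introduced one at a time by existential generalization over the witness terms, with the subterm ordering guaranteeing the required variable conditions. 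So the induction is purely syntactic --- universal instantiation and existential generalization --- and never needs to realize arbitrary values through a fresh function symbol. If you want to keep your symmetric two-sided architecture, you must either prove the general semantic lifting lemma including the inflation argument, or route the left entailment through $F_{\iq}$ and the right one through $\NG_{\iq}$ as the paper does.
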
  
\prlReset{thm-lifting}

\begin{proof}
  Let symbols have the denotation according to the procedure specification.
  In addition we will specify further clausal formulas, sets of variables, and
  substitutions, that relate to the items in the procedure specification and
  are overviewed in the following two graphs:
  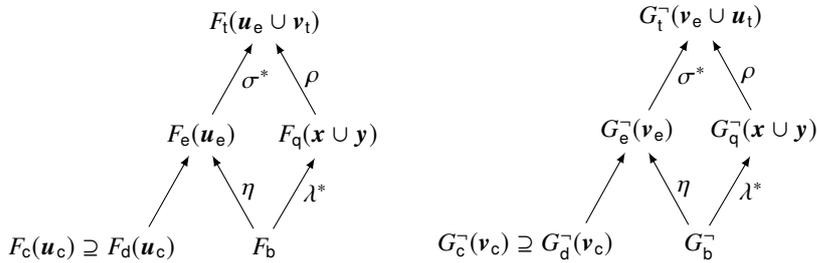
\begin{figure}[h]
    \caption{Clausal formulas
      and substitutions used to prove interpolant lifting.}
    \centering
    \label{fig-lifting}
    $\hphantom{F_{\ic}(\uus_{\ic}) \supseteq}$
    \begin{tikzpicture}[scale=1.0,
        sibling distance=5em,level distance=10ex,
        every node/.style = {transform shape,anchor=mid},
        edge from parent/.style={draw,latex-}]]
        \node {$F_{\ia}(\uus_{\ie} \cup \vvs_{\ia})$}
        child { node {$F_{\ie}(\uus_{\ie})$}
          child { node {$\negphantom{F_{\ic}(\uus_{\ic}) \supseteq}
              F_{\ic}(\uus_{\ic}) \supseteq
              F_{\id}(\uus_{\ic})$} }
          child { node {$F_{\ig}$}
        edge from parent node[right] {$\sth$}}
          edge from parent node[right] {$\inject{\stv}$}
        }
    child { node {$F_{\iq}(\xxsq \cup \yysq)$}
      child { node {$\rule{0pt}{1.95ex}$}
        edge from parent node[right] {$\inject{\stz}$}}
      child[missing] { node {$F$} }
      edge from parent node[right] {$\stren$}
    };
    \end{tikzpicture}
    \hspace{5em}
    \begin{tikzpicture}[scale=1.0,
        sibling distance=5em,level distance=10ex,
        every node/.style = {transform shape,anchor=mid},
        edge from parent/.style={draw,latex-}]]
        \node {$\NG_{\ia}(\vvs_{\ie} \cup \uus_{\ia})$}
        child { node {$\NG_{\ie}(\vvs_{\ie})$}
          child { node {$\negphantom{\NG_{\ic}(\vvs_{\ic}) \supseteq}
              \NG_{\ic}(\vvs_{\ic}) \supseteq
              \NG_{\id}(\vvs_{\ic})$} }
          child { node {$\NG_{\ig}$}
        edge from parent node[right] {$\sth$}}
          edge from parent node[right] {$\inject{\stv}$}
        }
    child { node {$\NG_{\iq}(\xxsq \cup \yysq)$}
      child { node {$\rule{0pt}{1.95ex}$}
        edge from parent node[right] {$\inject{\stz}$}}
      child[missing] { node {$\NG$} }
      edge from parent node[right] {$\stren$}
    };
    \end{tikzpicture}
  \end{figure}

  \noindent
Variables allowed in the respective formulas are shown there in
parentheses. Formulas~$F_{\ig}$ and $\NG_{\ig}$ are ground.  Sets of variables
denoted by different symbols (including differences in the subscript) in the
figure are disjoint.  The superset symbol $\supseteq$ indicates that all
clauses of the formula on the right are clauses of the formula on the left.
Arrows (\raisebox{0.55ex}{\begin{tikzpicture} \draw[-latex] (-0.4,0) --
    (0,0);\end{tikzpicture}}) represent the \name{instance of} relationship,
where the formula at the arrow tip under the substitution shown as arrow label
is the formula at the arrow origin.  Substitutions that are injections are
marked with an asterisk~($*$). The shown substitutions have the following
domains:
\[
\begin{array}{r@{\hspace{0.5em}}c@{\hspace{0.5em}}l}
  \dom{\stv} & = &  \uus_{\ia} \cup \vvs_{\ia}.\\
  \dom{\stren} & = & \uus_{\ie} \cup \vvs_{\ia} \cup \vvs_{\ie} \cup
  \uus_{\ia}.\\
  \dom{\stz} & = & \xxsq \cup \yysq.\\
  \dom{\sth} & = & \uus_{\ie} \cup \vvs_{\ie}.\\
\end{array}  
\]
The following additional syntactic constraints are imposed on the involved
formulas:
\begin{center}
  \begin{tabular}{l}
    Members of $\ggs$ do not occur in $F_{\ia}, F_{\ie}, F_{\ic},
    F_{\id}, \NG_{\ia}, \NG_{\iq}$.\\
    Members of $\ffs$ do not occur in $\NG_{\ia}, \NG_{\ie}, \NG_{\ic},
    \NG_{\id}, F_{\ia}, F_{\iq}$.\\
\end{tabular}
\end{center}

We proceed to show the construction of the involved items, stepping out from
those mentioned in the procedure description.
Sentences~$F$ and $G$ are given as input. The conversion to $\exists
\ffs_{\ic} \forall \uus_{\ic}\, F_{\ic}$ and $\exists \ggs_{\ic} \forall
\vvs_{\ic}\, \NG_{\id}$ can be obtained by usual first-order normal form
transformation.  Skolemization can there be understood as equivalence
preserving rewriting with
Proposition~\ref{prop-second-order-skolemization}. It has to be applied here
independently to $F$ and to $\lnot G$, which is possible since these sentences
do not share quantified variables. The required disjointness conditions on
sets of variables and Skolem functions can be achieved easily by renaming
bound variables.  The sets of functions~$\ffs, \ggs$ can then be constructed
from $F_{\ic}$ and $\NG_{\ic}$.  The following semantic relationships hold:
\[\begin{arrayprf}
\prl{d:1} & F\; \equiv\; \exists \ffs_{\ic} \forall \uus_{\ic}\, F_{\ic}
  \;\entails\; \exists \ffs \forall \uus_{\ic}\, F_{\ic}
  \;\entails\; \forall \ggs \exists \vvs_{\ic}
  \lnot \NG_{\ic}
  \;\entails\; \forall \ggs_{\ic} \exists \vvs_{\ic}\,
  \lnot \NG_{\ic}\; \equiv\; G.
\end{arrayprf}
\]
Given that $\forall \uus_{\ic} \forall \vvs_{\ic}\, (F_{\ic} \land \NG_{\ic})$
is unsatisfiable, which follows from \pref{d:1}, the existence of a closed
\sided ground tableau as specified in the procedure description follows from
``completeness'' of clausal ground tableau construction as implied by
Proposition~\ref{prop-tab-complete} (or, in essence, by Herbrand's theorem).
Formulas $F_{\ig}$, $\NG_{\ig}$ and $H_{\ig}$ as specified then exist, since
they can be extracted from the tableau.
Formulas $F_{\id}$ and $\NG_{\id}$ contain clauses of $F_{\ic}$ and
$\NG_{\ic}$, respectively, such that each clause with side $\aaa$ of the
tableau is an instance of a clause in~$F_{\id}$ and each clause with side
$\bbb$ is an instance of a clause in $\NG_{\id}$.  The following semantic
relationships hold:
\[\begin{arrayprf}
\prl{d:2} & \forall \uus_{\ic}\, F_{\ic}
\;\entails\; \forall \uus_{\ic}\, F_{\id}\; \entails\; F_{\ig}
\;\entails\; \lnot \NG_{\ig}
\;\entails\; \exists \vvs_{\ic}\, \lnot \NG_{\id}
\;\entails\; \exists \vvs_{\ic}\, \lnot \NG_{\ic}.
\end{arrayprf}
\]

We define $\xxsa, \yysa$ and $\stt$, which are specified in the procedure
description, on the basis of larger sets of variables and a substitution with
increased domain that are needed for the further internal proceeding of the
proof: Let $\xxsq$ and $\yysq$ be fresh sequences of variables and let~$\stz$
be an injective substitution with domain $\xxsq \cup \yysq$ such that
\[
\begin{array}{l}
\rng{\stzif} = \{t \mid t \text{ is a } \ggs\sterm \text{ occurring in }
F_{\ig} \text{ or in } \NG_{\ig} \},\\
 \rng{\stzig} = 
\{t \mid t \text{ is an } \ffs\sterm \text{ occurring in }
F_{\ig} \text{ or in } \NG_{\ig} \}.\\
\end{array}
\]
Define~$\xxsa$ as the subset of all members $x$ of $\xxsq$ such that
$x\lambda$ meets the conditions on the range of $\stt$ stated in the procedure
description, and, analogously, define $\yysa$ as the subset of all members~$y$
of $\yysq$ such that $y\lambda$ meets the conditions on the range of $\stt$
stated in the procedure description. Define
\[\stt\;\eqdef\; \lambda|_{\xxsa \cup \yysa}.\]
  The construction of the remaining items specified in the procedure
  description, that is, the formula $H_{\iq}$ and a quantifier prefix $Q_1
  z_1 \ldots Q_n z_n$, is straightforward.

We now consider further items introduced with Fig.~\ref{fig-lifting}.  The
clauses of $F_{\ig}$ are ground instances of clauses of $F_{\id}$. Hence,
there must exist a clausal formula $F_{\ie}$ that subsumes both formulas
$F_{\id}$ and $F_{\ig}$.  Specifically, the formula~$F_{\ie}$ can be
understood as conjunction of ``copies'' of clauses of $F_{\id}$, that is,
clauses of $F_{\id}$ with variables renamed to fresh symbols. The set of
variables $\uus_{\ie}$ consists of all variables occurring in these copies.
Analogous considerations hold for $\NG_{\ie}$. We can then supplement the
semantic relationships in~\pref{d:2} to
\[\begin{arrayprf}
\prl{d:3} & \forall \uus_{\ic}\, F_{\ic}
\entails \forall \uus_{\ic}\, F_{\id}
\equiv \forall \uus_{\ie}\, F_{\ie}
\entails F_{\ig}
\entails \lnot \NG_{\ig}
\entails \exists \vvs_{\ie}\, \lnot \NG_{\ie}
\equiv \exists \vvs_{\ic}\, \lnot \NG_{\id}
\entails \exists \vvs_{\ic}\, \lnot \NG_{\ic}.
\end{arrayprf}
\]
Define $F_{\iq} \eqdef \revsubst{F_{\ig}}{\stz}$ and $\NG_{\iq} \eqdef
\revsubst{\NG_{\ig}}{\stz}$, in analogy to the specification of $H_{\iq}$ in
the procedure description.
The formula $F_{\ia}$ subsumes both $F_{\ie}$
and $F_{\iq}$. Together with the substitution $\stv$ it can be characterized
as follows: Let $\stv$ be an injective substitution such that
\[\rng{\stv|_{\vvs_{\ia}}}\; =\; \{t \mid t \text{ is an } \ffs\sterm
\text{ occurring in } F_{\ie}\},\] and define $F_{\ia} \eqdef
\revsubst{F_{\ie}}{\stv}$. Intuitively, $F_{\ia}$ can be understood as obtained
from $F_{\ie}$ by replacing each term whose principal function symbol does not
occur in $G$ (which includes the Skolem functions $\ffs_{\ic}$) and which is
not a proper subterm of another such term with a dedicated variable from
$\vvs_{\ia}$.  Analogous considerations apply to $\NG_{\ia}$. We complete the
characterization of $\stv$ with
\[\rng{\stv|_{\uus_{\ia}}}\; =\; \{t \mid t \text{ is a } \ggs\sterm
\text{ occurring in } \NG_{\ie}\}\] and define $\NG_{\ia} \eqdef
\revsubst{\NG_{\ie}}{\stv}$.  It still needs to be shown that $F_{\iq}$ is an
instance of $F_{\ia}$ and that $\NG_{\iq}$ is an instance of $\NG_{\ia}$
obtained by applying the substitution $\stren$. Define
\[\stren\; \eqdef\; \{x \mapsto \revsubst{x\stv\sth}{\stz} \mid
x \in \uus_{\ie} \cup \vvs_{\ia} \cup \vvs_{\ie} \cup \uus_{\ia}\}.\]
That $F_{\ia}\stren = F_{\iq}$ can then be shown as follows: Since the range
of $\stz$ only includes $\ffs$-terms and $\ggs$-terms, whereas in $F_{\ia}$
members of $\ffs$ and $\ggs$ do not occur at all it holds that
$F_{\ia}\stv\sth\revsubststandalone{\stz} = F_{\ia}\{x \mapsto
\revsubst{x\stv\sth}{\stz} \mid x \in \uus_{\ie} \cup \vvs_{\ia}\}$.  Since
members of $\vvs_{\ie} \cup \uus_{\ia}$ do not occur in $F_{\ia}$ it follows
that $F_{\ia}\stv\sth\revsubststandalone{\stz} = F_{\ia}\{x \mapsto
\revsubst{x\stv\sth}{\stz} \mid x \in \uus_{\ie} \cup \vvs_{\ia} \cup
\vvs_{\ie} \cup \uus_{\ia}\} = F_{\ia}\stren$.  As Fig.~\ref{fig-lifting}
makes evident, $F_{\ia}\stv\sth = F_{\ig}$.  By definition $F_{\iq} =
F_{\ig}\revsubststandalone{\stz}$.  Hence $F_{\ia}\stren =
F_{\ia}\stv\sth\revsubststandalone{\stz} = F_{\ig}\revsubststandalone{\stz} =
F_{\iq}$. With analogous considerations it follows that $\NG_{\ia}\stren =
\NG_{\iq}$.

We are now done with showing the construction of the items introduced in the
procedure description and in Fig.~\ref{fig-lifting}.  It remains to show on
this basis that the constructed output formula $Q_1 z_1 \ldots Q_n z_n
H_{\iq}$ is indeed a Craig-Lyndon interpolant.  Let $\Qs$ be a shorthand for
$Q_1 z_1 \ldots Q_n z_n H_{\iq}$.  From the construction of $\Qs H_{\iq}$ by
replacing in a ground formula ground terms with variables that are bound by a
prepended quantifier prefix it follows that $\Qs H_{\iq}$ is a sentence.  The
further syntactic properties of a Craig-Lyndon interpolant, as specified with
items (2.)  and (3.) of Definition~\ref{def-cli}, are:
\[\begin{arrayprf}
\prl{b:1} & \pred{\Qs H_{\iq}} \subseteq \pred{F} \cap
\pred{G}.\\
\prl{b:2} & \fun{\Qs H_{\iq}} \subseteq
\fun{F} \cap \fun{G}.
\end{arrayprf}\]
They can be shown as follows: Recall that $H_{\ig}$ is a Craig-Lyndon
interpolant of $F_{\ig}$ and $\lnot \NG_{\ig}$. Hence $\pred{H_{\ig}}
\subseteq \pred{F_{\ig}} \cap \pred{\lnot \NG_{\ig}}$. Statement~\pref{b:1}
then follows since $\pred{\Qs H_{\iq}} = \pred{H_{\iq}} = \pred{H_{\ig}}$,
$\pred{F_{\ig}} \subseteq \pred{F_{\ic}} \subseteq \pred{F}$ and $\pred{\lnot
  \NG_{\ig}} \subseteq \pred{\lnot \NG_{\ic}} \subseteq \pred{G}$.
All members of $\fun{H_{\ig}}$ that are not in $\fun{F} \cap \fun{G}$ are in
$\ffs \cup \ggs$.  Statement~\pref{b:2} then follows since $H_{\iq}$ is defined
as $\revsubst{H_{\ig}}{\stt}$, which implies $\fun{\Qs H_{\iq}} =
\fun{H_{\iq}} \subseteq \fun{H_{\ig}}$ and, with the specification of $\stt$,
that there are no occurrences of members of $\ffs \cup \ggs$ in $H_{\iq}$.

It remains to prove that $\Qs H_{\iq}$ has the semantic characteristics of a
Craig-Lyndon interpolant as specified with item~(1.)  of
Definition~\ref{def-cli}. Let $\{w_1,\ldots,w_m\}$ be $\xxsq \cup \yysq$ ordered
such that for $i,j \in \{1,\ldots,m\}$ it holds that if $w_i\stz$ is a strict
subterm of $w_j\stz$ then $i < j$ and the ordering of $\{z_1,\ldots,z_n\}$ is
extended, that is, if $w_a = z_c$, $w_b = z_d$ and $c < d$, then $a < b$. For
$i \in \{1,\ldots,m\}$ let $R_i \eqdef \forall$ if $w_i \in \xxsq$ and let
$R_i \eqdef \exists$ if $w_i \in \yysq$. Let $\Rs \eqdef R_1 w_1 \ldots R_m
w_m$.  Since $F_{\ig} \entails H_{\ig} \entails \lnot \NG_{\ig}$, $F_{\iq} =
\revsubst{F_{\ig}}{\stz}$, $\NG_{\iq} = \revsubst{\NG_{\ig}}{\stz}$, and
$H_{\iq} = \revsubst{\NG_{\ig}}{\stt} = \revsubst{\NG_{\ig}}{\stz}$ it follows
that $F_{\iq} \entails H_{\iq} \entails \lnot \NG_{\iq}$.  Hence $\Rs F_{\iq}
\entails \Rs H_{\iq} \entails \Rs \lnot \NG_{\iq}$.  Since the quantifier
prefix $\Qs$ consists of exactly those quantifications in $\Rs$ that are upon
variables occurring in $H_{\iq}$, in the same order as in $\Rs$, it holds that
$\Rs H_{\iq} \equiv \Qs H_{\iq}$. Thus
\[\begin{arrayprf}
\prl{b:5} & \Rs F_{\iq} \entails \Qs H_{\iq} \entails \Rs \lnot \NG_{\iq}.
\end{arrayprf}\]
The semantic property of a Craig-Lyndon interpolant that we are going to prove
is $F \entails \Qs H_{\iq} \entails G$.  Given~\pref{b:5}, this follows from
$F \entails \Rs F_{\iq}$ and $\Rs \lnot \NG_{\iq} \entails G$. We will now
show the first of these entailments, $F \entails \Rs F_{\iq}$.  For this we
need a further substitution, $\stsk$, which is not displayed in
Fig.~\ref{fig-lifting}.  Its key properties are stated as~\pref{d:6}
and~\pref{d:7} below. They will later serve to justify the base cases of an
induction. Their proof depends on a further property of $\stren$:
\[\begin{arrayprf}
\prl{d:4} & \stv\stren = \stren.
\end{arrayprf}\]
Equality~\pref{d:4} can be shown as follows: For members~$x$ of $\dom{\stren}
\setminus \dom{\stv} = \uus_{\ie} \cup \vvs_{\ie}$ it is evident that
$x\stv\stren = x\stren$. It remains to consider further members of $\dom{\stv}
\cup \dom{\stren} = \dom{\stren} = x \in \uus_{\ie} \cup \vvs_{\ia} \cup
\vvs_{\ie} \cup \uus_{\ia}$, that is, members of $\vvs_{\ia} \cup \uus_{\ia}$.
Let $x$ be a member of this set.  Observe that $\var{x\stv} \subseteq
\uus_{\ie} \cup \vvs_{\ie}$.  Since $\dom{\stv} \cap (\uus_{\ie} \cup
\vvs_{\ie}) = \emptyset$ it holds that $x\stv\stv = x\stv$. Hence, by the
definition of $\stren$ it holds that $x\stv\stren =
\revsubst{x\stv\stv\sth}{\stz} = \revsubst{x\stv\sth}{\stz} = x\stren$, which
concludes the proof of~\pref{d:4}.
We now define the substitution $\stsk$ with $\dom{\stsk} = \xxsq \cup \yysq$
by
\[\stsk\; \eqdef\; \{x \mapsto \revsubst{x\stz}{\stz |_{\yysq}} \mid x \in \xxsq\}
\cup \{y \mapsto \revsubst{y\stz}{\stz |_{\xxsq}} \mid y \in \yysq\}.\]
It then holds that
\[\begin{arrayprf}
\prl{d:5} & \forall \uus_{\ie}\, F_{\ie}\; =\;
\forall \uus_{\ie}\, F_{\ia}\stv \;\entails\;
\forall \xxsq\, F_{\ia}\stv\stren\stskf\; =\;
\forall \xxsq\, F_{\ia}\stren\stskf\; =\;
\forall \xxsq\, F_{\iq}\stskf.
\end{arrayprf}
\]
That $\forall \uus_{\ie}\, F_{\ia}\stv \entails \forall \xxsq\,
F_{\ia}\stv\stren\stskf$ follows since
$F_{\ia}\stv\stren\stskf$ is an instance of $F_{\ia}\stv$ and the free
variables in both formulas are universally quantified on both sides of the
entailment.  That $\forall \xxsq\, F_{\ia}\stv\stren\stskf = \forall
\xxsq\, F_{\ia}\stren\stskf$ follows from~\pref{d:4}.  The remaining
identities in~\pref{d:5} are immediate from the relationships displayed in
Fig.~\ref{fig-lifting}.  From \pref{d:5}, \pref{d:3} and \pref{d:1}
it follows that
\[\begin{arrayprf}
\prl{d:6} & 
F \;\entails\; \exists \ffs \forall \xxsq\, F_{\iq}\stskf.
\end{arrayprf}
  \]
Analogously, it can be shown that
\[\begin{arrayprf}
\prl{d:7} &
\forall \ggs \exists \yysq\, \lnot \NG_{\iq}\stskg \;\entails\; G.
\end{arrayprf}
  \]

In preparation of an inductive argument, we define fragments of $\Rs$, $\xs$
and $\stsk$ for all $i \in \{0,\ldots,m\}$:
\[
\begin{array}{r@{\hspace{0.5em}}c@{\hspace{0.5em}}l}
\Rs_i & \eqdef & R_1 w_1 \ldots R_i w_i.\\
\xxsq_i & \eqdef & \xxsq \cap \{w_{i+1},\ldots,w_m\}.\\
\stski_i & \eqdef & \stsk|_{\yysq \cap \{w_{i+1},\ldots,w_m\}}.
\end{array}\]
Observe that then
\[\begin{arrayprf}
\prl{b:9} & \Rs_0 = \emptyseq.\;\; \xxsq_0 = \xxsq.\;\; \stski_0 = \stsk|_{\yysq}.\\
\prl{b:12} & \Rs_m = \Rs.\;\; \xxsq_m  = \{\}.\;\; \stski_m = \emptysubst.
\end{arrayprf}\]
We now show by induction that for all $i \in \{0,\ldots,m\}$ it holds
that
\[\begin{arrayprf}
(\text{IP}) &
F \entails \exists \ffs \Rs_i \forall \xxsq_i\, F_{\iq}\stski_i.
\end{arrayprf}
\]
If $i=m$, then, by~\pref{b:12} the entailment~(IP) equals $F \entails \exists
\ffs \Rs F_{\iq}$. Since $\fun{F_{\iq}} \cap \ffs = \emptyset$ this
is equivalent to $F \entails \Rs F_{\iq}$, the statement to prove.
In the base case $i=0$ of the induction, the entailment~(IP) is by~\pref{b:9}
identical with $F \entails \exists \ffs \forall \xxsq\, F_{\iq}\stskf$, which
we have already shown as~\pref{d:6}.  To show the induction step we assume as
induction hypothesis that (IP) holds for some $i \in \{0,\ldots,m-1\}$.
The variable $w_{i+1}$ must be either in $\xxsq$ or in $\yysq$.  In the case
$w_{i+1} \in \xxsq$ it holds that $\xxsq_i = \{w_{i+1}\} \cup \xxsq_{i+1}$ and
that $\stski_{i+1} = \stski_i$. Thus
\[\begin{arrayprf}
\prl{b:15} & \exists \ffs \Rs_i \forall \xxsq_i\, F_{\iq}\stski_i
\;\equiv\; \exists \ffs \Rs_i \forall w_{i+1} \forall \xxsq_{i+1}\,
  F_{\iq}\stski_{i+1}
\;\equiv\; \exists \ffs \Rs_{i+1} \forall \xxsq_{i+1}\,
  F_{\iq}\stski_{i+1}.
\end{arrayprf}
\]
In the case $w_{i+1} \in \yysq$ it holds that $\xxsq_{i+1} = \xxsq_{i}$ and
$\stski_i = \{w_{i+1} \mapsto w_{i+1}\stsk\}\stski_{i+1}$.
Moreover, it holds that
\[
\begin{arrayprf}
\prl{b:22} & w_{i+1} \notin \var{\rng{\stski_{i+1}}}.\\ 
\prl{b:23} & \xxsq_{i+1} \cap \var{w_{i+1}\stsk} = \emptyset.\\
\end{arrayprf}
\]
Statement~\pref{b:22} follows since $w_{i+1} \in \yysq$ and
$\var{\rng{\stski_{i+1}}} \subseteq \var{\rng{\stsk|_{\yysq}}} \subseteq
\xxsq$. Statement~\pref{b:23} can be shown as follows: Assume that~\pref{b:23}
does not hold.  Then there must be a number $j$ such that $w_j \in \xxsq_{i+1}
\cap \var{w_{i+1}\stsk}$.  By the definition of $w_i$ it follows that $j >
i+1$. It also follows that $w_j\stz$ is a strict subterm of
$w_{i+1}\stsk\stz$, and thus, since $\stsk\stz = \stz$, which is not hard to
verify, also of $w_{i+1}\stz$. From the specification of the ordering of
$\{w_1, \ldots, w_m\}$ it follows that $j < i+1$, which contradicts with the
condition $j > i+1$ just derived. Hence~\pref{b:23} must hold.  We now can
state the following relationships, where the entailment step is justified
by~\pref{b:22} and~\pref{b:23}:
\[
\begin{arrayprfeq}
\prl{b:18} &&  \Rs_i \forall \xxsq_i\, F_{\iq}\stski_i\\
& \equiv & \Rs_i \forall \xxsq_{i+1}\, F_{\iq}\{w_{i+1} \mapsto w_{i+1}\stsk\}\stski_{i+1}\\
& \entails & \Rs_i \exists w_{i+1} \forall \xxsq_{i+1}\, F_{\iq}\stski_{i+1}\\
& \equiv & \Rs_{i+1} \forall \xxsq_{i+1}\, F_{\iq}\stski_{i+1}.\\
\end{arrayprfeq}    
\]
Given the induction hypothesis $F \entails \exists \ffs \Rs_i \forall
\xxsq_i\, F_{\iq}\stski_i$, the induction conclusion \[F \entails \exists \ffs
\Rs_{i+1} \forall \xxsq_{i+1}\, F_{\iq}\stski_{i+1}\] follows in the case
$w_{i+1} \in \xxsq$ from~\pref{b:15} and in the case $w_{i+1} \in \yysq$
from~\pref{b:18}. Hence we have established
\[
\begin{arrayprf}
  \prl{e:1} & F \entails \Rs F_{\iq}.
\end{arrayprf}
\]
Analogously it can be shown that
\[
\begin{arrayprf}
  \prl{e:2} &  \Rs \lnot \NG_{\iq} \entails G.
\end{arrayprf}
\]
Combining \pref{b:1}, \pref{b:2}, \pref{b:5}, \pref{e:1} and \pref{e:2} and
recalling that $\Qs$ was defined as shorthand for $Q_1 z_1 \ldots Q_n z_n$ we
can finish the proof of Theorem~\ref{thm-lifting} by concluding that the
output of the \CTIF procedure is indeed a Craig-Lyndon interpolant:
\[
\begin{arrayprf}
  \prl{e:3} &
Q_1 z_1 \ldots Q_n
z_n\, H_{\iq} \text{ is a Craig-Lyndon interpolant of } F \text{ and } G.
\end{arrayprf}
\qed
\]
\end{proof}

The following example illustrates the proof of Theorem~\ref{thm-lifting}:
\begin{examp}[Items in the Proof of Theorem~\ref{thm-lifting}]
  \label{examp-thm-lifting}
  Consider computation of a Craig-Lyndon interpolant by the \CTIF method for
  the sentences:
  \[
  \begin{array}{r@{\hspace{0.5em}}c@{\hspace{0.5em}}l}
    F & = & \forall x_1 \forall x_2\, \pk(x_1, \hk(\ffk_1(x_1)), x_2)).\\
    G & = & \exists x_1 \exists x_2\,  (\pk(\hk(\ggk_2(x_1)), x_2, \ggk_1) \land \pk(\ggk_1, x_1, \hk(\ggk_2(x_1)))).
  \end{array}
  \]
  The common symbols of both sentences are the predicate $\pk$, in positive
  polarity, and the function $\hk$.  Alternatively, the non-common functions
  $\ffk_1, \ggk_1, \ggk_2$ might be viewed as Skolem functions for original sentences
  \[
  \begin{array}{r@{\hspace{0.5em}}c@{\hspace{0.5em}}l}
    F' & = & \forall x_1 \exists y \forall x_2\, \pk(x_1, \hk(y), x_2).\\
    G' & = & \forall
  x_1 \exists y_1 \forall x_2 \exists y_2\, (\pk(\hk(x_2), y_2, x_1) \land \pk(x_1,
  y_1, \hk(x_2))).
  \end{array}
  \]
  Under this view, however, $F'$ and $G'$ themselves both qualify as
  Craig-Lyndon interpolants of $F'$ and $G'$. Nevertheless, the proceeding in
  the example can also be understood as computing a further interpolant of
  $F'$ and $G'$ which actually is strictly weaker than $F'$ and strictly
  stronger than $G'$.

  We return back to the original view of $\ffk_1, \ggk_1, \ggk_2$ as functions occurring
  in just one of the interpolation inputs $F$ and $G$ and show the respective
  values of the items mentioned the description of
  Procedure~\ref{proc-cti-fol} and in the proof of its correctness,
  Theorem~\ref{thm-lifting}.  Converting $F$ and $\lnot G$ to clausal form
  yields the following formulas, variables and sets of distinguished functions:
  \[
  \begin{array}{r@{\hspace{0.5em}}c@{\hspace{0.5em}}l}
    F_{\ic} & = & \pk(u^{\ic}_1, \hk(\ffk_1(u^{\ic}_1)), u^{\ic}_2)).\\
    \NG_{\ic} & = &
    \lnot \pk(\hk(\ggk_2(v^{\ic}_1)), v^{\ic}_2, \ggk_1) \lor \lnot \pk(\ggk_1, v^{\ic}_1,
    \hk(\ggk_2(v^{\ic}_1))).\\
    \uus_{\ic} & = & \{u^{\ic}_1, u^{\ic}_2\}.\\
    \vvs_{\ic} & = & \{v^{\ic}_1, v^{\ic}_2\}.\\
    \ffs & = & \{\ffk_1\}.\\
    \ggs & = & \{\ggk_1, \ggk_2, k\}.\\
  \end{array}
  \]
  Formulas $F_{\ig}$ and $\NG_{\ig}$ are clausal ground formulas
  obtained from instantiating clauses of $F_{\ic}$ and $\NG_{\ic}$.  Actually
  it is easy to verify syntactically that $F_{\ig} \equiv \lnot \NG_{\ig}$,
  hence $F_{\ig} \land \NG_{\ig}$ is unsatisfiable, implying that a \sided
  ground tableau for $F_{\ig}$ and $\NG_{\ig}$ can be constructed.  From that
  tableau we can extract $H_{\ig}$, a Craig-Lyndon interpolant of $F_{\ig}$
  and $\lnot \NG_{\ig}$. Since $F_{\ig}$, $\NG_{\ig}$ and $H_{\ig}$ are built
  up from the same two ground atoms, we introduce shorthands $A, B$ for these
  to facilitate readability:
      \[
      \begin{array}{r@{\hspace{0.5em}}c@{\hspace{0.5em}}l}
        A & = & \pk(\hk(\ggk_2(\hk(\ffk_1(\ggk_1)))), \hk(\ffk_1(\hk(\ggk_2(\hk(\ffk_1(\ggk_1)))))), \ggk_1).\\
        B & = & \pk(\ggk_1, \hk(\ffk_1(\ggk_1)), \hk(\ggk_2(\hk(\ffk_1(\ggk_1))))).\\
    F_{\ig}\;=\; H_{\ig}\; & = & A \land B.\\
    \NG_{\ig} & = & \lnot A \lor \lnot B.
  \end{array}
  \]
  Formulas $F_{\iq}, \NG_{\iq}, H_{\iq}$ can be viewed as obtained from
  $F_{\ig}, \NG_{\ig}, H_{\ig}$ by replacing $\ffs$-terms and $\ggs$-terms in
  occurrences that are not as subterm of another such term with dedicated
  variables, that is, different terms are replaced by different variables and
  identical terms with the same variable:
  \[
  \begin{array}{r@{\hspace{0.5em}}c@{\hspace{0.5em}}l}
    F_{\iq} \;=\; H_{\iq} & = & \pk(\hk(x_1), \hk(y_1), x_2)\; \land\\
           &&    \pk(x_2, \hk(y_2), \hk(x_1)).\\
    \NG_{\iq} & = & \lnot \pk(\hk(x_1), \hk(y_1), x_2) \lor
    \lnot \pk(x_2, \hk(y_2), \hk(x_1)).\\
    \xxsq & = & \{x_1, x_2\}.\\
    \yysq & = & \{y_1, y_2\}.\\
    \stz & = &
    \{x_1 \mapsto \ggk_2(\hk(\ffk_1(\ggk_1))),\,
      x_2 \mapsto \ggk_1,\\
      && \hparenc
      y_1 \mapsto \ffk_1(\hk(\ggk_2(\hk(\ffk_1(\ggk_1))))),\,
      y_2 \mapsto \ffk_1(\ggk_1)\}.
  \end{array}
  \]
  The variables $w_1,\ldots w_m$ are determined by $\stz$ as follows,
  where $m=4$:  
  \[w_1 = x_2,\, w_2 = y_2,\, w_3 = x_1,\, w_4 = y_1\]
  Hence, the interpolant is 
  \[\Qs\ H_{\iq}\; =\; \forall x_2 \exists y_2 \forall x_1 \exists y_1
  (\pk(\hk(x_1), \hk(y_1), x_2) \land \pk(x_2, \hk(y_2), \hk(x_1)).\]

  \noindent
  We proceed with the example by showing intermediate formulas used
  ``internally'' within the proof.  In $F_{\ig}$ we needed with $A$ and $B$
  two different instantiations of the single clause in $F_{\ic}$. Hence
  formula $F_{\ie}$ provides that clause in two ``copies'':
  \[
  \begin{array}{r@{\hspace{0.5em}}c@{\hspace{0.5em}}l}
    F_{\ie} & = & \pk(u^{\ie}_1, \hk(\ffk_1(u^{\ie}_1)), u^{\ie}_2))\; \land\\
          &&    \pk(u^{\ie}_3, \hk(\ffk_1(u^{\ie}_3)), u^{\ie}_4)).\\
    \NG_{\ie} & = &
    \lnot \pk(\hk(\ggk_2(v^{\ie}_1)), v^{\ie}_2, \ggk_1) \lor \lnot \pk(\ggk_1, v^{\ie}_1,
    \hk(\ggk_2(v^{\ie}_1))).\\
    \uus_{\ie} & = & \{u^{\ie}_1, u^{\ie}_2, u^{\ie}_3, u^{\ie}_4\}.\\
    \vvs_{\ie} & = & \{v^{\ie}_1, v^{\ie}_2\}.\\
    \sth & = & \{u^{\ie}_1 \mapsto \hk(\ggk_2(\hk(\ffk_1(\ggk_1)))),\,
    u^{\ie}_2 \mapsto \ggk_1,\,
    u^{\ie}_3 \mapsto \ggk_1,\,
    u^{\ie}_4 \mapsto \hk(\ggk_2(\hk(\ffk_1(\ggk_1)))),\\
    && \hparenc v^{\ie}_1 \mapsto \hk(\ffk_1(\ggk_1)),\,
    v^{\ie}_2 \mapsto \hk(\ffk_1(\hk(\ggk_2(\hk(\ffk_1(\ggk_1))))))\}.
  \end{array}
  \]
  A clausal formula $F_{\ia}$ that subsumes both $F_{\ie}$ and $F_{\iq}$ and
  an analogous $\NG_{\ia}$ that subsumes both $\NG_{\ie}$ and $\NG_{\iq}$
  along with the respective substitutions can be specified as:
  \[
  \begin{array}{r@{\hspace{0.5em}}c@{\hspace{0.5em}}l}
    F_{\ia} & = & \pk(u^{\ie}_1, \hk(v^{\ia}_1), u^{\ie}_2)\; \land\\
          &&    \pk(u^{\ie}_3, \hk(v^{\ia}_2), u^{\ie}_4).\\
    \NG_{\ie} & = &
    \lnot \pk(\hk(u^{\ia}_1), v^{\ie}_2, u^{\ia}_2) \lor
    \lnot \pk(u^{\ia}_2, v^{\ie}_1, \hk(u^{\ia}_1)).\\
    \vvs_{\ia} & = & \{v^{\ia}_1, v^{\ia}_2\}.\\
    \uus_{\ia} & = & \{u^{\ia}_1, u^{\ia}_2\}.\\
    \stv & = & \{v^{\ia}_1 \mapsto \ffk_1(u^{\ie}_1),\,
    v^{\ia}_2 \mapsto \ffk_1(u^{\ie}_3),\,
    u^{\ia}_1 \mapsto \ggk_2(v^{\ie}_1),\,
    u^{\ia}_2 \mapsto \ggk_1\}.\\
    \stren & = &
    \{u^{\ie}_1 \mapsto \hk(x_1),\,
      v^{\ia}_1 \mapsto y_1,\,
      u^{\ie}_2 \mapsto x_2,\,
      u^{\ie}_3 \mapsto x_2,\,
      v^{\ia}_2 \mapsto y_2,\,
      u^{\ie}_4 \mapsto \hk(x_1),\\
    && \hparenc u^{\ia}_1 \mapsto x_1,\,
      v^{\ie}_2 \mapsto \hk(y_1),\,
      u^{\ia}_2 \mapsto x_2,\,
      v^{\ie}_1 \mapsto \hk(y_2)\}.
  \end{array}
  \]

  \noindent
  The proof involves an induction where it is shown that for all $i \in
  \{0,\ldots,m\}$ it holds that
  \[F \entails \exists \ffs \Rs_i \forall
  \xxsq_i\, F_{\iq}\stski_i.\] The base case $i = 0$ is equal to: $F
  \;\entails\; \exists \ffs \forall \xxsq\, F_{\iq}\stskf$. The case $i=m$ is
  equal to $F \;\entails\; \Rs\, F_{\iq}$ and is used in the proof to justify
  the semantic property of the lifted interpolant. In our case $m$ is $4$.
  The substitution~$\stsk$ is determined by $\stz$ as follows:
  \[
  \begin{array}{r@{\hspace{0.5em}}c@{\hspace{0.5em}}l}
    \stsk & = &
    \{x_1 \mapsto  \ggk_2(\hk(y_2)),\,
    x_2 \mapsto  \ggk_1,\,
    y_1 \mapsto \ffk_1(\hk(x_1)),\,
    y_2 \mapsto \ffk_1(x_1)\}
  \end{array}
  \]
  Recall that in our example the \emph{ordered} set $\{w_1, w_2, w_3, w_4\}$
  is $\{x_2, y_2, x_1, y_1\}$. The substitutions~$\stsk_i$ used in the
  induction property are then:
  \[
  \begin{array}{r@{\hspace{0.5em}}c@{\hspace{0.5em}}l}
    \stsk|_{\yysq} = \stsk_0 = \stsk_1
    & = & \{y_2 \mapsto \ffk_1(x_1),\, y_1 \mapsto \ffk_1(\hk(x_1))\}.\\
    \stsk_2 = \stsk_3 & = & \{y_1 \mapsto \ffk_1(\hk(x_1))\}.\\
    \stsk_4 & = & \emptysubst.
  \end{array}
  \]
  We finish the example with showing the induction property for $i \in
  \{0,\ldots,m=4\}$, where changes in the matrix compared to the previous step
  are highlighted by underlining:
  \[
  \begin{array}{r@{\hspace{0.5em}}llll}
  i && \Rs_i & \hphantom{\forall}\xs_i &  F_{\iq}\stski_i \\\midrule
  0 & F \entails \exists \ffs && \forall x_2  x_1
  & (\pk(\hk(x_1), \hk(\ffk_1(\hk(x_1))), x_2) \land \pk(x_2, \hk(\ffk_1(x_1)), \hk(x_1))).\\
  1 & F \entails \exists  \ffs & \forall x_2 & \forall x_1\,
  & (\pk(\hk(x_1), \hk(\ffk_1(\hk(x_1))), x_2) \land \pk(x_2, \hk(\ffk_1(x_1)), \hk(x_1))).\\
  2 & F \entails \exists  \ffs & \forall x_2 \exists y_2 & \forall x_1\,
  & (\pk(\hk(x_1), \hk(\ffk_1(\hk(x_1))), x_2) \land \pk(x_2, \hk(\underline{y_2}), \hk(x_1))).\\
  3 & F \entails \exists  \ffs & \forall x_2 \exists y_2 \forall x_1 &\,
  & (\pk(\hk(x_1), \hk(\ffk_1(\hk(x_1))), x_2)  \land \pk(x_2, \hk(y_2), \hk(x_1))).\\
  4 & F \entails & \forall x_2 \exists y_2 \forall x_1 \exists y_1 &\,
  & (\pk(\hk(x_1), \hk(\underline{y_1}), x_2) \land \pk(x_2, \hk(y_2), \hk(x_1))).\\
  \end{array}
  \]
\end{examp}  

\medskip

We conclude this section with a proposition that shows some properties of
Craig-Lyndon interpolants constructed with the \CTIF procedure that go beyond
the requirements of a Craig-Lyndon interpolant (Definition~\ref{def-cli}), are
useful in certain applications, such as Theorem~\ref{thm-ipol-horn} below
and easily follow from the specification of the \CTIF procedure:
\begin{prop}[Properties of Interpolants Constructed with \CTIF]
  \label{prop-ipol-cip-properties}
  Let $\Qs F$ and $\Rs G$ be first-order sentences such that $\Qs$ and $\Rs$
  are quantifier prefixes, $F$ and $G$ are quantifier-free formulas and it
  holds that $\Qs F \entails \Rs G$. Then, by the \CTIF method a first-order
  sentence $\Ss H$ can be constructed such that $\Ss$ is a quantifier prefix,
  $H$ is a quantifier-free formula, $\Ss H$ is a Craig-Lyndon interpolant of
  $\Qs F$ and $\Rs G$, and it holds that:

  \smallskip
  \begin{enumerate}
    \item If there is an existential quantification in $\Ss$, then there is an
      existential quantification in $\Qs$ or there is a member of $\fun{F}$
      that is not in $\fun{G}$.

  \item If there is a universal quantification in $\Ss$, then there is a
    universal quantification in $\Rs$ or there is a member of $\fun{G}$
    that is not in $\fun{F}$.


  \item If $F_{\ic}$ and $\NG_{\ic}$ are clausal formulas obtained from
    clausifying $\Qs F$ and $\lnot \Rs G$, respectively, and $N$ is the root
    of a closed \sided ground tableau for $F_{\ic}$ and $\NG_{\ic}$, then
    $H\sigma = \nipol{N}$ for some substitution $\sigma$ whose domain is the
    set of the variables quantified in $\Ss$.
  \end{enumerate}
\end{prop}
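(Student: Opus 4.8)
The plan is to run Procedure~\ref{proc-cti-fol} on the inputs $\Qs F$ (in the role of $F$) and $\Rs G$ (in the role of $G$), to set $\Ss \eqdef Q_1 z_1 \ldots Q_n z_n$ and $H \eqdef H_{\iq}$ for the output $Q_1 z_1 \ldots Q_n z_n\, H_{\iq}$, and then to read off all three items from the internal structure of that run. That $\Ss H$ is a Craig-Lyndon interpolant is exactly Theorem~\ref{thm-lifting}; that $H = H_{\iq}$ is quantifier-free and $\Ss$ a quantifier prefix is immediate, since $H_{\iq} = \revsubst{H_{\ig}}{\stt}$ arises from the ground formula $H_{\ig}$ by replacing terms with variables while $\Ss$ merely collects the prepended quantifiers. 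The fact I would establish first, on which everything rests, is the correspondence induced by Skolemization: because $F$ and $G$ are already quantifier-free, clausifying $\Qs F$ introduces one Skolem function into $\ffs_{\ic}$ per existential quantifier of $\Qs$, and clausifying $\lnot \Rs G$ (whose prenex dualizes $\Rs$) introduces one Skolem function into $\ggs_{\ic}$ per universal quantifier of $\Rs$. Since these Skolem functions are fresh and mutually disjoint from the original vocabularies, the procedure's definitions unfold to $\ffs = \ffs_{\ic} \cup (\fun{F} \setminus \fun{G})$ and $\ggs = \ggs_{\ic} \cup (\fun{G} \setminus \fun{F}) \cup \{k\}$.

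Items~1 and~2 then follow by tracing the lifting step. An existential quantifier in $\Ss$ means, by the construction of the prefix, that some variable of $\yysa$ occurs in $H_{\iq}$; by the specification of $\stt$ this variable is mapped to a top-level $\ffs$-term occurring in $H_{\ig}$, whose principal functor $f$ lies in $\ffs$. The case split along $\ffs = \ffs_{\ic} \cup (\fun{F}\setminus\fun{G})$ yields the disjunction of item~1: if $f \in \ffs_{\ic}$ then $f$ is the Skolem function of an existential quantifier of $\Qs$, so $\Qs$ contains an existential quantifier; otherwise $f \in \fun{F} \setminus \fun{G}$, giving a member of $\fun{F}$ not in $\fun{G}$. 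Item~2 is the mirror image, using a universal quantifier in $\Ss$, a variable of $\xxsa$ mapped to a top-level $\ggs$-term, and the decomposition of $\ggs$.

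The main obstacle is the fresh constant $k$, which sits in $\ggs$ and, were it to occur in $H_{\ig}$, would be lifted to a universally quantified variable of $\Ss$ that item~2 cannot justify through $k$ itself. I would first observe that whenever $F_{\ic}$ or $\NG_{\ic}$ already contains a constant, the procedure need not introduce $k$ at all, so items~1 and~2 go through verbatim; the difficulty is thus confined to the constant-free case, where an empty Herbrand universe forces $k$. There I would exploit the freedom, noted in the procedure, to assign $k$ to either $\ffs$ or $\ggs$, placing it on the side that already owns a Skolem function or an exclusive original function so that the quantifier $k$ induces is absorbed by an already-present disjunct. Verifying that such a side can always be selected — relating emptiness of the Herbrand universe to the leading quantifiers of $\Qs$ and $\Rs$ and to the sets $\fun{F}\setminus\fun{G}$ and $\fun{G}\setminus\fun{F}$ — is the step I expect to require the most attention, since it is exactly where the bookkeeping of Skolemization, grounding, and shared vocabulary interact.

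Finally, item~3 is essentially bookkeeping. By construction $H = H_{\iq} = \revsubst{H_{\ig}}{\stt}$ with $\stt$ injective and with range exactly the maximal top-level $\ffs$- and $\ggs$-terms of $H_{\ig} = \nipol{N}$, so applying $\stt$ undoes the replacement and gives $H_{\iq}\stt = H_{\ig} = \nipol{N}$. All variables occurring in $H_{\iq}$ belong to $\xxsa \cup \yysa = \dom{\stt}$, and the variables quantified in $\Ss$ are precisely those members of $\xxsa \cup \yysa$ that occur in $H_{\iq}$; hence taking $\sigma$ to be $\stt$ restricted to $\var{H_{\iq}}$ yields $H\sigma = \nipol{N}$ with $\dom{\sigma}$ equal to the set of variables quantified in $\Ss$, as required.
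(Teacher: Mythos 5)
Your route is the one the paper itself takes: the published proof of Proposition~\ref{prop-ipol-cip-properties} consists of the single sentence that the claims follow from the specification of Procedure~\ref{proc-cti-fol} and Theorem~\ref{thm-lifting}, so your unfolding is not a different method but the detail that the one-liner suppresses. Your key observations are correct: the decompositions $\ffs = \ffs_{\ic} \cup (\fun{F}\setminus\fun{G})$ and $\ggs = \ggs_{\ic} \cup (\fun{G}\setminus\fun{F}) \cup \{k\}$, the fact that $\ffs_{\ic}$ (resp.\ $\ggs_{\ic}$) is nonempty exactly when $\Qs$ contains an existential (resp.\ $\Rs$ a universal) quantifier because $F$ and $G$ are quantifier-free, and the identity $H_{\iq}\stt = H_{\ig} = \nipol{N}$ with $\dom{\stt}$ restricted to $\var{H_{\iq}}$ for item~3.

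However, the step you honestly flag as open --- always being able to place $k$ so that the quantifier it induces is ``absorbed by an already-present disjunct'' --- is a genuine gap, and it cannot be closed at the stated level of generality. Take $\Qs F = \forall x\, \pk(x)$ and $\Rs G = \exists x\, \pk(x)$. Clausification yields $\pk(u)$ and $\lnot \pk(v)$ with no functions and no Skolem functions, so every closed ground tableau must instantiate with $k$, every extracted ground interpolant is $\pk(k)$, and lifting produces $\forall x_1\, \pk(x_1)$ or $\exists y_1\, \pk(y_1)$ according to the side $k$ is assigned to. The first violates item~2 ($\Rs$ has no universal quantifier and $\fun{G}\setminus\fun{F}=\emptyset$), the second violates item~1, and no quantifier-free interpolant exists since the shared function vocabulary is empty. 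So each item holds separately under the corresponding placement of $k$ --- which is all that the application in Theorem~\ref{thm-ipol-horn} needs, as its items~2 and~3 are never invoked for the same run --- but not as a conjunction about a single constructed $\Ss H$. To repair your proof you would either add the hypothesis that a constant occurs in $F_{\ic}$ or $\NG_{\ic}$ (so grounding can avoid $k$ altogether, after which your case analysis goes through verbatim), or weaken the conclusion to assert each of items~1 and~2 for a suitable run of \CTIF. The paper's own proof does not address this point at all, so you have in fact located a soft spot in the proposition rather than merely in your argument.
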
  
\begin{proof}
  Follows from the specification of Procedure~\ref{proc-cti-fol}
  and its correctness, Theorem~\ref{thm-lifting}. \qed
\end{proof}  
The first two items of Proposition~\ref{prop-ipol-cip-properties} concern
quantifiers in the interpolant in a coarse way, just with respect to their
kind, existential or universal, without taking dependencies on their order
into account. The third item states in essence that whenever for first-order
inputs there is a ground interpolant of the respective clausifications whose
formula has a certain structure, then there is a first-order interpolant of
the original inputs whose matrix has the same structure.

\section{Interpolant Lifting: Related Work}
\label{sec-lift-related}

The interpolant lifting of Procedure~\ref{proc-cti-fol} by replacing terms in
a ground interpolant $H_{\ig}$ with fresh variables $z_1,\ldots,z_n$ and
prepending a quantifier prefix $Q_1 z_1 \ldots Q_n z_n$ whose ordering
respects the subterm relationship among the replaced terms has been already
shown in essence by Huang \cite{huang:95}.  Although this interpolant lifting
can be expressed as a simple formula conversion, independently of any
particular calculus, its correctness seems not trivial to prove and subtle
issues arise.  For example, as observed in \cite{kovacs:17}, there is an error
in \cite{huang:95} that concerns equality handling.  Another example is a
version of interpolant lifting developed in \cite{bonacina:15:on:ipol} where
only constants are replaced by variables but which, as indicated in
\cite{bonacina:15:on:ipol}, does not generalize to compound terms in a way
that is compatible with other techniques shown there.  It seems that so far
two proofs for interpolant lifting with respect to compound terms can be found
in the literature: The proof of \cite[Theorem~15]{huang:95} and the proof of
\cite[Lemma~8.2.2]{baaz:11}, seemingly obtained independently.  Interpolant
lifting is called \name{abstraction} in \cite{baaz:11}.  Further discussions
and references can be found in
\cite{bonacina:15:on:ipol,kovacs:17,benedikt:2017}. Our use of lifting and our
correctness proof differs from the related methods and proofs described in the
literature \cite{huang:95,baaz:11,bonacina:15:on:ipol,kovacs:17} in two
important respects:
\begin{enumerate}
\item We apply lifting to a ground formula that actually \emph{is a
  Craig-Lyndon interpolant} of two intermediate ground formulas that relate in
  a certain way to the input sentences.  In contrast, lifting is applied in
  \cite{huang:95} to a so-called \name{relational interpolant} of the original
  input sentences, which is specified like a Craig interpolant, except that
  constraints on the functions need not to be satisfied.  Similarly less
  constrained variants of a Craig interpolant of the original input sentences
  are used as basis for lifting in \cite{baaz:11} (\name{weak interpolant})
  and in \cite{bonacina:15:on:ipol} (\name{provisional interpolant}).

\item Our proof of the correctness of interpolant lifting is \emph{independent
  of a particular calculus}.  The correctness proofs in \cite{huang:95} and
  \cite{baaz:11} are both based on modifying proofs as data structures,
  resolution proofs in the case of \cite{huang:95} and natural deduction
  proofs in the case of \cite{baaz:11}.  We assume more abstractly just
  Herbrand's theorem, expressed in the form that for an unsatisfiable clausal
  first-order formula a \emph{closed ground tableau can be constructed}, where
  terms are formed from input functions, Skolem functions and, if needed, an
  additional constant.  The tableau enters our method as ``given'', where the
  actual way in which it had been constructed is irrelevant.  Provers would
  typically operate on non-ground clauses and hand over a closed non-ground
  tableau which is instantiated to a ground tableau only just before
  extraction of the ground interpolant.  For practical implementation, this
  approach has the advantage that any system which computes a clausal tableau
  for an unsatisfiable first-order formula can be applied unaltered to the
  computation of first-order interpolants.

\end{enumerate}

We now look into the details of some interesting aspects of Huang's result in
\cite{huang:95} in comparison to ours.  There are similarities in the involved
formulas or resolution derivations, respectively, used internally in both
proofs: Huang's proof uses a conversion of the given resolution deduction to
what he calls \name{binary tree} deduction, where each clause is used at most
once. Analogously, in our formulas~$F_{\ie}$ and~$\NG_{\ie}$ of the proof of
Theorem~\ref{thm-lifting} each variable is instantiated to a ground term by
the substitution $\sth$. In Huang's proof, the binary tree deduction is
converted further to what he calls a \name{propositional} deduction, which
correspond to our ground formulas $F_{\ig} = F_{\ie}\sth$ and $\NG_{\ig} =
\NG_{\ie}\sth$.

In \cite{huang:95} equality handling with paramodulation is explicitly taken
into account, which, however, leads to the mentioned error in Huang's lifting
theorem \cite{kovacs:17}.
The proof of \cite[Lemma~8.2.2]{baaz:11} applies just to formulas without
equality.  A possibility to integrate equality handling into our method is
described in Sect.~\ref{sec-cli-related-equality}.

A minor difference between our and Huang's lifting technique is that Huang
orders variables in the quantifier prefix by the \emph{length} of the
associated terms, more constrained than the strict subterm relationship used
here.

In contrast to Huang's method for constructing relational interpolants, the
method of \cite{baaz:11} to construct weak interpolants involves certain cases
where quantified variables are introduced.  In Huang's relational interpolants
free variables are allowed, upon which extra quantifiers will be added after
lifting.  As indicated in \cite[p.~188]{huang:95}, this can be done in an
arbitrary way: the extra quantifiers can be existential or universal, at any
position in the prefix.  In our formalization, the base formulas used for
lifting have to be ground.  The effects described by Huang are subsumed by the
alternate possibilities to instantiate non-ground tableaux delivered by
provers as discussed in Sect.~\ref{sec-cli-related-choices}.

The input formulas in Huang's interpolation method are clausal formulas.  In
the symbolism of the proof of our Theorem~\ref{thm-lifting}, his method
computes interpolants of $\forall \uus_{\ic}\, F_{\ic}$ and $\exists
\vvs_{\ic}\, \lnot \NG_{\ic}$.  The handling of arbitrary first-order formulas
by Skolemization incorporated in our proof needs to be wrapped around Huang's
core theorem, which is, however not difficult: Staying in the symbolism of the
proof of our theorem, the set $\ffs \cup \ggs$ includes the involved Skolem
functions.  An interpolant of $\forall \uus_{\ic}\, F_{\ic}$ and $\exists
\vvs_{\ic}\, \lnot \NG_{\ic}$ in which -- after lifting -- no members of $\ffs
\cup \ggs$ occur is also an interpolant of $F$ and $G$, the original formulas
before Skolemization.

\section{Positive Hyper Tableaux and Interpolation from a Horn Sentence}
\label{sec-tab-constraints}

So far, our interpolant construction based on clausal tableaux applies to
arbitrarily structured closed clausal ground tableaux.  To obtain interpolants
that, in dependence of syntactic properties of the input formulas, have
specific syntactic properties beyond those required from Craig-Lyndon
interpolants, it is useful to consider clausal tableaux with structural
restrictions. Two basic restrictions are specified with the
following definition:
\begin{defn}[Tableau Properties: Regular, Leaf-Only]
  \label{def-tab-properties}
Define the following properties of clausal tableaux:

\sdlab{def-regular} \defname{Regular}: No node has an ancestor
with the same literal label.

\sdlab{def-leaf-only} \defname{Leaf-only} for a set $S$ of pairwise
non-complementary literals: Members of $S$ do not occur as literal labels of
inner nodes.
\end{defn}
\name{Regularity} is a well-known standard notion to avoid redundancies in
tableaux, see, e.g., \cite{handbook:tableaux:letz,handbook:ar:haehnle}.  Any
closed clausal tableaux for some clausal formula can be converted with a
tableau simplification to a regular closed clausal tableau for the same
formula (\cite{letz:habil}, see also Sect.~\ref{sec-access-convert}).
The \name{leaf-only} property can be applied to model constraints on clausal
tableaux that are constructed by ``bottom-up'' methods, as shown with
Definition~\ref{def-phyper} below.  In Sect.~\ref{sec-access-extract} we will
apply it together with a further tableau restriction to essentially simulate
non-clausal tableaux with clausal tableaux.  Any closed clausal tableau can be
transformed to a closed clausal tableau for the same formula that is leaf-only
for a given set of pairwise non-complementary literals, although the required
transformations are potentially expensive (see
Sect.~\ref{sec-access-convert}).

In the introduction we mentioned the important family of methods that can be
understood as constructing a clausal tableaux ``bottom-up'', in a
``forward-chaining'' manner, by starting with positive axioms and deriving
positive consequences, with the hyper tableaux calculus \cite{hypertab} as a
representative.  The following definition, expressed in terms of properties
from Definition~\ref{def-tab-properties}, renders structural constraints that
are typically observed by tableaux constructed with these methods:
\begin{defn}[Positive Hyper Tableau]
  \label{def-phyper}
  A clausal tableau that is regular and leaf-only for the set of all negative
  literals occurring as labels the tableau is called a \defname{positive hyper
    tableau}.
\end{defn}  
In a \emph{closed} positive hyper tableau the leafs are exactly the nodes with
negative literal label.  The term \name{positive hyper tableau} is from
\cite{handbook:ar:haehnle}, where methods that construct such tableaux are
investigated as specializations in a general setting of clausal tableau
methods with selection functions.  Availability of complete methods ensures
that for any unsatisfiable clausal formula a closed positive hyper tableau can
be constructed.  These construction methods typically observe further
constraints that are not modeled in Definition~\ref{def-phyper} since they are
not relevant for extracting interpolants from a given closed tableau. This
includes in particular that variable scopes are only clause-local and that
nodes labeled with a negative literal are always closed (``weakly
connected''), also during construction when the overall tableau is not yet
closed.

To demonstrate how the clausal tableau framework for first-order Craig-Lyndon
interpolation can be applied to derive further properties of constructed
interpolants in dependency of properties of the input formulas we now show
that, if the first interpolation argument is a Horn sentence, then for
arbitrary sentences as second arguments an interpolant that is a Horn sentence
can be constructed.  We first specify some syntactically characterized formula
classes:
\begin{defn}[Formula Classes: Universal, Existential, Positive, Negative, Horn]

  \sdlab{def-univ-ex} A sentence is called \defname{universal}
  (\defname{existential}) if it is a first-order sentence of the form $\Qs F$,
  where $\Qs$ is an individual quantifier prefix with only universal
  (existential) quantifications and $F$ is quantifier-free.

  \sdlab{def-pos-neg} A formula is called \defname{positive}
  (\defname{negative}) if and only if all occurrences of atoms in the formula
  have positive (negative) polarity.

  \sdlab{def-horn} A sentence is called \defname{Horn} if and only if it is a
  first-order sentence of the form $\Qs F$ where~$\Qs$ is a quantifier prefix
  and $F$ is a quantifier-free clausal formula with at most one positive
  literal in each of its clauses.
\end{defn}  

\noindent
Now the claimed property of interpolants where the first argument is Horn
can be made precise as follows:
\begin{thm}[Interpolation from a Horn Sentence]
  \label{thm-ipol-horn}
  Let $F, G$ be first-order sentences such that $F$ is Horn and $F\entails G.$
  Then a Craig-Lyndon interpolant~$H$ of $F$ and $G$ can be constructed such
  that
\begin{enumerate}
\item $H$ is a Horn sentence.
\item If $F$ and $G$ are universal sentences and $\fun{F} \subseteq \fun{G}$,
  then $H$ is a universal sentence.
\item If $F$ and $G$ are existential sentences and $\fun{G} \subseteq
  \fun{F}$, then $H$ is an existential sentence.
\end{enumerate}  
\end{thm}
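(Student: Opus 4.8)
The plan is to prove all three items by constructing the interpolant with the \CTIF method (Procedure~\ref{proc-cti-fol}), but choosing the underlying closed \sided ground tableau to be a \emph{positive hyper tableau} (Definition~\ref{def-phyper}). Such a tableau exists: since $F \entails G$, the clausification $F_{\ic} \land \NG_{\ic}$ of $F$ and $\lnot G$ is unsatisfiable (Proposition~\ref{prop-tab-complete}), and positive hyper tableaux can be constructed for every unsatisfiable clausal formula. The first ingredient is that clausifying a \emph{Horn} sentence yields Horn clauses: Skolemization only substitutes terms for existentially quantified variables and thus leaves the number of positive literals in each clause unchanged, so $F_{\ic}$ is Horn. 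I then invoke Proposition~\ref{prop-ipol-cip-properties}, item~3, which ties the matrix $H$ of the constructed interpolant to the chosen tableau via $H\sigma = \nipol{N}$ for its root $N$; since $\sigma$ merely replaces variables by ground terms, the propositional (clausal) shape of $H$ coincides with that of $\nipol{N}$, reducing item~1 to an analysis of $\nipol{N}$.

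The heart of the argument (item~1) is a structural induction showing that for every node $N$ of the positive hyper tableau, $\nipol{N}$ is logically equivalent to a Horn clausal formula with the same set of literals. I would use that in a closed positive hyper tableau the leaves are exactly the nodes with negative literal label, so every non-root inner node carries a positive literal and every leaf carries a negative literal. In the leaf cases of Definition~\ref{def-ipol} this forces each leaf to contribute $\true$, $\false$, a single negative literal (when $\nside{N}=\aaa$, $\nside{\ntgt{N}}=\bbb$, so $\nipol{N}=\nlit{N}$), or a single \emph{positive} atom (when $\nside{N}=\bbb$, $\nside{\ntgt{N}}=\aaa$, so $\nipol{N}=\du{\nlit{N}}$ with $\nlit{N}$ negative)---each a Horn clause. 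For an inner node whose children lie on side $\bbb$, $\nipol{N}=\bigwedge_i\nipol{N_i}$, and a conjunction of Horn formulas is Horn, closing this case by induction without any restriction on the right-hand clauses. For an inner node whose children lie on side $\aaa$, those children instantiate a clause of the Horn formula $F_{\ic}$, which has at most one positive literal; as positive literals occur only on inner nodes, at most one child is inner (contributing, by induction, a Horn formula) while all remaining children are leaves with negative literals (contributing negative literals or $\false$). Hence $\nipol{N}=\bigvee_i\nipol{N_i}$ is the disjunction of one Horn formula with an otherwise purely negative clause, and distributing the disjunction over the conjuncts of the Horn formula leaves at most one positive literal in each resulting clause, so $\nipol{N}$ is again Horn.

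I expect the disjunctive $\aaa$-case to be the main obstacle, and it is precisely where the positive hyper tableau structure is indispensable: without the guarantee that negative literals sit only on leaves, a disjunctive node could combine several matrix-complex children, each carrying its own positive literal, and the distributed CNF would fail to be Horn. A secondary technical point is that the invariant must read ``equivalent to a Horn clausal formula with the same literal set'' rather than ``syntactically Horn'', because $\nipol{N}$ is a nested $\land/\lor$ formula. Preserving the literal set is what lets me reuse the lifting machinery: replacing the interpolant matrix by its equivalent Horn clausal form changes neither $\lit{\cdot}$, $\pred{\cdot}$ nor $\fun{\cdot}$ and preserves equivalence under any quantifier prefix, so by Theorem~\ref{thm-lifting} the lifted and prefixed formula is still a Craig-Lyndon interpolant, and by Definition~\ref{def-horn} it is Horn. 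Lifting only renames maximal terms with principal functor in $\ffs$ or $\ggs$ to variables, consistently across duplicated occurrences, and therefore preserves the clausal Horn shape.

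Finally, items~2 and~3 follow from the quantifier-kind analysis recorded in Proposition~\ref{prop-ipol-cip-properties}, items~1 and~2, applied to the same construction. If $F$ and $G$ are universal with $\fun{F}\subseteq\fun{G}$, then the prefix of $F$ contains no existential quantifier and $F$ contributes no Skolem functions, so $\fun{F_{\ic}}=\fun{F}\subseteq\fun{\NG_{\ic}}$, i.e.\ $\ffs=\emptyset$; hence no existential quantifier occurs in the prefix of $H$ and $H$ is universal by Definition~\ref{def-univ-ex}. Dually, if $F$ and $G$ are existential with $\fun{G}\subseteq\fun{F}$, I would place the auxiliary constant $k$ into $\ffs$ rather than $\ggs$ (the alternative explicitly permitted in Procedure~\ref{proc-cti-fol}); then $\fun{\NG_{\ic}}=\fun{G}\subseteq\fun{F_{\ic}}$ gives $\ggs=\emptyset$, so no universal quantifier occurs and $H$ is existential. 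Since the Horn property of item~1 depends only on the chosen positive hyper tableau and not on the placement of $k$ or on the quantifier kinds, the single constructed interpolant satisfies item~1 together with item~2 or item~3 as applicable.
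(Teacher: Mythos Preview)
Your proposal is correct and follows essentially the same approach as the paper: both construct the interpolant via the \CTIF method on a closed positive hyper ground tableau, prove by structural induction that $\nipol{N}$ is (equivalent to) a Horn ground formula using the leaf-only property and the Horn shape of $F_{\ic}$, and then invoke Proposition~\ref{prop-ipol-cip-properties} for items~2 and~3. Your treatment is in fact slightly more explicit than the paper's in a few places---you spell out all four leaf cases, you make the ``same literal set'' invariant explicit (which is what justifies that the Craig-Lyndon conditions survive the conversion to Horn clausal form), and you note the placement of $k$ for item~3---but these are refinements within the same line of argument.
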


\begin{proof}
  Let \name{is essentially a Horn ground formula} stand for \name{is a Horn
    ground formula or can be converted to an equivalent Horn ground formula by
    distributing disjunction upon conjunction}.  Existence of a closed
  two-sided positive hyper ground tableau for any clausification results of
  $F$ and $\lnot G$ follows from the completeness of proving methods that
  construct positive hyper tableaux.  Since $F$ is Horn, it can be clausified
  such that the respective clausal formula is Horn.  The theorem then follows
  from Proposition~\ref{prop-ipol-cip-properties} since, as we will show,
  if~$N$ is the root of a closed two-sided positive hyper ground tableau for
  clausal formulas~$F_\aaa$ and~$F_\bbb$ where~$F_\aaa$ is Horn, then the
  formula $\nipol{N}$ is a essentially a Horn ground formula.  We prove the
  latter claim by showing with induction on the tableau structure the
  following more general statement:
  \[\begin{arrayprf}
(\text{IP}) &
\text{For all nodes } N \text{ of a closed two-sided clausal positive hyper
  ground tableau}\\
& \text{for clausal formulas } F_\aaa \text{ and } F_\bbb
  \text{ where } F_\aaa \text{ is Horn
    the formula } \nipol{N}
  \text{ is}\\
& \text{essentially a Horn ground formula.}
\end{arrayprf}
\]
  For the base case where $N$ is a leaf it is immediate from
  Definition~\ref{def-ipol} that $\nipol{N}$ is a ground literal or a truth
  value constant and thus obviously a Horn ground formula.  To show the
  induction step, let $N$ be an inner node with children $N_1,\ldots,N_n$
  where $n \geq 1$.  As induction hypothesis assume that for all $i \in
  \{1,\ldots,n\}$ it holds that $\nipol{N_i}$ is essentially a Horn ground
  formula. We prove the induction step by showing that then also $\nipol{N}$
  is essentially a Horn ground formula:
  \begin{itemize}
    \item Case $\nside{N_1} = \aaa$: Observe that since the tableau is
      leaf-only for all negative literals it holds in this case for all $i \in
      \{1,\ldots,n\}$ such that $\nlit{N_i}$ is negative that either
      $\nipol{N} = \false$ or $\nipol{N}$ is a negative ground literal.
  \begin{itemize}
    \item Case $\nclause{N}$ is negative: Then $\nipol{N} = \bigvee_{i=1}^n
      \nipol{N_i}$ is a disjunction of negative ground literals, hence a Horn
      ground formula.
    \item Case $\nclause{N}$ is not negative: Since $F_\aaa$ is Horn,
      $\nclause{N}$ has exactly one child whose literal label is positive. Let
      $N_j$ with $j \in \{1,\ldots,n\}$ be that child.  By the induction
      hypothesis $\nipol{N_j}$ is essentially a Horn ground formula.  Since,
      as observed above, for all $i \in \{1,\ldots,n\} \setminus \{j\}$ the
      formula $\nipol{N_i}$ is $\false$ or a negative ground literal it
      follows that $\nipol{N} = \bigvee_{i=1}^n \nipol{N_i}$ is essentially a
      Horn ground formula.
  \end{itemize}
  
  \item Case $\nside{N_1} = \bbb$:  From the induction
  hypothesis it follows that $\nipol{N} = \bigwedge_{i=1}^n \nipol{N_i}$
  is essentially a Horn ground formula.
  \qed
  \end{itemize}
\end{proof}  

As we have seen with Theorem~\ref{thm-ipol-horn}, the framework for
interpolation based on clausal tableaux allows to prove the existence of
interpolants that meet certain constraints quite easily and, moreover, in a
constructive way that can be realized directly by practical automated
reasoning systems. An apparently weaker property has been shown in
\cite[\S~4]{mcnulty:uhorn} with techniques from model theory: For \emph{two}
universal Horn formulas there exists a universal Horn formula that is like a
Craig interpolant, except that function symbols occurring in it are not
constrained.

\vspace{-5pt} 
\section{Craig-Lyndon Interpolation: Refinements and Issues}
\label{sec-cli-related}

\subsection{Choices in Grounding and Side Assignment}
\label{sec-cli-related-choices} 

Procedure~\ref{proc-cti-fol} for the construction of first-order interpolants
leaves at several stages alternate choices that have effect on the formula
returned as interpolant.
We discuss some of these here, although a thorough investigation of ways to
integrate the exploration and evaluation of these into interpolant
construction seems a nontrivial topic on its own.

The first choice concerns the instantiation of variables in the tableaux
returned by provers. Typically, provers instantiate variables just as much
``as needed'' by the calculus to compute a closed tableau.  To match with our
interpolant lifting technique, variables in the literal labels of such
non-ground tableaux have to be instantiated by ground terms. There are
different possibilities to do so, all yielding a closed tableau for the input
clauses, but leading to different interpolants: A variable can be instantiated
by a term whose functions all occur in both interpolation inputs.  The term
will then occur in the interpolant.  Alternatively, the variable can be
instantiated by a term with a principal functor that has been introduced at
Skolemization, either of the first or of the second input formula, or that
occurs just in one the input formulas.  In the procedure description the fresh
constant~$k$ that is handled like a Skolem constant in the second input
formula has been introduced to have such a term available in any case.  By
interpolant lifting the term will then be replaced by a variable whose kind,
existential or universal, depends on the principal functor of the term, and
whose quantifier position in the prefix is constrained by subterms.  Of
course, also a combination of these two ways is possible, that is,
instantiating with a term whose principal functor occurs in both input
formulas but which has subterms with a functor that does not occur in one of
the input formulas.

Aside of these alternate possibilities that concern the instantiation of each
variable individually, there are also choices to instantiate different
variables by the same term or by different terms: Arbitrary subsets of the
free variables of the literal labels of the tableau can be instantiated with
the same ground term, leading in the interpolant to fewer quantified variables
but to more variable sharing.  In the description of
Procedure~\ref{proc-cti-fol} the fresh constant~$k$ is uniformly used to
instantiate all variables.

The second possibility for choice concerns the assignment of the
\name{side}~$\aaa$ and~$\bbb$ to tableau clauses.  Existing systems for
tableau construction typically would require changes to their internal data
structures to maintain such side information, which is undesirable. However,
assuming that sides are associated with the given input clauses, such systems
can be actually used unaltered to construct a \sided clausal tableau: Sides
can then be assigned to the clauses of the returned tableau ``in retrospect'',
by matching against the input clauses.  In some cases there are choices: A
tableau clause can be an instance of some input clause with side~$\aaa$ as
well as of some input clause with side~$\bbb$, or a clause can be present in
copies for each side.  In these cases it is possible to assign either side to
the tableau clause, where both assignments may lead to different interpolants.

\subsection{Goal-Sensitivity} 

Model elimination and the connection method are goal sensitive: They construct
a clausal tableau by starting with a clause from a designated subset of the
input clauses, the ``goal clauses''. Without loss of completeness the set of
negative clauses can, for example, be taken as goal clauses, or, if a theorem
is to be proven from a consistent set of axioms, the clauses representing the
(negated) theorem. It remains to be investigated what choices of goal clauses
are particularly useful for the computation of interpolants.

\subsection{Equality Handling}
\label{sec-cli-related-equality}

So far we considered only first-order logic \emph{without equality}.
Nevertheless, our method to compute interpolants can be used together with the
well-known encoding of equality as a binary predicate with axioms that express
its reflexivity, symmetry and transitivity as well as axioms that express
substitutivity of predicates and functions.
If the input formulas of interpolant computation involve equality, these
axioms have to be added.
The clauses of substitutivity axioms for predicate or function symbols that
occur only in the first (second) input formula then receive side~$\aaa$
($\bbb$).  The side of clauses of axioms for reflexivity, symmetry and
transitivity can be assigned arbitrarily, including the possibility to have
two copies of the clauses, one for each side.

For relational formulas, more can be said about the polarity in which equality
may occur in the interpolant in cases where it occurs only in the first
(second) of input formula: Then the clauses of axioms for reflexivity,
symmetry and transitivity can be assigned to the corresponding side to ensure
that in the computed interpolant equality only occurs positively
(negatively). This follows from the ``Lyndon property'', the condition that
predicates occur in the interpolant only in polarities in which they occur in
both input formulas, since in substitutivity clauses for predicates, which are
then the only clauses with equality literals whose side is $\bbb$ ($\aaa$),
equality only occurs negatively. Stronger possible constraints on interpolants
with respect to equality are stated in an interpolation theorem due to
Oberschelp and Fujiwara (see \cite{motohashi:84}).

The example from \cite{kovacs:17} to demonstrate the mentioned error in
\cite{huang:95} in presence of equality is finding an interpolant of $\rk(\ak)
\neq \rk(\bk)$ and $\ak \neq \bk$: Huang's proof would imply $\exists x\, x
\neq \rk(\bk) \entails \ak \neq \bk$, which does not hold in general. With our
suggested encoding we would obtain $\ak \neq \bk$ as ground interpolant of the
ground formulas $\rk(\ak) \neq \rk(\bk) \land (\ak \neq \bk \lor \rk(\ak) =
\rk(\bk))$ and $\ak \neq \bk$, and, because lifting has no effect, also
correctly as interpolant of the original inputs $\rk(\ak) \neq \rk(\bk)$ and
$\ak \neq \bk$.

\vspace{-6pt} 
\subsection{Preprocessing for Interpolation}
\label{sec-preproc}

Sophisticated preprocessing is a crucial component of automated reasoning
systems with high performance. While formula simplifications such as removal
of subsumed clauses and removal of tautological clauses preserve equivalence,
others only preserve unsatisfiability.  For example, \name{purity
  simplification}, that is, removal of clauses that contain a literal with a
predicate that occurs only in a single polarity in the formula.  Many
simplifications of the latter kind actually preserve not just
unsatisfiability, but, moreover, equivalence \emph{with respect to a set of
  predicates}, or, more precisely, a second-order equivalence
\begin{equation}
\label{eq-so-simp}
\exists p_1 \ldots \exists p_n\, F\; \equiv\;
\exists p_1 \ldots \exists p_n\, \f{simplify}(F),
\end{equation}
where $\f{simplify}(F)$ stands for the result of the simplification operation
applied to~$F$. One might say that the \emph{semantics of the predicates not
  in $\{p_1,\ldots,p_n\}$ is preserved} by the simplification. For the
computation of Craig-Lyndon interpolants it is possible to preprocess the
first as well as the negated second input formula independently from each
other in ways such that the semantics of the predicates occurring in both
formulas is preserved in this sense.  Preprocessors that support
simplification operations that can be parameterized with a set of predicates
whose semantics has to be preserved (see \cite[Section~2.5]{cw-pie} for a
discussion) can be applied for that purpose.

For clausal tableau methods some of these simplifications are particularly
relevant as they complement tableau construction with techniques which break
apart and join clauses and may thus introduce some of the benefits of
resolution.
Techniques for propositional logic that preserve
equivalence~(\ref{eq-so-simp}) for certain sets of predicates include variable
elimination by resolution and blocked clause elimination.  For first-order
generalizations of these, the handling of equality seems the most difficult
issue.  Predicate elimination can in general introduce equality also for
inputs without equality.  In a semantic framework where the Herbrand universe
is taken as domain this can be avoided to some degree, as shown in
\cite{cw-skp} with a variant of the SCAN algorithm \cite{scan} for predicate
elimination.  Blocked clause elimination in first-order logic
\cite{blocked:fol} comes in two variants, for formulas without and with
equality, respectively.

\vspace{-6pt} 
\subsection{Issues Related to Definer Predicates} 
Another way to utilize equivalence~(\ref{eq-so-simp}) is by \emph{introducing}
fresh ``definer'' predicates for example by structure-preserving normal forms
such as the Tseitin transformation and first-order generalizations of it
\cite{scott:twovars,tseitin,eder:def:85,plaisted:greenbaum}.  Actually, in our
approach to compute access interpolants this will play an important role.  If
disjoint sets of definer predicates are used for the first and for the second
interpolation input, then, by the definition of \name{Craig-Lyndon
  interpolant}, definer predicates do not occur in the interpolant.  In
certain situations, which need further investigation, it might be useful to
relax this constraint. For example, if in preprocessing two definers whose
associated subformulas are equal should be identified, even if one was
introduced for the first and the other for the second interpolation input.
Another example would be allowing definers occurring in the interpolant in
cases where this permits a condensed representation of a formula whose
equivalent without the definers would be much larger but straightforward to
obtain.

\vspace{-6pt} 
\subsection{Implementation -- Current State} 
\label{sec-cli-implem}
The \name{PIE} system \cite{cw-pie} includes an implementation of the
described approach to Craig-Lyndon interpolation.  Currently the
goal-sensitive first-order prover \name{CM} included with \name{PIE} is
supported as underlying theorem prover. Support for using also \name{Hyper}
\cite{cw-ekrhyper,cw-krhyper,hyper:2013} in that role has been
implemented in part.  The
clausal tableaux used for interpolant extraction are represented as Prolog
terms, providing a potential interface also to further provers.
Configurable preprocessing which respects preservation of predicate semantics
as required for interpolation is included.  \name{Symmetric} interpolation
\cite[Lemma~2]{craig:uses} (the name is due to \cite{mcmillan:symmetric}) with
consideration of predicate polarity is implemented as iterated interpolation
with two inputs.  Other implementations of interpolation will be discussed in
Sect.~\ref{sec-implem-qr} in the context of query reformulation.

\vspace{-6pt} 
\section{Access Interpolation with Clausal Tableaux: Overview and Basic Notions}
\label{sec-ai}

Access interpolation \cite{benedikt:book} is a recently introduced form of
interpolation for applications in query reformulation where the two input
formulas as well as the computed interpolants are in a fragment of first-order
logic, \name{first-order logic with relativized quantifiers}.  This fragment
allows to associate a \defname{binding pattern}, or \name{access pattern},
with each atom occurrence: a representation of its polarity, of its predicate,
and of a division of argument positions into \name{input} and \name{output
  positions}.  The technical framework for access interpolation has been
developed in \cite{benedikt:book} on the basis of Smullyan's non-clausal
tableaux \cite{smullyan:book:68,fitting:book}, which follow the formula
structure, proceeding from the overall input into subformulas, which allows to
integrate relativized quantifiers whose scope is a subformula in an elegant
way.  This correspondence to the formula structure is as such not available in
clausal tableaux, obtained after clausification, Skolemization and with
techniques targeted at automated processing that follow inner connection
structures instead of the formula structure.
The basic approach adopted here is to ``simulate'' aspects of Smullyan's
tableaux by clausal tableaux as much as needed for the extraction of
interpolants that are in the target fragment with relativized quantifiers.
This is achieved by first converting the input formulas into a structure
preserving normal form. Then there are two ways to proceed, which we will both
consider: The first is to compute a closed clausal tableau that is constrained
in a particular way such that an access interpolant can be extracted. The
second is to compute an arbitrary closed clausal tableau and convert it such
that it meets the constraints required to extract an access interpolant.
With the following Definitions~\ref{def-rqfo-main}--\ref{def-ai} we
recapitulate precise notions underlying access interpolation, adapted from
\cite{benedikt:book}:
\begin{defn}[\RQFO Formula]
  \label{def-rqfo-main}
  \
  
\sdlab{def-rqfo} The formulas of \name{first-order logic with relativized
  quantifiers}, briefly \name{\RQFO formulas}, are the relational formulas
that are generated by the grammar
\[\begin{array}{rcl}
F & ::= & \true \mid \false \mid F \land F \mid F \lor F \mid \forall \vs\, (\lnot R \lor F) \mid \exists \vs\, (R \land F),
\end{array}
\]
where in the last two grammar rules $\vs$ matches a (possibly empty) set of
variables and $R$ matches a relational atom in which all members of $\vs$
occur.

\smallskip

\sdlab{def-rqfo-neg} If $F$ is an \RQFO formula, then $\lnot F$ denotes the
\RQFO formula obtained from rewriting~$\lnot F$ exhaustively with equivalences
that propagate negation inwards, that is: $\lnot \true \equiv \false$; $\lnot
\false \equiv \true$; $\lnot (F \land G) \equiv \lnot F \lor \lnot G$; $\lnot
(F \lor G) \equiv \lnot F \land \lnot G$; $\lnot \forall \vs (\lnot A \lor F)
\equiv \exists \vs (A \land \lnot F)$; $\lnot \exists \vs (A \land F) \equiv
\forall \vs (\lnot A \lor \lnot F)$.
\end{defn}

\newcommand{\POLA}{\pi}

\begin{defn}[Binding Pattern and Related Notions]
\label{def-binding-pattern-main}

\sdlab{def-bpatt} A \defname{\bpatt} is a triple $\la \mathit{Sign},
\mathit{Predicate}, \mathit{InputPositions}\ra$, where $\mathit{Sign} \in
\{+,-\}$, $\mathit{Predicate}$ is a predicate and $\mathit{InputPositions}$ is
a set of numbers larger than or equal to~$1$ and smaller than or equal to the
arity of $\mathit{Predicate}$.  A \bpatt with sign~$+$ ($-$) is called
\defname{existential} (\defname{universal}).

\smallskip

\sdlab{def-covered} A \bpatt $\la S, P, O\ra$ is \defname{covered} by a
\bpatt $\la S^\prime, P^\prime, O^\prime\ra$ if and only if $S = S^\prime$, $P = P^\prime$ and $O^\prime
\subseteq O$. A set $B$ of \bpatts is \defname{covered} by a set of
\bpatts $B^\prime$ if and only if each member of $B$ is covered by some member of
$B^\prime$.

\smallskip

\sdlab{def-f-bpatts} The \defname{\bpatts}~$\bpp{F}$ of an \RQFO formula
$F$ 
is a set of \bpatts defined inductively as follows:
\vspace{-10pt}
\[
\begin{array}{r@{\hspace{0.5em}}c@{\hspace{0.5em}}l}
\bpp{\true}\; \eqdef\; \bpp{\false} & \eqdef & \{\}.\\
\bpp{G \land H}\; \eqdef\; \bpp{G \lor H} & \eqdef & \bpp{G} \cup \bpp{H}.\\
\bpp{\forall \vs\, (\lnot r(t_1,\ldots,t_n) \lor G)} & \eqdef
 & \{\la {-}, r, \{i \mid t_i \notin \vs \} \ra \} \cup \bpp{G}.\\
\bpp{\exists \vs\, (r(t_1,\ldots,t_n) \land G)} & \eqdef
 & \{\la {+}, r, \{i \mid t_i \notin \vs\}\} \ra \} \cup \bpp{G}.\\
\end{array}
\]
\end{defn}

\smallskip

\noindent
For example, if $F =\forall x\, (\lnot \rk(x) \lor \exists y \exists z\,
(\sk(x,y,z) \land \true))$, then $\bpp{F} = \{\la {-}, \rk, \{\}\ra,\; \la {+},
\sk, \{1\} \ra\}$.

\begin{defn}[Access Interpolant]
\label{def-ai}
Let $F, G$ be \RQFO sentences such that $F \entails G$. An \defname{access
  interpolant of $F$ and $G$} is an \RQFO sentence $H$ such that
\begin{enumerate}
\item \label{cond-ai-sem} $F \entails H \entails G$.
\item \label{cond-ai-lyndon} $\pred{H} \subseteq \pred{F} \cap \pred{G}$.
\item \label{cond-ai-relat} Every existential \bpatt of $H$ is covered by an
  existential \bpatt of $G$.  Every universal \bpatt of $H$ is covered by a
  universal \bpatt of $F$.\footnote{Compared to
    \cite[Thm.~3.12]{benedikt:book} in this definition of \name{access
      interpolant} from the condition~(\ref{cond-ai-relat}.) the explicit
    requirements that the predicate of an existential (universal) \bpatt of
    $H$ occurs positively in $F$ (negatively in $G$) have been dropped because
    these are already implied by condition~(\ref{cond-ai-lyndon}.).}
\item \label{cond-ai-const} $\const{H} \subseteq \const{F} \cap \const{G}$.
\end{enumerate}
\end{defn}

Our approach to compute access interpolants with clausal tableau resides on a
\name{structure preserving}, also called \name{definitional}, normal form
\cite{scott:twovars,tseitin,eder:def:85,plaisted:greenbaum} for clausifying
the two input \RQFO formulas.  Auxiliary ``definer'' predicates for
subformulas are introduced there. By using disjoint sets of definer predicates
for the conversion of each of the two input formulas it is ensured that
definer predicates do not occur in interpolants.  The normalization yields
only clauses of certain specific forms.
To specify the subformula definers we use the following common notions of
\name{subformula position} and \name{subformula at a position}, specialized to
\RQFO formulas by considering the relativizer literals not as subformulas on
their own but as belonging to the associated quantifications:
\begin{defn}[Position within an \RQFO Formula]
\ 

\sdlab{def-position} A \defname{position} of a subformula occurrence within an
\RQFO formula is a finite sequence of integers.

\smallskip

\sdlab{def-poss} The \defname{positions}~$\poss{F}$ of an \RQFO formula is a
set of positions defined inductively as follows: If $F$ is $\true$ or
$\false$, then $\poss{T} \eqdef \{\emptyseq\}$, if $F$ is of the form $F_1
\land F_2$ or $F_1 \lor F_2$, then $\poss{F} \eqdef \{\emptyseq\} \cup \{1p
\mid p \in \poss{F_1}\} \cup \{2p \mid p \in \poss{F_2}\}$ and if $F$ is of
the form $\forall \vs\, (\lnot R \lor F_1)$ or $\exists \vs\, (R \land F_1)$,
then $\poss{F} \eqdef \{\emptyseq\} \cup \{1p \mid p \in \poss{F_1}\}$.

\smallskip

\sdlab{def-subform-at} The \defname{subformula at} position~$p$ in an \RQFO
formula~$F$, in symbols $\subform{F}{p}$ is defined inductively as
$\subform{F}{\emptyseq} \eqdef F$; $\subform{F_1 \binop F_2}{iq} \eqdef
\subform{F_i}{q}$ for $\binop \in \{\land, \lor\}$ and $i \in \{1,2\}$;
$\subform{\forall \vs\, (\lnot R \lor F_1)}{1q} \eqdef \subform{\exists \vs\,
  (R \land F_1)}{1q} \eqdef \subform{F_1}{q}$.
\end{defn}
We assume a total order on the set of all variables, called the
\defname{standard order of variables}.  The following definition specifies
structure preserving conversions of \RQFO formulas that yield conjunctions of
first-order formulas of certain shapes.

\begin{defn}[Definitional Form of an \RQFO Formula]
  \label{def-defform} Let $F$ be an \RQFO formula.
  
\sdlab{def-sfd} For all positions $p \in \poss{F}$ let $\dx_p$ denote the
sequence of the members of $\var{\subform{F}{p}}$ ordered according to the
standard order of variables and let $D_p$ denote the atom $\f{d}_p(\dx_p)$,
where $\f{d}_p$ is a fresh predicate, also called a \defname{definer
  predicate}.  For all positions $p \in \poss{F}$ define the sentence
$\defp{p}$ depending on the form of $\subform{F}{p}$ as shown in the
following table:
\[
\begin{array}{l@{\hspace{1em}}l}
\subform{F}{p} & \defp{p}\\\midrule
\true & D_p \imp \true\\
\false  & D_p \imp \false\\
G \land H & \forall \dx_p\, (D_p \imp (D_{p1}\land D_{p2} ))\\
G \lor H & \forall \dx_p (D_p \imp (D_{p1} \lor D_{p2}))\\
\forall \xs\, (\lnot R \lor G) & \forall \dx_p\, (D_p \imp \forall \xs\, (\lnot R \lor D_{p1}))\\
\exists \xs\, (R \land G) & \forall \dx_p\, (D_p \imp \exists \xs\, (R \land D_{p1}))
\end{array}
\]

\smallskip

\sdlab{def-sfd-pos-neg} Define the following formula:
\[
\begin{array}{r@{\hspace{0.5em}}c@{\hspace{0.5em}}l}
  \DEFP{F} & \eqdef & \ddp_{\emptyseq}(\xs_{\emptyseq}) \land \bigwedge_{p
    \in \poss{F}} \defp{p}.\\
\end{array}
\]
\end{defn}

\medskip

\noindent
Structural normal forms that are like Definition~\ref{def-sfd} based on of
implications instead of equivalences are known as \name{Plaisted-Greenbaum
 form} \cite{plaisted:greenbaum}.  The semantic relationship between a
formula and its definitional form as specified in Definition~\ref{def-defform} is
captured by a second-order equivalence, which is easy to verify
with Ackermann's lemma \cite{ackermann:35,dls}:
\begin{prop}[Semantic Properties of the Definitional Form of an \RQFO Formula]
\label{prop-sem-def-nf}
Let~$F$ be an \RQFO formula, let $\{p_0, \ldots, p_n\} = \poss{F}$ and let
$\ddp_{p_1}, \ldots, \ddp_{p_n}$ be definer predicates as specified in
Definition~\ref{def-defform}. Then
\[\begin{array}{r@{\hspace{0.5em}}c@{\hspace{0.5em}}l}
F & \equiv & \exists
\ddp_{p_1} \ldots \exists \ddp_{p_n}\, \DEFP{F}.
\end{array}
\]
\end{prop}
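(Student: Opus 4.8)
The plan is to prove the equivalence by \emph{eliminating the definer predicates one at a time} with Ackermann's lemma \cite{ackermann:35}, used in the form: if $\ddp$ is a predicate that does not occur in the formula $B$ and occurs only positively in the formula $\psi$, then
\[\exists \ddp\,\big(\forall \dx\,(\ddp(\dx) \imp B) \land \psi\big)\; \equiv\; \psi',\]
where $\psi'$ is $\psi$ with every atom $\ddp(\tts)$ replaced by the corresponding instance of $B$. First I would record the shape of $\DEFP{F}$: it is the conjunction of the unit atom $D_{\emptyseq}$ with the defining sentences $\defp{p} = \forall \dx_p\,(D_p \imp B_p)$ for $p \in \poss{F}$, where $B_p$ is the consequent read off from the table in Definition~\ref{def-defform} (one of $\true$, $\false$, $D_{p1}\land D_{p2}$, $D_{p1}\lor D_{p2}$, a $\forall$-relativization of $D_{p1}$, or an $\exists$-relativization of $D_{p1}$). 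The key structural observation is that, because the definer predicates are fresh and pairwise distinct, each $\ddp_p$ occurs in $\DEFP{F}$ exactly twice: once negatively, as the antecedent of its own defining implication $\defp{p}$, and once positively, either inside $B_{p'}$ for the parent position $p'$ of $p$ or, when $p = \emptyseq$, inside the unit atom $D_{\emptyseq}$. Moreover $\ddp_p$ never occurs in its own consequent $B_p$. Hence for every $p$ the predicate $\ddp_p$ occurs only positively in the conjunction of all conjuncts of $\DEFP{F}$ other than $\defp{p}$, which is exactly the hypothesis required by Ackermann's lemma.

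Next I would carry out the eliminations, one for each definer predicate $\ddp_p$ with $p \in \poss{F}$, in order of \emph{decreasing length of the position}, i.e. innermost subformulas first and the root $\emptyseq$ last. When $\ddp_p$ is eliminated, Ackermann's lemma removes the conjunct $\defp{p}$ and replaces the unique positive occurrence of $D_p$ (inside $B_{p'}$, or inside the root atom) by $B_p$. The invariant I would maintain is that, at the moment $\ddp_p$ is eliminated, both children of $p$ have already been processed, so that the substituted consequent $B_p$ is \emph{definer-free} and equals the subformula $\subform{F}{p}$. Concretely, after the children of an $\land$-, $\lor$-, or relativized node have been eliminated, the pattern $D_{p1}\binop D_{p2}$ (respectively the relativized $D_{p1}$) has turned into $\subform{F}{p1}\binop\subform{F}{p2}$ (respectively the relativized $\subform{F}{p1}$), which is $\subform{F}{p}$. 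This is a routine induction on positions, uniform over the six cases of Definition~\ref{def-defform}; the leaf cases $\true$/$\false$ give $B_p = \subform{F}{p}$ directly, and the two relativized-quantifier cases are handled exactly like the binary cases, since $\ddp_p$ sits only in the antecedent of $\defp{p}$ and the relativizing atom $R$ contains no definer.

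Finally, after all proper descendants of the root have been eliminated, the remaining formula is $D_{\emptyseq} \land \forall \dx_{\emptyseq}\,(D_{\emptyseq} \imp E)$, where $E$ is definer-free and, by the invariant, $E = \subform{F}{\emptyseq} = F$. A last application of Ackermann's lemma to $\ddp_{\emptyseq}$ replaces the occurrence of $D_{\emptyseq}$ in the unit atom by $E$, leaving $E = F$. This establishes the proposition.

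I expect the main obstacle to be the bookkeeping rather than any deep idea. Two things must be checked. First, that the Ackermann polarity hypothesis is genuinely re-established after every substitution: substituting a consequent $B_p$ into the single positive slot of $D_p$ only ever places the children's definers in positive position there, while their own defining implications keep them negative, so each not-yet-eliminated $\ddp_q$ still occurs only positively outside $\defp{q}$ and not at all inside $B_q$. Second, that the decreasing-length order guarantees that both children of a position are processed before it, so that $B_p$ is definer-free and equal to $\subform{F}{p}$ at the moment $\ddp_p$ is eliminated. A minor additional check is that the argument tuple $\dx_p$ of $D_p$ agrees with the free variables of $B_p$ once the prefix $\forall \dx_p$ is taken into account, so that the predicate substitution underlying Ackermann's lemma is well formed; this is immediate from the definition of $\dx_p$ as the ordered free variables of $\subform{F}{p}$ together with the relativization rules, which bind exactly the quantified variables.
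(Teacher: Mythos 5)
Your proof is correct and follows exactly the route the paper intends: the paper gives no detailed argument for this proposition, merely noting that the equivalence ``is easy to verify with Ackermann's lemma,'' and your innermost-first elimination of the definer predicates, with the invariant that the consequent $B_p$ has already become the definer-free $\subform{F}{p}$ when $\ddp_p$ is eliminated, is precisely the careful working-out of that remark. The polarity check (each $\ddp_p$ negative only in the antecedent of its own $\defp{p}$, positive exactly once elsewhere) is the right and only nontrivial hypothesis to verify.
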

Proposition~\ref{prop-sem-def-nf} allows to express the semantic
requirement~(\ref{cond-ai-sem}.) of the definition of \name{access
  interpolant} (Definition~\ref{def-ai}) in terms of the normalized formulas:
\begin{prop}[Semantic Property of Interpolants for
\RQFO Formulas in Definitional Form]
\label{prop-fhg-normalized}
Let $F, G$ be \RQFO formulas. Let $\{p_i \ldots p_m\} = \poss{F}$, let
$\{q_i \ldots q_n\} = \poss{G}$, and let predicates $\ddp_{p_1}, \ldots
\ddp_{p_m}$ and $\eeq_{q_1}, \ldots, \eeq_{q_n}$ be the definer predicates
introduced with forming $\DEFP{F}$ and $\DEFPNOT{G}$, respectively,
according to Definition~\ref{def-defform}.  Let $H$ be a formula such that $\pred{H}
\cap \{\ddp_{p_1}, \ldots \ddp_{p_m}, \eeq_{q_1}, \ldots, \eeq_{q_n}\} =
\emptyset$.  Then the following statements are equivalent:
\begin{enumerate}
\item $F \entails H \entails G$.
\item $\exists \ddp_{p_1} \ldots \exists \ddp_{p_m}\, \DEFP{F}\; 
\entails\; H\;
\entails\; \lnot \exists \eeq_{q_1} \ldots \exists \eeq_{q_n}\, \DEFPNOT{G}$.
\item $\DEFP{F}\; \entails\; H\; \entails\; \lnot \DEFPNOT{G}$.
\end{enumerate}
\end{prop}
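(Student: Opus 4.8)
The plan is to prove the two biconditionals (1) $\Leftrightarrow$ (2) and (2) $\Leftrightarrow$ (3); chained together these yield the equivalence of all three statements. The first biconditional is a substitution of equivalents on the two sides of the entailment chain, while the second is a manipulation of second-order quantifier prefixes across an entailment, governed entirely by the hypothesis that no definer predicate occurs in $H$.

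For (1) $\Leftrightarrow$ (2) I would invoke Proposition~\ref{prop-sem-def-nf} twice. Applied to $F$ it gives $F \equiv \exists \ddp_{p_1} \ldots \exists \ddp_{p_m}\, \DEFP{F}$, so the left entailment $F \entails H$ of~(1) is interchangeable with the left entailment of~(2). For the right entailment I would apply the same proposition to the \RQFO formula $\lnot G$, which is itself an \RQFO formula by Definition~\ref{def-rqfo-neg}, obtaining $\lnot G \equiv \exists \eeq_{q_1} \ldots \exists \eeq_{q_n}\, \DEFPNOT{G}$ and hence $G \equiv \lnot \exists \eeq_{q_1} \ldots \exists \eeq_{q_n}\, \DEFPNOT{G}$. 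Thus $H \entails G$ is equivalent to the right entailment of~(2). Since replacing a formula by an equivalent one preserves entailment on either side, (1) and (2) are interchangeable.

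For (2) $\Leftrightarrow$ (3) I would treat the two entailments of the chain separately, each time eliminating the second-order prefix by Proposition~\ref{prop-qu-entailment}. For the left entailment, taking the roles of the proposition's two formulas to be $\DEFP{F}$ and $H$, with $\xxs \eqdef \{\ddp_{p_1}, \ldots, \ddp_{p_m}\}$ and $\yys \eqdef \emptyset$, the side condition $\xxs \cap (\pred{H} \cup \fun{H}) = \emptyset$ holds precisely because the hypothesis guarantees that the definer predicates of $F$ do not occur in $H$; hence $\exists \ddp_{p_1} \ldots \exists \ddp_{p_m}\, \DEFP{F} \entails H$ iff $\DEFP{F} \entails H$. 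For the right entailment I would first rewrite $\lnot \exists \eeq_{q_1} \ldots \exists \eeq_{q_n}\, \DEFPNOT{G}$ as $\forall \eeq_{q_1} \ldots \forall \eeq_{q_n}\, \lnot \DEFPNOT{G}$, and then apply Proposition~\ref{prop-qu-entailment} with $H$ and $\lnot \DEFPNOT{G}$ in the roles of its two formulas, $\xxs \eqdef \emptyset$ and $\yys \eqdef \{\eeq_{q_1}, \ldots, \eeq_{q_n}\}$; again the side condition $\yys \cap (\pred{H} \cup \fun{H}) = \emptyset$ holds since the definer predicates of $\lnot G$ are absent from $H$. This yields $H \entails \forall \eeq_{q_1} \ldots \forall \eeq_{q_n}\, \lnot \DEFPNOT{G}$ iff $H \entails \lnot \DEFPNOT{G}$, completing (2) $\Leftrightarrow$ (3).

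The step I expect to require the most care is the right half of (2) $\Leftrightarrow$ (3): one must correctly move the existential second-order prefix out through the negation so that it becomes a \emph{universal} prefix before Proposition~\ref{prop-qu-entailment} is applicable in the form stated, which places existential quantifiers on the antecedent and universal quantifiers on the consequent. A secondary point worth checking is that the polarity-sensitive reading of the side conditions of Proposition~\ref{prop-qu-entailment} causes no trouble here: since each definer predicate is absent from $H$ altogether, rather than merely in one polarity, the required disjointness holds unconditionally. I would also note that disjointness of the two families of definer predicates is not needed for this proposition, as the left and right entailments are handled independently.
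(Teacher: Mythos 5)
Your proof is correct and follows exactly the route the paper intends: the paper states this proposition without an explicit proof, introducing it directly as a consequence of Proposition~\ref{prop-sem-def-nf} (your step (1)$\Leftrightarrow$(2)), and eliminating the second-order prefixes via Proposition~\ref{prop-qu-entailment} (your step (2)$\Leftrightarrow$(3)) is the evident intended completion. Your attention to converting the negated existential prefix into a universal one before applying Proposition~\ref{prop-qu-entailment}, and your observation that the polarity-sensitive side conditions are satisfied trivially because the definer predicates are wholly absent from $H$, are both exactly right.
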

As basis for computing an access interpolant we thus can take a closed \sided
clausal tableau for a clausal form of $\DEFP{F}$ as $F_\aaa$ and a clausal
form of $\DEFPNOT{G}$ as $F_\bbb$. The following lemma specifies the clause
forms obtained and introduces symbolic notation to refer to particular
literals, variables and Skolem functions occurring in them:

\begin{lem}[Definitional Clausification of an RQFO Formula]
\label{lem-input-clauseforms}
Let $F$ be an \RQFO formula.  For all $p \in \poss{F}$ let $\ddp_p$ denote the
definer predicate for $p$ introduced at forming $\DEFP{F}$, let $\dx_p$ denote
the sequence of the members of $\var{\subform{F}{p}}$ ordered according to the
standard order of variables, and let $D_p$ denote the atom
$\f{d}_p(\dx_p)$. For all $p \in \poss{F}$ where $\subform{F}{p}$ is of the
form $\forall x_1 \ldots \forall x_n\, (\lnot R \lor F^\prime)$ or $\exists
x_1 \ldots \exists x_n\, (R \land F^\prime)$ let~$R_p$ denote $R$ and let
$\vout_p$ denote $\{x_1, \ldots, x_n\}$.  For all $p \in \poss{F}$ where
$\subform{F}{p}$ is of the form $\exists x_1 \ldots \exists x_n\, (R \land
F^\prime)$ let $\skf_{\la p, 1\ra}, \ldots, \skf_{\la p, n\ra}$ be fresh
functions and let $\stcr_p$ be the substitution $\{x_1 \mapsto \skf_{\la p,
  1\ra}(\dx_p), \ldots, x_n \mapsto \skf_{\la p, n\ra}(\dx_p) \}$. Then
$\DEFP{F}$ is equivalent to the existential quantification upon Skolem
functions of the universal closure of a clausal formula, where the Skolem
functions are the introduced $\skf_{\la p, i\ra}$ and the clauses are of the
following forms, satisfying restrictions on arguments of atoms and free
variables as indicated:
\[\begin{array}
{r@{\hspace{1em}}l@{\hspace{1em}}l}
\text{No.} & \text{Clause Form} & \text{Restrictions}\\\midrule
1 & D_{\emptyseq} & \text{If } F
\text{ is a sentence, then } \parg{D_{\emptyseq}} = \emptyset\\
2 & \lnot D_p\\
3 & \lnot D_p \lor D_{p1} & \parg{D_{p1}} \subseteq \parg{\lnot D_p}\\
4 & \lnot D_p \lor D_{p2} & \parg{D_{p2}} \subseteq \parg{\lnot D_p}\\
5 & \lnot D_p \lor D_{p1} \lor D_{p2} 
  &  \parg{D_{p1} \lor D_{p2}} \subseteq \parg{\lnot D_p}\\
6 & \lnot D_p \lor \lnot R_p \lor D_{p1}
   &  \parg{D_{p1}} \subseteq \parg{\lnot D_p \lor \lnot R_p},\\
  &&  \parg{\lnot R_p \lor D_{p1}} \subseteq \parg{\lnot D_p} \cup \vout_p\\
7 & \lnot D_p \lor R_p\stcr_p
&  \var{R_p\stcr_p} \subseteq \parg{\lnot D_p}\\
8 & \lnot D_p \lor D_{p1}\stcr_p
  & \var{D_{p1}\stcr_p} \subseteq \parg{\lnot D_p}\\
\end{array}
\]
\end{lem}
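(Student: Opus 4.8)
The plan is to clausify $\DEFP{F}$ conjunct by conjunct. Since $\DEFP{F}$ is already the flat conjunction of the seed atom $D_{\emptyseq} = \ddp_{\emptyseq}(\dx_{\emptyseq})$ with the formulas $\defp{p}$ indexed by $p \in \poss{F}$ (Definition~\ref{def-defform}), no induction on formula structure is required: the work reduces to a case analysis on the shape of $\subform{F}{p}$ as tabulated in Definition~\ref{def-sfd}, together with one Skolemization step. The seed conjunct is already the unit clause $D_{\emptyseq}$ of form~1, and if $F$ is a sentence then $\var{F} = \emptyset$, so its argument sequence $\dx_{\emptyseq}$ is empty and $\parg{D_{\emptyseq}} = \emptyset$, which is the side condition of form~1.

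Next, for each $p$ I would read the clauses off the implicational shape $D_p \imp \Phi_p$ of the body of $\defp{p}$. The $\true$ case gives the tautology $D_p \imp \true$, contributing no clause; the $\false$ case gives $\lnot D_p$ (form~2). A conjunctive body splits by $D_p \imp (D_{p1} \land D_{p2}) \equiv (\lnot D_p \lor D_{p1}) \land (\lnot D_p \lor D_{p2})$ into forms~3 and~4, and a disjunctive body collapses by $D_p \imp (D_{p1} \lor D_{p2}) \equiv \lnot D_p \lor D_{p1} \lor D_{p2}$ into form~5. For the universal relativizer $\forall \vout_p\, (\lnot R \lor G)$, the bound variables $\vout_p$ are disjoint from $\dx_p$, so the inner $\forall \vout_p$ may be pulled outward past $D_p$; taking the combined universal closure, $D_p \imp (\lnot R \lor D_{p1})$ yields the single clause $\lnot D_p \lor \lnot R_p \lor D_{p1}$ of form~6. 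All of these steps are equivalence-preserving and introduce no function symbols.

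The one case requiring genuine work is the existential relativizer $\exists \vout_p\, (R \land G)$ with $\vout_p = \{x_1,\ldots,x_n\}$, where the existentials lie inside the scope of $\forall \dx_p$. Here I would apply Proposition~\ref{prop-second-order-skolemization} $n$ times: because no universal quantifier separates the $x_i$, each can be Skolemized as a function of $\dx_p$ alone, replacing $x_i$ by $\skf_{\la p, i\ra}(\dx_p)$ --- precisely the substitution $\stcr_p$ --- while prepending an existential second-order quantifier over each fresh Skolem function. This is the source of the ``existential quantification upon Skolem functions'' in the statement. The resulting body $D_p \imp (R\stcr_p \land D_{p1}\stcr_p)$ then splits into $\lnot D_p \lor R_p\stcr_p$ and $\lnot D_p \lor D_{p1}\stcr_p$, i.e.\ forms~7 and~8. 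Since all introduced Skolem functions are fresh and pairwise distinct, their existential quantifiers may be pulled to the front of the whole conjunction, and the universal closure of the conjunction of all produced clauses gives the claimed normal form.

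It remains to verify the side conditions on $\parg{\cdot}$, the delicate bookkeeping part of the argument. The governing observation is that for any child position $pi$ each free variable of $\subform{F}{pi}$ is either already free in $\subform{F}{p}$ --- hence a member of $\dx_p = \parg{\lnot D_p}$ --- or is bound by the quantifier at $p$, i.e.\ lies in $\vout_p$, in which case it occurs in $R$ by the grammar of Definition~\ref{def-rqfo-main} and is therefore an argument of $R_p$. This immediately yields the containments for forms~3--5, and, for form~6, both $\parg{D_{p1}} \subseteq \parg{\lnot D_p \lor \lnot R_p}$ and $\parg{\lnot R_p \lor D_{p1}} \subseteq \parg{\lnot D_p} \cup \vout_p$, the output positions being supplied by $\vout_p$. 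For forms~7 and~8 I would additionally use that $\stcr_p$ sends each $x_i$ to a term whose variables all lie in $\dx_p$, so $\var{R_p\stcr_p}$ and $\var{D_{p1}\stcr_p}$ are contained in $\dx_p = \parg{\lnot D_p}$. The main obstacle I anticipate is exactly this bookkeeping: keeping the distinction between the argument-\emph{term} sets $\parg{\cdot}$ and the free-\emph{variable} sets $\var{\cdot}$ straight across the Skolemization, and invoking the grammar constraint that every bound variable of $\vout_p$ occurs in $R$ at precisely the points where such a variable must be charged to the relativizer literal rather than to $D_p$.
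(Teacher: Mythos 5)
Your proposal is correct and follows essentially the same route as the paper, whose entire proof is the one-sentence remark that the clausal form is obtained by common CNF transformation methods with Skolemization applied individually to each implication of the form $\forall \dx_p\, (D_p \imp \exists \xs\, (R \land D_{p1}))$. Your case analysis on the conjuncts of $\DEFP{F}$, the per-implication (inner) Skolemization via Proposition~\ref{prop-second-order-skolemization}, and the bookkeeping of the $\parg{\cdot}$ restrictions are exactly the details the paper leaves implicit.
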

\begin{proof}
The required clausal form would be obtained by common CNF transformation
methods, provided Skolemization is applied individually to each implication of
the form $\forall \dx_p\, (D_p \imp \exists \xs\, (R \land D_{p1}))$.  \qed
\end{proof}
The order within blocks 1--8 of Lemma~\ref{lem-input-clauseforms} corresponds
to the order in which clauses would be obtained by a straightforward CNF
translator applied on the definitional implications in the order displayed in
Definition~\ref{def-sfd}.

The applied variant of Skolemization is \name{inner} Skolemization
\cite{nonnengart:weidenbach:handbook}.  This follows because the universal
quantifications upon $\dx_p$ that precedes the quantification upon the
Skolemized variables $\vs$ is exactly upon the free variables of the argument
formula of the quantification upon $\vs$, that is, $R \land
\ddp_{p1}(\dx_{p1})$. Considering that the arguments of $\dx_p$ are exactly
the free variables of $\subform{F}{p}$ the applied Skolemization also
corresponds to inner Skolemization \emph{with respect to the original formula}
$F$ before translation to definitional form.

\section{Access Interpolant Extraction from Clausal Tableaux}
\label{sec-access-extract}

To permit extraction of access interpolants, clausal tableaux have to satisfy
certain restrictions that are specified with Definition~\ref{def-aitab} below.
Aside of the \name{regular}, \name{closed} and \name{leaf-only} properties,
which have already been specified, a further property is now needed:
\begin{defn}[Contiguous]
  \label{def-contiguous}
A clausal tableau is called \defname{contiguous} for an unordered pair of
literals if and only if whenever both members of the pair occur as literal
labels of two nodes on the same branch, one of the nodes is the parent of the
other.
\end{defn}
The \name{contiguous} property is used to represent relativized quantification
by handling conjuncts in the scope of an existential quantifier
simultaneously, specifically the atom that relativizes the quantified
variables and a second atom with a definer predicate that represents the
argument of the relativized quantification.  For this application the
\name{contiguous} property can be ensured by a tableau simplification,
Procedure~\ref{proc-contig} shown in Sect.~\ref{sec-access-convert}.  We have
now specified all prerequisites to define the constraint package on clausal
tableaux for access interpolation and call tableaux that satisfy it
\name{\acitx}, suggesting \name{ACcess Interpolation}:
\begin{defn}[\ACIT]
  \label{def-aitab}
  Let $F, G$ be \RQFO sentences.  An \defname{\acit for $F$ and $G$} is a
  closed \sided clausal ground tableau for two clausal formulas obtained from
  $F$ and~$G$ by clausifying $\DEFP{F}$ and $\DEFPNOT{G}$ as specified in
  Lemma~\ref{lem-input-clauseforms} that is regular, leaf-only for the
  set of all negative literals that occur as literal labels in it, and
  contiguous for all pairs of ground literals of that occur as literal labels
  in it and have, referring to the notation of
  Lemma~\ref{lem-input-clauseforms}, the form $\{R_p\stcr_p\sti,\;
  D_{p1}\stcr_p\sti\}$ for some position $p$ in $F$ or in $G$ and some ground
  substitution $\sti$.
\end{defn}
Note that an \acit is a special case of a closed positive hyper tableau
(Definition~\ref{def-phyper}).
The specification of the extraction of an access interpolant from a clausal
tableau involves a form of lifting that differs from the lifting described for
Craig-Lyndon interpolants with Procedure~\ref{proc-cti-fol}.  For access
interpolation lifting can not be performed globally on a ground interpolant
but on subformulas that correspond to the scopes of relativized
quantifiers. To specify this form of lifting we need further auxiliary
concepts that concern those occurrences of ground terms in a formula that are
as argument of an atom, in contrast to embedded in another term.  Symbolic
notation for referring to the set of terms with such occurrences as well as
for systematically replacing these occurrences with variables is provided.
Preconditions are made precise under which an entailment relationship between
formulas still holds after such a replacement by variables.
\begin{defn}[Set of Ground Arguments of Atoms]
  If $F$ is a formula, then $\garg{F}$ denotes the set of ground terms in
  $\parg{F}$.
\end{defn}
For example, if $x,y$ are variables and $\ak,\bk$ are constants, then
\[\garg{\forall x\, \pk(\ak,\ggk(\ak),\ggk(\bk),x,\ffk(y,\bk))} = \{\ak,
\ggk(\ak), \ggk(\bk)\}.\]  For
relational formulas~$F$ it holds that $\garg{F} = \const{F} = \fun{F}$.  Based
on $\garg{F}$, we define for injective substitution~$\sigma$ the following
restricted variant of $\revsubst{F}{\sigma}$:
\begin{defn}[Inverse Substitution of Ground Arguments of Atoms]
If $F$ is a formula and $\sigma$ is an injective ground substitution such that
$\rng{\sigma} \subseteq \garg{F}$, then let $\toprevsubst{F}{\sigma}$ denote
$F$ with all occurrences of members~$t$ of $\rng{\sigma}$ that are as argument
of an atom replaced with the variable mapped by $\sigma$ to $t$.
\end{defn}
While in $\revsubst{F}{\sigma}$ occurrences of terms that are not strict
subterms of some other member of $\rng{\sigma}$ are replaced, in
$\toprevsubst{F}{\sigma}$ only occurrences that are arguments of atoms are
replaced. The following proposition relates these two forms of ``inverse
substitution'':

\pagebreak 
\begin{prop}[Inverse Substitution of Arguments of Atoms and of Terms]
\label{prop-toprevsubst}
Let $F$ be a formula in which all non-ground terms are variables, let $\sigma$
be an injective ground substitution such that $\rng{\sigma} \subseteq \garg{F}$
and no member of $\dom{\sigma}$ occurs in $F$, and let $\gamma$ be an
injective substitution such that $\rng{\gamma} = \garg{F}$, no member of
$\dom{\gamma}$ occurs in $F$ and $\gamma|_{\dom{\sigma}} = \sigma$.  Then
\[\toprevsubst{F}{\sigma} = \revsubst{F}{\gamma}\gamma|_{\dom{\gamma} \setminus
\dom{\sigma}}.\]
\end{prop}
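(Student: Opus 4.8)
The plan is to reduce both sides of the asserted identity to an explicit, occurrence-by-occurrence description of which terms get replaced and by what, and then observe that these descriptions coincide. The whole argument hinges on the hypothesis that all non-ground terms of $F$ are variables, which forces the ground terms of $F$ to occur in a very controlled way. Throughout I would use that $\gamma$ is injective with $\rng{\gamma} = \garg{F}$, so $\gamma^{-1}$ is a well-defined map on $\garg{F}$, and that $\gamma|_{\dom{\sigma}} = \sigma$ gives $\gamma^{-1}(t) = \sigma^{-1}(t)$ for every $t \in \rng{\sigma}$.

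First I would establish the key structural fact: the \emph{maximal} occurrences of members of $\rng{\gamma}$ in $F$ — those not lying strictly inside another occurrence of a range term, which are exactly the occurrences rewritten by $\revsubst{\cdot}{\gamma}$ — are precisely the occurrences of ground terms as arguments of atoms. For one direction, an argument of an atom fills an entire argument slot, so no term occurrence strictly contains it; hence every ground atom-argument occurrence is maximal and, being a member of $\garg{F} = \rng{\gamma}$, is rewritten. For the converse, suppose a member $s$ of $\rng{\gamma}$ occurs at a position that is not an atom argument. Then that occurrence is a proper subterm of some term, and passing outward to the smallest atom argument $s'$ strictly containing it, we note that $s'$ has a proper subterm and therefore, since all non-ground terms of $F$ are variables, $s'$ must be ground; thus $s' \in \garg{F} = \rng{\gamma}$, the occurrence of $s$ lies strictly inside a range-term occurrence, so it is not maximal and is not rewritten. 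Consequently $\revsubst{F}{\gamma}$ is exactly $F$ with every ground atom argument $t$ replaced by the variable $\gamma^{-1}(t)$, after which every argument of every atom is a variable.

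Next I would track the effect of post-composing with $\gamma|_{\dom{\gamma} \setminus \dom{\sigma}}$. The new variables introduced by $\revsubst{F}{\gamma}$ are exactly the members of $\dom{\gamma}$, none of which occurred in $F$, and the restricted substitution acts only on those lying in $\dom{\gamma} \setminus \dom{\sigma}$; since the terms substituted back are ground, no further cascading occurs. For a ground atom argument $t$ of $F$ with $t \in \rng{\sigma}$, the variable $\gamma^{-1}(t)$ lies in $\dom{\sigma}$, hence is untouched and remains $\gamma^{-1}(t) = \sigma^{-1}(t)$. For a ground atom argument $t$ with $t \in \garg{F} \setminus \rng{\sigma}$, the variable $\gamma^{-1}(t)$ lies in $\dom{\gamma} \setminus \dom{\sigma}$ and is mapped back to $t$. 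Hence $\revsubst{F}{\gamma}\,\gamma|_{\dom{\gamma} \setminus \dom{\sigma}}$ is $F$ with every atom-argument occurrence of a term $t \in \rng{\sigma}$ replaced by $\sigma^{-1}(t)$ and every other atom argument left unchanged.

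Finally I would compare this with $\toprevsubst{F}{\sigma}$, which by definition is $F$ with exactly the atom-argument occurrences of members $t$ of $\rng{\sigma}$ replaced by $\sigma^{-1}(t)$ and everything else untouched; since the two expressions agree argument position by argument position, they are syntactically identical. The step I expect to be the main obstacle is the converse direction of the structural fact above: one must invoke the "all non-ground terms are variables" hypothesis to exclude a member of $\rng{\sigma}$ occurring as a maximal non-argument subterm — for instance nested inside a ground argument that itself lies in $\garg{F} \setminus \rng{\sigma}$ — since it is precisely there that $\revsubst{\cdot}{\gamma}$ (maximal occurrences anywhere) and $\toprevsubst{\cdot}{\sigma}$ (argument occurrences only) could a priori diverge, yet are reconciled by restoring such outer ground arguments through $\gamma|_{\dom{\gamma} \setminus \dom{\sigma}}$.
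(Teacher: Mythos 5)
Your proposal is correct. Note that the paper itself states Proposition~\ref{prop-toprevsubst} without any proof, treating it as evident from the definitions, so there is no official argument to compare against; your occurrence-by-occurrence verification is a sound way to fill that gap. The two essential points are both present and handled properly: (i) under the hypothesis that all non-ground terms of $F$ are variables, the occurrences rewritten by $\revsubst{F}{\gamma}$ (maximal occurrences of members of $\rng{\gamma}$) coincide exactly with the atom-argument occurrences of ground terms, because any non-argument occurrence of a ground term sits strictly inside an atom argument that has a proper subterm, hence is not a variable, hence is ground and itself in $\garg{F}=\rng{\gamma}$; and (ii) post-composing with $\gamma|_{\dom{\gamma}\setminus\dom{\sigma}}$ restores verbatim precisely those arguments lying in $\garg{F}\setminus\rng{\sigma}$ (together with any members of $\rng{\sigma}$ nested inside them), while arguments in $\rng{\sigma}$ remain the variables $\sigma^{-1}(t)$, which is exactly the effect of $\toprevsubst{F}{\sigma}$. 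Your closing remark correctly isolates where the argument would break without the ``all non-ground terms are variables'' hypothesis, namely a member of $\rng{\sigma}$ occurring maximally but not as an atom argument. The only point you might make explicit is that freshness of $\dom{\gamma}$ relative to $F$ rules out accidental clashes with variables already in $F$, but this is immediate from the hypotheses.
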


\noindent
The following proposition states a variant of
Proposition~\ref{prop-qu-entailment} where occurrences of possibly
\emph{complex} ground terms that themselves are not subterms of other terms
are replaced by quantified variables. We will apply it later to justify
lifting from ground terms introduced through Skolemization to quantified
variables.
\begin{prop}[Inessential Quantifications in Entailments for Terms]
\label{prop-inessential-qu-terms}
Let $F, G$ be formulas in which no non-ground terms with the exception of
variables occur.  Let $\sigma$ be a ground substitution such that
$\rng{\sigma} \subseteq \garg{F}$, $\rng{\sigma} \cap \garg{G} = \emptyset$ and
no member of $\dom{\sigma}$ occurs in $F$ or in $G$. Let $\xxs$ stand for
$\dom{\sigma}$.  Then
\[\exists \xxs\, 
\toprevsubst{F}{\sigma} \entails G\; \text{ if and only if }\; F \entails G.\]
\end{prop}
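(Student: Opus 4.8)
The statement is the first-order, \emph{term}-level counterpart of Proposition~\ref{prop-qu-entailment}: there second-order quantifiers over symbols absent from the opposite formula were shown inessential, whereas here first-order quantifiers over variables that abbreviate ground terms absent, at argument position, from $G$ are shown inessential. The plan is to treat the two directions separately, relying throughout on the observation that, since $\rng{\sigma} \subseteq \garg{F}$ and no member of $\dom{\sigma}$ occurs in $F$, applying $\sigma$ to $\toprevsubst{F}{\sigma}$ reinstates exactly the ground terms that were replaced at top level, so that $(\toprevsubst{F}{\sigma})\sigma = F$. For the direction ``$\exists \xxs\, \toprevsubst{F}{\sigma} \entails G$ implies $F \entails G$'' this identity already suffices: $F = (\toprevsubst{F}{\sigma})\sigma$ is an instance of $\toprevsubst{F}{\sigma}$ whose substitution has domain $\xxs = \dom{\sigma}$, hence $F \entails \exists \xxs\, \toprevsubst{F}{\sigma}$, and chaining this with the hypothesis yields $F \entails G$.

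For the converse, ``$F \entails G$ implies $\exists \xxs\, \toprevsubst{F}{\sigma} \entails G$'', I would first dispose of the quantifier by reduction to Proposition~\ref{prop-qu-entailment}. Introducing fresh constants $\ccs = \{c_x \mid x \in \xxs\}$ and writing $F^{\ccs}$ for $F$ with each top-level occurrence of a term $x\sigma$ replaced by $c_x$, the standard equivalence between first-order existential quantification of a variable and second-order existential quantification of a fresh constant gives $\exists \xxs\, \toprevsubst{F}{\sigma} \equiv \exists \ccs\, F^{\ccs}$, where $\exists \ccs$ is now second-order quantification over these constants. Since the $c_x$ do not occur in $G$, an application of Proposition~\ref{prop-qu-entailment} with the empty second set of symbols (so that $\forall \emptyset\, G$ is just $G$) strips the prefix $\exists \ccs$ and reduces the remaining goal to the implication $F \entails G \Rightarrow F^{\ccs} \entails G$.

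This last implication is where the work lies, and I expect it to be the main obstacle. The plan is a model-theoretic argument: from a model $\mathcal{M}$ of $F^{\ccs}$ construct a model $\mathcal{M}'$ of $F$ that agrees with $\mathcal{M}$ on $G$, so that $F \entails G$ transports truth of $G$ back to $\mathcal{M}$. Concretely, $\mathcal{M}'$ must reinterpret the function symbols so that each abstracted ground term $x\sigma$ denotes the value $c_x^{\mathcal{M}}$, while every ground term occurring in $G$ keeps both its denotation and its predicate behaviour. Here the hypothesis $\rng{\sigma} \cap \garg{G} = \emptyset$ guarantees that no abstracted term is observed at argument position in $G$, and the restriction that all non-ground terms are variables guarantees that $F$ and $G$ can communicate only through ground terms that are \emph{retained}, not abstracted, on both sides. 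The delicate point is the consistency of this reinterpretation in the presence of nested and shared ground subterms: assigning $x\sigma$ a new value can collide with the value forced by another abstracted term that shares a principal functor, or disturb a superterm that occurs inside $G$. I would resolve this by passing to an enlarged domain carrying fresh ``clones'' of the witness values and defining the reinterpretation by recursion on subterm structure, so that a clone copies the predicate behaviour of its original when it stands directly as an atom argument but falls back to the original denotation under any further function application; Proposition~\ref{prop-toprevsubst}, which relates the top-level replacement $\toprevsubst{F}{\sigma}$ to the full abstraction of all ground arguments, is available to keep the bookkeeping between top-level and subterm occurrences in order.
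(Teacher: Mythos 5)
The paper does not actually prove this proposition: it is one of the statements that, per the introduction, ``pertain to the considered logics in general'' and are left without proof. So there is nothing to compare your argument against; I can only assess it on its merits. Your overall architecture is sound. The easy direction via $(\toprevsubst{F}{\sigma})\sigma = F$ is complete and correct, and the reduction of the hard direction to $F^{\ccs} \entails G$ --- Skolemizing the existential variables to fresh constants and stripping the second-order prefix with Proposition~\ref{prop-qu-entailment} --- is also fine. You have correctly located where the real work lies.

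The model construction itself, however, has a concrete gap: cloning only the \emph{witness values} is not enough. Take $\sigma = \{x \mapsto \f{f}(\f{a}),\, y \mapsto \f{f}(\f{b})\}$ with $F = \pk(\f{f}(\f{a})) \land \lnot \pk(\f{f}(\f{b}))$, so $F^{\ccs} = \pk(c_x) \land \lnot \pk(c_y)$, and a model $\mathcal{M}$ of $F^{\ccs}$ in which $\f{a}$ and $\f{b}$ denote the same element while $c_x$ and $c_y$ do not. Adding fresh clones $e_x, e_y$ of $c_x^{\mathcal{M}}, c_y^{\mathcal{M}}$ and demanding $\f{f}(\f{a})^{\mathcal{M}'} = e_x$ and $\f{f}(\f{b})^{\mathcal{M}'} = e_y$ is then impossible, since $\f{f}^{\mathcal{M}'}$ must take a single value on the shared denotation of $\f{a}$ and $\f{b}$. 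The repair is to clone the \emph{entire} relevant term forest apart: introduce a fresh, pairwise distinct element $e_s$ for every ground term $s$ occurring as a subterm of a member of $\garg{F} \cup \garg{G}$, let functions realize the term structure injectively on these clones, and define a surjection $\pi$ back to $|\mathcal{M}|$ sending $e_{x\sigma} \mapsto c_x^{\mathcal{M}}$ and $e_s \mapsto s^{\mathcal{M}}$ otherwise, with predicates on $\mathcal{M}'$ read off through $\pi$. Two further points you leave implicit then need to be made explicit: (i) since the domains differ, the transfer of truth for the \emph{quantified} subformulas of $F$ and $G$ requires an induction using surjectivity of $\pi$, and $\pi$ is deliberately \emph{not} a homomorphism with respect to functions --- it only needs to respect predicates and the denotations of the finitely many ground terms in $\garg{F}\cup\garg{G}$; (ii) this is precisely where the hypothesis that all non-ground terms are variables is used, since it guarantees that function symbols are never applied to quantified elements, so the local redefinition of the functions on the clones suffices. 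Finally, your phrase ``copies the predicate behaviour of its original when it stands directly as an atom argument'' is not a well-formed clause of a structure definition (predicate behaviour attaches to domain elements, not to syntactic positions), and ``falls back to the original denotation'' must mean the $\mathcal{M}$-denotation of the abstracted \emph{term} $x\sigma$, not of the witness $c_x$ --- only the former keeps $G$ invariant.
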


We are now equipped with the prerequisites to specify the extraction of an
access interpolant from an \acit, that is, a constructive mapping from an
\acit for two \RQFO sentences~$F$ and~$G$ such that $F \entails G$ to an access
interpolant of them.  The correctness of the mapping is then stated and proven
as Theorem~\ref{thm-correct-aipol}.
\begin{defn}[Access Interpolant Extraction from an \ACIT]
\label{def-aipol}
Let $F, G$ be \RQFO sentences such that $F \entails G$ and let $T$ be an \acit
for~$F$ and~$G$. For all inner nodes $N$ of $T$ define $\aipol{N}$ inductively
as follows, where $N_1, \ldots, N_k$ with $k \leq 1$ are the children of $N$,
and clause forms are understood as specified in
Lemma~\ref{lem-input-clauseforms}:
\begin{enumerate}[label={\roman*}.,leftmargin=2em]

\item \label{case-unit} Case $\nclause{N}$ is an instance
  of form~1:
  $\aipol{N} \eqdef \aipol{N_1}$.

\item \label{case-and-or} Case $\nclause{N}$ is an instance of
  one of forms~2--5 or~7--8:
  \begin{enumerate}[label={\alph*}.,leftmargin=1.5em,ref={\roman{enumi}.\alph*}]
    \item \label{case-or} Case $\nside{N_1} = \aaa$:
      \[\aipol{N} \eqdef \bigvee_{i = 2}^{k} \aipol{N_i}.\]
    \item \label{case-and} Case $\nside{N_1} = \bbb$:
      \[\aipol{N} \eqdef \bigwedge_{i = 2}^{k} \aipol{N_i}.\]
  \end{enumerate}
  
\item \label{case-quant} Case $\nclause{N}$ is an instance $\lnot D \lor \lnot
  R \lor D'$ of form~6: Since the tableau is closed and regular there is a
  unique ancestor $\ntgt{N_2}$ of $N_2$ with $\nlit{\ntgt{N_2}} =
  \du{\nlit{N_2}}$.
 
  \begin{enumerate}[label={\alph*}.,leftmargin=1.5em,ref={\roman{enumi}.\alph*}] 
    \item \label{case-samecol} 
      Case $\nside{\ntgt{N_2}} = \nside{N_1}$: $\aipol{N} \eqdef \aipol{N_3}$.
      
    \item \label{case-all} Case $\nside{N_1} = \aaa$ and $\nside{\ntgt{N_2}} =
      \bbb$:  Let \[\{t_1, \ldots, t_n\} \eqdef
      \garg{\lnot R} \setminus
      \garg{\DEFP{F} \land
        \nbranchA{N}},\footnote{Definition~\ref{def-aipol}
        is slightly different from a
        straightforward transfer of the corresponding specification in terms
        of tableau rules in \cite{benedikt:book}: In case~(\ref{case-all}) of
        Definition~\ref{def-aipol} the range $\{t_1, \ldots, t_n\}$ of $\stx$
        is specified as a subset of $\garg{\lnot R}$, whereas according
        to \cite[Figure2.8]{benedikt:book} one would expect $\{t_1, \ldots,
        t_n\} = \garg{\lnot R \lor \aipol{N_3}}$ $\setminus$
        $\garg{\DEFP{F} \land \nbranchA{N}}$.  The inclusion of
        $\aipol{N_3}$ on the left side of the $\setminus$ operator would,
        however, actually be redundant.  Analogous considerations hold for the
        case~(\ref{case-ex}).}
      \]
      let $v_1, \ldots, v_n$ be fresh variables, let $\stx \eqdef \{v_1
      \mapsto t_1, \ldots, v_n \mapsto t_n\}$, and define
      \[\aipol{N} \eqdef
      \toprevsubst{\forall v_1 \ldots \forall v_n\, (\lnot R \lor
        \aipol{N_3})}{\stx}.\]
      
    \item \label{case-ex} Case $\nside{N_1} = \bbb$ and $\nside{\ntgt{N_2}} =
      \aaa$: Let \[\{t_1, \ldots, t_n\} = \garg{R} \setminus
      \garg{\DEFPNOT{G} \land \nbranchB{N}},\] let $v_1, \ldots, v_n$ be fresh
      variables, let $\stx = \{v_1 \mapsto t_1, \ldots, v_n \mapsto t_n\}$,
      and define
      \[\aipol{N} \eqdef
      \toprevsubst{\exists v_1 \ldots \exists v_n\, 
        (R \land \aipol{N_3})}{\stx}.\]
\end{enumerate}
\end{enumerate}
\end{defn}
Although base cases are not explicitly distinguished in the inductive
definition of $\aipol{N}$, they are covered by the specification in
Definition~\ref{def-aipol}: If $\nclause{N}$ is an instance of form $2$ of
Lemma~\ref{lem-input-clauseforms}, then $k = 1$ and $\aipol{N} = \false$ or
$\aipol{N} = \true$, respectively.

Correctness of the access interpolant extraction according to
Definition~\ref{def-aipol} is stated with the following theorem:
\begin{thm}[Correctness of Access Interpolant Extraction from an \ACIT] 
\label{thm-correct-aipol}
  Let $F, G$ be \RQFO sentences such that $F \entails G$ and let $N$ be the
  root of an \acit for $F$ and~$G$. Then $\aipol{N}$ is an access interpolant
  of $F$ and $G$.
\end{thm}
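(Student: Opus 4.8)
The plan is to prove, by structural induction on the \acit $T$ proceeding from the leaves upward, a strengthened invariant holding at \emph{every} node $N$ that simultaneously captures the four requirements of Definition~\ref{def-ai}, and then to read off the theorem by specializing to the root. Mirroring the invariant of Lemma~\ref{lem-ground-ipol-correct} but relativizing it to the two definitional forms and accommodating the localized lifting, I would maintain that each $\aipol{N}$ is an \RQFO sentence such that: (a) $\DEFP{F} \land \nbranchA{N} \entails \aipol{N}$ and $\aipol{N} \entails \lnot(\DEFPNOT{G} \land \nbranchB{N})$; (b) $\pred{\aipol{N}} \subseteq \pred{F} \cap \pred{G}$; (c) every existential (universal) \bpatt of $\aipol{N}$ is covered by an existential \bpatt of $G$ (a universal \bpatt of $F$); and (d) every ground argument of an atom in $\aipol{N}$ occurs both in $\garg{\DEFP{F} \land \nbranchA{N}}$ and in $\garg{\DEFPNOT{G} \land \nbranchB{N}}$. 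At the root both branches are empty, so (a) becomes $\DEFP{F} \entails \aipol{N} \entails \lnot \DEFPNOT{G}$, which by Proposition~\ref{prop-fhg-normalized} (its definers absent from $\aipol{N}$ by (b)) yields $F \entails \aipol{N} \entails G$; (c) is exactly condition~(3.); and (d) collapses to $\const{\aipol{N}} \subseteq \garg{\DEFP{F}} \cap \garg{\DEFPNOT{G}} = \const{F} \cap \const{G}$, giving condition~(4.).

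The base cases (form~2 leaves, where $\aipol{N}$ is $\true$, $\false$, or a ground literal) and the propositional and definer cases (forms~1--5 and~7--8, handled by the disjunctive/conjunctive combinations of cases~\ref{case-unit} and~\ref{case-and-or}) propagate the invariant exactly as in Lemma~\ref{lem-ground-ipol-correct}; one additionally notes that the side-local definer predicates never survive into $\aipol{N}$, preserving (b), and that these steps introduce neither new \bpatts nor new ground arguments, so (c) and (d) are immediate. Sub-case~\ref{case-samecol} of the quantifier case, where the relativizer closes against the same side, sets $\aipol{N} \eqdef \aipol{N_3}$ and requires only checking that extending the branch preserves (a)--(d), which it does since no quantifier or argument is added.

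The heart of the argument is the genuine relativized-quantifier case, sub-cases~\ref{case-all} and~\ref{case-ex}; I treat~\ref{case-all} (the universal case, with~\ref{case-ex} dual), where $\aipol{N} = \toprevsubst{\forall v_1 \ldots \forall v_n\,(\lnot R \lor \aipol{N_3})}{\stx}$ and $\{t_1,\ldots,t_n\} = \garg{\lnot R} \setminus \garg{\DEFP{F} \land \nbranchA{N}}$. For the semantic part~(a), the decisive observation is that this set of lifted terms is by definition disjoint from $\garg{\DEFP{F} \land \nbranchA{N}}$, which is precisely the hypothesis needed to invoke the universal-quantifier analogue of Proposition~\ref{prop-inessential-qu-terms} (obtained by contraposition, using that $\toprevsubst{\cdot}{\cdot}$ commutes with negation); this reduces the entailment for the lifted $\aipol{N}$ to the ground entailment $\DEFP{F} \land \nbranchA{N} \entails \lnot R \lor \aipol{N_3}$, and the ground disjunction $\lnot R \lor \aipol{N_3}$ coincides with the ordinary disjunctive combination $\bigvee_i \aipol{N_i}$ (the definer child $N_1$ contributing $\false$, the negative-leaf child $N_2$ contributing $\nlit{N_2} = \lnot R$), so this holds by the induction hypothesis exactly as in Lemma~\ref{lem-ground-ipol-correct}. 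For the binding-pattern condition~(c), I would show that the non-lifted positions of $R$ (which become the input positions of the contributed pattern $\la -, \mathit{pred}(R), O\ra$) include all input positions $O'$ of the originating occurrence $\subform{F}{p}$: by the argument restriction $\parg{\lnot R_p \lor D_{p1}} \subseteq \parg{\lnot D_p} \cup \vout_p$ of form~6 in Lemma~\ref{lem-input-clauseforms} and inner Skolemization, only output (relativized-variable) arguments can lie in $\garg{\lnot R} \setminus \garg{\DEFP{F}\land\nbranchA{N}}$ and hence be lifted, so $O' \subseteq O$ and the pattern is covered by the universal \bpatt of $F$. Conditions (b) and (d) then follow because $\mathit{pred}(R)$ is a shared relational predicate and the surviving ground arguments lie in $\garg{\DEFP{F}\land\nbranchA{N}}$ and, through closure against the $\bbb$-side, also in $\garg{\DEFPNOT{G}\land\nbranchB{N}}$.

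The main obstacle I expect is the bookkeeping that makes the localized lifting sound. Two points are delicate: first, verifying that restricting the lifted set to $\garg{\lnot R}$ rather than $\garg{\lnot R \lor \aipol{N_3}}$ loses nothing --- i.e.\ that any output ground term occurring in $\aipol{N_3}$ but absent from the branch is already among the arguments of $R$ at this node --- which is the redundancy claimed in the footnote to Definition~\ref{def-aipol} and must be derived from how Skolem output terms propagate upward; and second, checking at each application that the hypotheses of Proposition~\ref{prop-inessential-qu-terms} (range contained in $\garg{\cdot}$, disjointness from the opposite side, freshness of the $v_i$) genuinely hold, with Proposition~\ref{prop-toprevsubst} relating $\toprevsubst{\cdot}{\stx}$ to the term-level inverse substitution implicit in separating input from output positions. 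The \emph{regular}, \emph{leaf-only} and \emph{contiguous} constraints packaged in Definition~\ref{def-aitab} are exactly what guarantee that this case analysis is exhaustive and that the target $\ntgt{N_2}$ and the children $N_2, N_3$ are uniquely and correctly situated.
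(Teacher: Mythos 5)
Your overall architecture coincides with the paper's: the paper factors your four-part invariant into the notion of a \emph{weak access interpolant} of the quadruple $\la \DEFP{F}, \nbranchA{N}, \lnot\DEFPNOT{G}, \lnot\nbranchB{N}\ra$ (Definition~\ref{def-wai}) and proves exactly this invariant by induction on the tableau in Lemma~\ref{lem-aipol-invariant}, then derives the theorem via Propositions~\ref{prop-weak-std-ai} and~\ref{prop-fhg-normalized}. Your handling of the root, of clause forms 1--5, and of the relativized-quantifier sub-cases (including the footnote redundancy and the hypotheses of Proposition~\ref{prop-inessential-qu-terms}) matches the paper's in substance.

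There is, however, a genuine gap: you lump clause forms~7 and~8 together with the propositional forms~2--5 and claim they ``propagate the invariant exactly as in Lemma~\ref{lem-ground-ipol-correct},'' with (d) ``immediate'' because ``no new ground arguments'' are introduced. That is precisely where the argument is \emph{not} routine, for two reasons. First, for condition~(a) the invariant is stated relative to $\DEFP{F}$ itself, which contains only the un-Skolemized conjunct $\forall \dx_p\,(D_p \imp \exists \vout_p\,(R_p \land D_{p1}))$, whereas the tableau clause is the Skolemized instance $\lnot D_p\sti \lor R_p\stcr_p\sti$. The induction hypothesis gives $\DEFP{F} \land \nbranchA{N} \land R_p\stcr_p\sti \entails \aipol{N_2}$, and to discharge the Skolem witness against the existential in $\DEFP{F}$ one must re-quantify the Skolem terms, which is sound only if they occur nowhere else in the entailment --- in $\nbranchA{N}$ above their introducers, in $\aipol{N_2}$, in $\DEFP{F}$ or $\DEFPNOT{G}$. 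This is exactly what the paper's Proposition~\ref{prop-aux-skolem-strong} and, crucially, the \emph{contiguity} requirement provide: contiguity forces the two introducer literals $R_p\stcr_p\sti$ and $D_{p1}\stcr_p\sti$ to sit adjacently on the branch, so that at the node just above both of them the fresh Skolem terms are absent and Proposition~\ref{prop-inessential-qu-terms} applies. Your closing remark attributes contiguity only to making ``the case analysis exhaustive,'' which misses its actual role. Second, condition~(d) is not immediate here either: the branch at $N_2$ gains the literal $R_p\stcr_p\sti$, so the induction hypothesis only bounds $\garg{\aipol{N_2}}$ by $\garg{\DEFP{F} \land \nbranchA{N_2}}$, which strictly contains the freshly introduced Skolem terms; one must separately argue (as the paper does) that any member of $\garg{\aipol{N_2}} \cap \garg{R_p\stcr_p\sti}$ is not one of those fresh terms and hence already lies in $\garg{\nbranchA{N}} \cup \garg{\DEFP{F}}$. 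Without these two arguments the induction step for forms~7 and~8 does not go through.
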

Before we can proof this theorem, we need some auxiliary concepts and
propositions. 
An \acit is based on the conjunction of two clausal formulas, each obtained
from one of the two input sentences.  \name{Global position specifiers} allow
to refer unambiguously to each literal occurrence and further items in this
conjunction:
\begin{defn}[Global Position Specifier]
  \label{def-globalpos}
  Consider an \acit for \RQFO sentences $F$ and $G$.  It is a clausal tableau
  for two clausal formulas obtained by clausifying $\DEFP{F}$ and
  $\DEFPNOT{G}$.  For $\XS \in \{\LL, \RR\}$ define $D_{\XS p}$, $R_{\XS p}$,
  $\xxs_{\XS p}$, $\vvs_{\XS p}$, $\skf_{\la {\XS p}, i\ra}$, $\sigma_{\XS p}$
  to denote $D_p$, $R_p$, $\xxs_p$, $\vvs_p$, $\skf_{\la p, i\ra}$,
  $\sigma_p$, respectively, as specified in
  Definition~\ref{lem-input-clauseforms}, in case $\XS = \LL$ referring to $p
  \in \poss{F}$ and clauses obtained from $\DEFP{F}$ and in case $\XS = \RR$
  referring to $p \in \poss{\lnot G}$ and clauses obtained from
  $\DEFPNOT{G}$. Position specifiers of the form $\XS p$, where $\XS \in
  \{\LL, \RR\}$ and $p$ denotes a position as specified in
  Definition~\ref{def-position} are called \defname{global position
    specifiers}. The symbol~$\XS$ is called the \defname{side} of a global
  position identifier~$\XS p$.
\end{defn}

\noindent
To mimic the $\delta$-rule of non-clausal tableaux we specify the notion of
\name{introducer literal} and \name{introducer node} associated with each
``Skolem term'', that is, ground term whose principal functor is a Skolem
function:
\begin{defn}[Ground Term Introducers]
  Let $T$ be an \acit. The \defname{introducer literals} for a ground term
  $\skf_{\la \XS p, i\ra}(\dx_{\XS p})\sti$ occurring in a literal label of a
  tableau node are $R_{\XS p}\stcr_{\XS p}\sti$, and $D_{\XS p1}\stcr_{\XS
    p}\sti$.  An \name{introducer node} for a ground term is a node whose
  literal label is an introducer literal for the term.
\end{defn}
The following proposition shows a relationship of occurrences of Skolem terms
and their introducers that holds for \acitx:
\begin{prop}[Precedence of Ground Term Introducers in \ACITX]
\label{prop-aux-skolem-strong} 
Let~$N$ be a node of an \acit and let $t \in \garg{\nlit{N}}$ where the
principal functor of $t$ is a Skolem function $\skf_{\la \XS p, i\ra}$.  Then
$N$ is an introducer node for $t$ or $N$ has an ancestor that is an introducer
node for $t$.
\end{prop}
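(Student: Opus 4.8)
The plan is to argue by strong induction on the depth of $N$ (its distance from the root), with a case analysis on the clause form from Lemma~\ref{lem-input-clauseforms} of which $\nlit{N}$ is a disjunct; write $\nclause{\nparent{N}} = C\sti$ for the instantiating clause $C$ and the ground substitution $\sti$. The guiding observation is that a Skolem function $\skf_{\la \XS p, i\ra}$ occurs in the clause forms only inside $\stcr_{\XS q}$, hence \emph{only} in the positive literals $R_{\XS q}\stcr_{\XS q}$ of form~7 and $D_{\XS q1}\stcr_{\XS q}$ of form~8 at the matching position $\XS q = \XS p$, and these are exactly the introducer literals. Every other top-level occurrence of the ground term $t$ as an argument of an atom must therefore arise by instantiating a plain \emph{variable} of $C$; by the argument restrictions of Lemma~\ref{lem-input-clauseforms} that variable is already an argument of a negative premise literal $\lnot D_{\XS q}$ (or $\lnot R_{\XS q}$) of $C$. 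Since an \acit is a positive hyper tableau (Definition~\ref{def-phyper}), such a negative premise labels a leaf that, being closed, has by Definition~\ref{def-tab-closed} a target which is a proper ancestor of that leaf, hence of $N$, and which carries the complementary positive atom $D_{\XS q}\sti$ (or $R_{\XS q}\sti$) with the same ground arguments; being shallower, it is handled by the induction hypothesis.

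Carrying this out, I would first dispose of negative labels: if $\nlit{N}$ instantiates $\lnot D_{\XS q}$ (from form~2 or as a premise literal) or $\lnot R_{\XS q}$, then $N$ is a leaf, its target $\ntgt{N}$ is an ancestor whose label is the complementary positive atom, so $t \in \garg{\nlit{\ntgt{N}}}$ and the induction hypothesis yields an introducer node above $N$. For a positive label from forms~3--5 the sibling premise $\lnot D_{\XS q}\sti$ closes against an ancestor labelled $D_{\XS q}\sti$, and the restriction $\parg{D_{\XS q1}} \subseteq \parg{\lnot D_{\XS q}}$ (and likewise for $D_{\XS q2}$) gives $\garg{\nlit{N}} \subseteq \garg{D_{\XS q}\sti}$, placing $t$ on that ancestor. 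Form~6 is the same except that both premises $\lnot D_{\XS q}\sti$ and $\lnot R_{\XS q}\sti$ are used, together with $\parg{D_{\XS q1}} \subseteq \parg{\lnot D_{\XS q} \lor \lnot R_{\XS q}}$, so that $t$ lands on one of the two targeted ancestors. Form~1 is vacuous, since $F$ and $G$ are sentences and $D_{\emptyseq}$ then has no arguments.

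The decisive cases are forms~7 and~8, where $\nlit{N}$ is $R_{\XS q}\stcr_{\XS q}\sti$ or $D_{\XS q1}\stcr_{\XS q}\sti$. Every top-level argument of such a literal is either a term $\skf_{\la \XS q, j\ra}(\dx_{\XS q})\sti$ contributed by $\stcr_{\XS q}$ or the image $y\sti$ of a plain variable $y$. If the occurrence of $t$ is of the first kind, its principal functor $\skf_{\la \XS q, j\ra}$ must be $\skf_{\la \XS p, i\ra}$, forcing $\XS q = \XS p$ and $t = \skf_{\la \XS p, i\ra}(\dx_{\XS p})\sti$; then $\nlit{N}$ is verbatim an introducer literal for $t$ and $N$ itself is an introducer node. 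If $t$ is of the second kind, then by the form~7/8 restrictions $\var{R_{\XS q}\stcr_{\XS q}} \subseteq \parg{\lnot D_{\XS q}}$ and $\var{D_{\XS q1}\stcr_{\XS q}} \subseteq \parg{\lnot D_{\XS q}}$ the variable $y$ lies in $\parg{\lnot D_{\XS q}}$, so $t \in \garg{D_{\XS q}\sti}$ and again sits on the ancestor targeted by $\lnot D_{\XS q}\sti$, where the induction hypothesis applies.

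The point needing the most care is exactly this form~7/8 split: one must be sure that a term with Skolem principal functor $\skf_{\la \XS p, i\ra}$ can be \emph{created}, as opposed to merely copied from a variable, only at a form~7 or~8 clause for the matching position $\XS p$, and that when so created it is literally the introducer term. This rests on inner Skolemization, under which $\stcr_{\XS p}$ applies $\skf_{\la \XS p, j\ra}$ to precisely the full argument sequence $\dx_{\XS p}$ of the governing definer, so that the created term is uniquely $\skf_{\la \XS p, i\ra}(\dx_{\XS p})\sti$; the fact that the same functor might reappear with other arguments as some $y\sti$ is harmless, since that is an inherited variable instance falling under the second kind. Termination of the upward tracing is immediate, as every non-introducer case passes to a strictly shallower ancestor, so the induction on depth is well founded.
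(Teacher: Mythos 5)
Your proof is correct and takes essentially the same route as the paper's: the paper argues that a non-introducer occurrence of $t$ always propagates to an ancestor's literal label (negative labels via closedness and the leaf-only property, positive labels via the argument restrictions of Lemma~\ref{lem-input-clauseforms} passing through the negative premise literal, and forms~7/8 via the introducer-term-versus-inherited-variable dichotomy), concluding by finiteness of branch length. Your strong induction on depth with a case split by clause form is just a repackaging of that same descent, using the same key facts.
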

\begin{proof}
Assume that $t \in \garg{\nlit{N}}$ and $N$ is not an introducer node for
$t$. We show that~$N$ then has an ancestor~$N^\prime$ such that $t \in
\garg{\nlit{N^\prime}}$.  The proposition then follows from finiteness of the
tableau branch length.  Numbers of clause forms refer to
Lemma~\ref{lem-input-clauseforms}. Let $\nparent{N}$ denote the parent of $N$.
Then:

\begin{enumerate}[label={\roman*}.,leftmargin=2em]
\item \label{item-intro-dp} If $\nlit{N}$ is negative, then there must exist
  an ancestor $N^\prime$ of $N$ with $\nlit{N^\prime} = \du{\nlit{N}}$, and
  thus $t \in \garg{\nlit{N^\prime}}$, as claimed.

\item Else, if $\nlit{N}$ is of the form $D_{\XS p}\sti$, then
  $\nclause{\nparent{N}}$ must be an instance of a clause of one of the forms
  2--6. (Form~1 can be excluded since $\parg{D_{\XS\emptyseq}} = \emptyset$,
  contradicting our assumption $t \in \garg{\nlit{N}}$.) In all cases it
  can be verified that $\garg{D_{\XS p}\sti} \subseteq \garg{E}$, where~$E$ is
  the disjunction of the negative literals in $\nclause{\nparent{N}}$. Hence
  $N$ must have a sibling $N''$ with a negative literal label and such that $t
  \in \garg{\nlit{N''}}$. The existence of an ancestor $N^\prime$ of $N$ as
  claimed then follows from~(\ref{item-intro-dp}).
 
\item Else $\nlit{N}$ must be of the form $R_{\XS p}\stcr_p\sti$ or $D_{\XS
  p1}\stcr_p\sti$ and $\nclause{\nparent{N}}$ must be an instance of a
  clause of form~7 or 8.  Because $N$ is not an introducer node for $t$ it
  follows that $t \in \{x\sti \mid x \in \var{R_{\XS p}\stcr_{\XS p}}\}$ or
    $t \in \{x\sti \mid x \in \var{D_{\XS p1}\stcr_{\XS p}}\}$,
    respectively. With $t \in \garg{\nlit{N}}$ it follows from the
    specification of clause forms~7 and~8 that $t \in \garg{\lnot D_{\XS
        p}\sti}$.
  Because $N$ has a sibling whose literal label is $\lnot D_{\XS p}\sti$, the
  existence of an ancestor~$N^\prime$ of~$N$ as claimed follows
  from~(\ref{item-intro-dp}). \qed
\end{enumerate}
\end{proof}

\noindent
In the proof of Theorem~\ref{thm-correct-aipol} semantic and syntactic
properties of intermediate formulas constructed during access interpolant
extraction need to be considered. The notions of \name{\RQFO formula} and
\name{access interpolant} as such are not adequate to express the relevant
properties of these intermediate formula, but generalizations of them, defined
as follows:
\begin{defn}[\RQFOT Formula, Weak Access Interpolant]
\ 

\sdlab{def-rqfot} Formulas of \name{first-order logic with relativized
  quantifiers and ground terms}, briefly \name{\RQFOT formulas}, are defined
like \RQFO formulas (Definition~\ref{def-rqfo}) with the exception that as arguments
of atoms not just variables and constants, but also ground terms with function
symbols of arbitrary arity are allowed.

\smallskip

\sdlab{def-wai} Let $F,G$ be \RQFO sentences and let $F^\prime,G^\prime$ be
quantifier-free first-order formulas such that $F \land F^\prime \entails G
\lor G^\prime$.  A \defname{weak access interpolant} of the quadruple $\la F,
F^\prime, G, G^\prime\ra$ is an \RQFOT sentence~$H$ such that
\begin{enumerate}
\item \label{cond-wai-sem} $F \land F^\prime \entails H \entails G \lor G^\prime$.
\item \label{cond-wai-lyndon} $\pred{H} \subseteq \pred{F} \cap \pred{G}$.
\item \label{cond-wai-relat} Every existential \bpatt of $H$ is covered by an
  existential \bpatt of $G$.  Every universal \bpatt of $H$ is covered by a
  universal \bpatt of $F$.
\item \label{cond-wai-const} 
$\garg{H} \subseteq \garg{F \land F^\prime} \cap \garg{G \lor G^\prime}.$
\end{enumerate}
\end{defn}
Access interpolants are special cases of weak access interpolants:
\begin{prop}[Weak and Standard Access Interpolants]
\label{prop-weak-std-ai}
Let $F,G$ be \RQFO sentences. A formula $H$ is an access interpolant of $F$
and $G$ if and only if $H$ is a weak access interpolant of $\la F, \true, G,
\false \ra$.
\end{prop}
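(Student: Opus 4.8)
The plan is to prove the biconditional by unfolding Definitions~\ref{def-ai} and~\ref{def-wai} side by side, after substituting $F^\prime = \true$ and $G^\prime = \false$ into the latter, and checking that the four defining conditions match up. First I would dispose of the semantic condition: the weak-access precondition $F \land \true \entails G \lor \false$ and the weak-access condition~(\ref{cond-wai-sem}.), $F \land \true \entails H \entails G \lor \false$, both simplify, by the equivalences $F \land \true \equiv F$ and $G \lor \false \equiv G$, to the access-interpolant precondition $F \entails G$ and to condition~(\ref{cond-ai-sem}.), $F \entails H \entails G$, respectively. The predicate restriction~(\ref{cond-ai-lyndon}.) and the covering condition on binding patterns~(\ref{cond-ai-relat}.) are stated identically in both definitions, so they transfer with no work.

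The remaining gap is between the two vocabulary restrictions and the two formula classes. Since $F$ and $G$ are \RQFO, they are relational, so by the identity noted for relational formulas I would use $\garg{F} = \const{F}$ and $\garg{G} = \const{G}$; moreover $\garg{F \land \true} = \garg{F}$ and $\garg{G \lor \false} = \garg{G}$, as the truth constants contribute no atom arguments. For the forward direction an access interpolant $H$ is an \RQFO, hence \RQFOT and relational, sentence, so $\garg{H} = \const{H}$, and condition~(\ref{cond-ai-const}.), $\const{H} \subseteq \const{F} \cap \const{G}$, becomes literally condition~(\ref{cond-wai-const}.), $\garg{H} \subseteq \garg{F \land \true} \cap \garg{G \lor \false}$.

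The one step that needs genuine, though still small, argument is the backward direction, where $H$ is only assumed to be an \RQFOT sentence. From condition~(\ref{cond-wai-const}.) I would obtain $\garg{H} \subseteq \const{F} \cap \const{G}$, a set of constants. The key claim, and the main obstacle, is that such an $H$ is in fact \RQFO: by Definition~\ref{def-rqfot} the sole way an \RQFOT formula departs from the \RQFO grammar is by admitting ground terms with function symbols of positive arity as arguments of atoms, and any such term would lie in $\garg{H}$ while failing to be a constant; since $\garg{H}$ contains only constants, every atom argument of $H$ is a variable or a constant, so $H$ obeys the \RQFO grammar. Being relational, $H$ then satisfies $\garg{H} = \const{H}$, which turns condition~(\ref{cond-wai-const}.) into condition~(\ref{cond-ai-const}.). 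Together with the three already-matched conditions this establishes both directions and hence the proposition.
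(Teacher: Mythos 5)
Your proof is correct and takes the same route as the paper, whose entire proof is the one-line remark that the claim is ``easy to see'' from Definitions~\ref{def-wai} and~\ref{def-ai}; you simply carry out that definition-unfolding explicitly. Your treatment of the only genuinely non-immediate point --- that condition~(\ref{cond-wai-const}.) forces $\garg{H}$ to consist of constants, so an \RQFOT weak access interpolant of $\la F, \true, G, \false\ra$ is in fact \RQFO --- is exactly the observation the paper leaves implicit, and it is argued correctly.
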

\begin{proof}
Easy to see from the definitions of \name{weak access interpolant}
(Definition~\ref{def-wai}) and \name{access interpolant} (Definition~\ref{def-ai}).
\qed
\end{proof}

We are now ready to prove the core property that underlies the correctness of
the access interpolant extraction from \acitx:
\begin{lem}[Core Invariant of Access Interpolant Extraction from
 \ACITX]
  \label{lem-aipol-invariant}
  Let $F, G$ be \RQFO sentences such that $F \entails G$ and let $T$ be an
  \acit for $F$ and~$G$.  For all inner nodes~$N$ of $T$ the formula
  $\aipol{N}$ is a weak access interpolant of $\la \DEFP{F}, \nbranchA{N},
  \lnot \DEFPNOT{G}, \lnot \nbranchB{N} \ra$.
\end{lem}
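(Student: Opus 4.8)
The plan is to prove the statement itself as the induction invariant, by structural induction on the \acit $T$ from the leaves upward, verifying at every inner node $N$ the four requirements of Definition~\ref{def-wai} for $\aipol{N}$ and the quadruple $\la \DEFP{F}, \nbranchA{N}, \lnot\DEFPNOT{G}, \lnot\nbranchB{N}\ra$: the semantic sandwich $\DEFP{F}\land\nbranchA{N}\entails\aipol{N}\entails\lnot\DEFPNOT{G}\lor\lnot\nbranchB{N}$, the inclusion $\pred{\aipol{N}}\subseteq\pred{\DEFP{F}}\cap\pred{\lnot\DEFPNOT{G}}$, the binding-pattern coverage of Definition~\ref{def-covered}, and the ground-argument inclusion $\garg{\aipol{N}}\subseteq\garg{\DEFP{F}\land\nbranchA{N}}\cap\garg{\lnot\DEFPNOT{G}\lor\lnot\nbranchB{N}}$. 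The case analysis follows Definition~\ref{def-aipol}, driven by the clause forms of Lemma~\ref{lem-input-clauseforms}. Since an \acit is a positive hyper tableau, every node with a negative literal is a leaf, so the recursion of $\aipol{\cdot}$ descends only into positive children, which are again inner nodes; the induction is thus well founded, bottoming out at form-2 nodes where $\aipol{N}$ is $\false$ or $\true$. The predicate condition already discards the two disjoint definer vocabularies through the intersection, and the root, where $\nbranchA{N}=\nbranchB{N}=\true$, is where the ground-argument inclusion degenerates to $\garg{\aipol{N}}\subseteq\const{F}\cap\const{G}$, forcing the extracted formula to be a genuine \RQFO access interpolant and thereby reducing Theorem~\ref{thm-correct-aipol} to this invariant.

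For the propositional cases~i and~ii (clause forms~1--5 and~7--8) the argument transcribes the inductive step of Lemma~\ref{lem-ground-ipol-correct}. The first child $N_1$ carries the negative definer literal $\lnot D$, whose complement $D$ lies on the branch to $N$ on the same side as $\nclause{N}$; discharging $\lnot D$ against this branch conjunct and using that $\nclause{N}$ is entailed by $\DEFP{F}$ (resp.\ $\DEFPNOT{G}$) reduces the ground sandwich for $N$ to the disjunction ($\nside{N_1}=\aaa$) or conjunction ($\nside{N_1}=\bbb$) of the induction hypotheses for the positive children. The predicate, binding-pattern and ground-argument conditions transfer directly, as forming $\bigvee_{i\geq2}\aipol{N_i}$ or $\bigwedge_{i\geq2}\aipol{N_i}$ introduces no new atoms, quantifiers or ground terms and the extra branch literal is a definer atom with empty argument set. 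Subcase~iii.a, where the target $\ntgt{N_2}$ closing the relativizer node shares the side of $N_1$, is of the same propositional nature: the relativizer $R$ is then itself a same-side branch conjunct, dischargeable like $D$, and $\aipol{N}=\aipol{N_3}$ inherits everything from $N_3$.

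The substance lies in the mutually dual lifting subcases~iii.b and~iii.c (clause form~6, $\lnot D\lor\lnot R\lor D'$). In iii.b the clause is from $\DEFP{F}$ and $\ntgt{N_2}$ has side $\bbb$, so $R$ is a conjunct of $\nbranchB{N}$ while $D$ is a conjunct of $\nbranchA{N}$; from the hypothesis on $N_3$ these two branch facts give the ground entailment $\DEFP{F}\land\nbranchA{N}\entails\lnot R\lor\aipol{N_3}\entails\lnot\DEFPNOT{G}\lor\lnot\nbranchB{N}$. With $\aipol{N}=\toprevsubst{\forall v_1\ldots\forall v_n\,(\lnot R\lor\aipol{N_3})}{\stx}$ and $\{t_1,\ldots,t_n\}=\garg{\lnot R}\setminus\garg{\DEFP{F}\land\nbranchA{N}}$, the right half of the lifted sandwich follows by instantiating the universal prefix back to $\tts$, and the left half by applying Proposition~\ref{prop-inessential-qu-terms} to the contrapositive, whose disjointness precondition $\rng{\stx}\cap\garg{\DEFP{F}\land\nbranchA{N}}=\emptyset$ is exactly the definition of $\{t_i\}$; in iii.c the two roles are swapped, the left entailment holding by existential introduction and Proposition~\ref{prop-inessential-qu-terms} being applied to the right entailment for the existential lifting $\exists v_1\ldots\exists v_n\,(R\land\aipol{N_3})$. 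For the binding-pattern condition I would read from the form-6 restriction $\parg{\lnot R_p\lor D_{p1}}\subseteq\parg{\lnot D_p}\cup\vout_p$ of Lemma~\ref{lem-input-clauseforms} that the arguments of $R$ at the input positions of this relativizer occurrence are arguments of $D$, hence belong to $\garg{\nbranchA{N}}$ and are not lifted; the rebuilt universal pattern therefore retains at least those input positions and is covered, in the sense of Definition~\ref{def-covered}, by the corresponding universal pattern of $\DEFP{F}$ (and dually the existential pattern built in iii.c is covered by one of $\lnot\DEFPNOT{G}$), with Proposition~\ref{prop-toprevsubst} translating $\toprevsubst{\cdot}{\stx}$ into $\revsubst{\cdot}{\cdot}$ for the attendant ground-argument bookkeeping.

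The main obstacle is the ground-argument condition (item~\ref{cond-wai-const} of Definition~\ref{def-wai}), and specifically its cross-side half: every non-lifted argument of $R$ must be shown to lie in $\garg{\DEFP{F}\land\nbranchA{N}}$ \emph{and} in $\garg{\lnot\DEFPNOT{G}\lor\lnot\nbranchB{N}}$ at once, although $R$ and its complement enter from opposite sides, and the same tension already arises for the Skolem-term arguments produced by clause forms~7 and~8 at the nodes below the form-6 node that eventually lifts them. This is precisely where Proposition~\ref{prop-aux-skolem-strong} and the \acit structure are indispensable: the proposition guarantees that every occurrence of a Skolem term in a node label is preceded on its branch by an introducer node, so that such a term is always accounted for in the branch of the side that introduced it, while the contiguous, regular and leaf-only properties align the two occurrences of the relativizer atom and force the input-position arguments to be shared between the sides. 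I expect the delicate part to be carrying out these containments term by term across the eight clause forms while simultaneously keeping the three other invariants intact; once this is discharged, evaluation at the root yields the access interpolant and closes the proof.
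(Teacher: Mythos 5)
Your overall strategy --- proving the stated invariant itself by structural induction, splitting on the clause forms of Lemma~\ref{lem-input-clauseforms}, treating forms 1--5 propositionally in the style of Lemma~\ref{lem-ground-ipol-correct}, and performing the quantifier lifting at form~6 via Proposition~\ref{prop-inessential-qu-terms} with the range of $\stx$ chosen exactly so that the disjointness precondition holds --- is precisely the paper's, and your cases i, ii (restricted to forms 1--5), iii.a, iii.b and iii.c are sound in outline.

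The gap is in clause forms 7 and 8, which you fold into the ``propositional cases'' and dispose of by ``using that $\nclause{N}$ is entailed by $\DEFP{F}$''. That premise is false for these two forms: a form-7 tableau clause is an instance of $\lnot D_p \lor R_p\stcr_p$, where $\stcr_p$ substitutes Skolem terms for $\vout_p$, while $\DEFP{F}$ entails only $\forall \dx_p\, (D_p \imp \exists \vout_p\, (R_p \land D_{p1}))$ and not its Skolemization. Consequently the induction hypothesis, which reads $\DEFP{F} \land \nbranchA{N} \land R_p\stcr_p\sti \entails \aipol{N_2}$, cannot be reduced to $\DEFP{F} \land \nbranchA{N} \entails \aipol{N_2}$ by discharging the extra conjunct against the clause. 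The paper's proof of this case is its longest: it uses contiguity to show that the second introducer literal $D_{p1}\stcr_p\sti$, if it occurs on the branch at all, sits at the parent of $N$, then invokes Proposition~\ref{prop-aux-skolem-strong} to establish that the Skolem terms in $\rng{(\stcr_p\sti)|_{\vout_p}}$ occur neither in the remaining branch nor in $\DEFP{F}$, $\DEFPNOT{G}$ or $\aipol{N_2}$, and only then applies Proposition~\ref{prop-inessential-qu-terms} to re-quantify those Skolem terms existentially, so that the resulting conjunct $\exists \vout_p\, (R_p\sti \land D_{p1}\sti)$ \emph{is} entailed by $\DEFP{F} \land D_p\sti$. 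You do cite Proposition~\ref{prop-aux-skolem-strong} and the contiguity property, but only in connection with the ground-argument condition~\ref{cond-wai-const}; their indispensable role is in the semantic left entailment of \waicond~\ref{cond-wai-sem} for forms 7 and 8, which your proposal as written would get wrong.
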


\begin{proof}
  By induction on the tableau structure. The property
  to show for all nodes $N$ of the tableau is:
  \begin{tabularlabtext}
  (IP) & If $N$ is an inner node, then 
    $\aipol{N}$ is a weak access interpolant of $\la \DEFP{F}, \nbranchA{N},
  \lnot \DEFPNOT{G}, \lnot \nbranchB{N} \ra$.
  \end{tabularlabtext}
  In the base case where $N$ is a leaf it satisfies (IP) trivially. To prove
  the induction step assume as induction hypothesis that $N$ is an inner node
  with children $N_1,\ldots, N_k$ where $k \geq 1$ and that (IP) holds for all
  children, that is:
  \begin{tabularlabtext}
    (IH) & For all $i \in \{1,\ldots,k\}$ it holds that if $N_i$ is an inner node,
    then $\aipol{N_i}$ is a weak access interpolant of

    \medskip
    
    \hspace*{\fill}$\la \DEFP{F}, \nbranchA{N_i}, \lnot \DEFPNOT{G},
    \lnot \nbranchB{N_i} \ra.$\hspace*{\fill}
\end{tabularlabtext}
  We prove the induction step by showing that (IH) implies that (IP) holds for
  $N$, that is, $\aipol{N}$ is a weak access interpolant of
\[\la \DEFP{F}, \nbranchA{N}, \lnot \DEFPNOT{G}, \lnot \nbranchB{N}\ra.\]
We will now prove this for the case where the children of $N$ have side label
$\aaa$ for all possible forms of $\nclause{N}$ according to
Lemma~\ref{lem-input-clauseforms}. The case where the children have side label
$\bbb$ can be shown analogously.
We thus assume that $N$ is an inner node of $T$ and that the children of $N$
have side label $\aaa$.  The following general lemma is then easy to
verify:
\begin{tabularlabtext}
  (LR) & For all $i \in \{1,\ldots,k\}$ it holds
  that $\nbranchB{N} = \nbranchB{N_i}$.
\end{tabularlabtext}
For all clause forms with exception of form~1 the literal $\nlit{N_1}$ must be
an instance of a literal of the form $\lnot D_{\LL p}$.  Since the tableau is
closed and leaf-only it follows for ground substitutions~$\sti$ such that
$\lnot D_{\LL p}\sti = \nlit{N_1}$ that:
\begin{tabularlabtext}
(LD) & $D_{\LL p}\sti$ (which is equal to $\du{\nlit{N_1}}$)
occurs as a conjunct in $\nbranchA{N}$.
\end{tabularlabtext}

The formula $\nclause{N}$ must be an instance of a clause of one of the forms
listed in Lemma~\ref{lem-input-clauseforms}.  We now consider each possible
case in subsections headed with the respective clause forms.  For each case we
verify that $\aipol{N}$ satisfies the characteristics
\ref{cond-wai-sem}--\ref{cond-wai-const} of \name{weak access interpolant}
according to Definition~\ref{def-wai}.  We label the respective subproofs with
\name{\waicond~\ref{cond-wai-sem} left}, \name{\waicond~\ref{cond-wai-sem}
  right}, \name{\waicond~\ref{cond-wai-lyndon}},
\name{\waicond~\ref{cond-wai-relat}}, and
\name{\waicond~\ref{cond-wai-const}}, respectively.  Condition
\name{\waicond~\ref{cond-wai-sem}} is there split up into a \name{left}
component, that is, $\DEFP{F} \land \nbranchA{N} \entails \aipol{N}$ and a
\name{right} component, that is, $\aipol{N} \entails \lnot \DEFPNOT{G} \lor
\lnot \nbranchB{N}$, or, equivalently, expressed as contrapositive,
$\DEFPNOT{G} \land \nbranchB{N} \entails \lnot \aipol{N}$.
If appropriate, proof steps are shown in tabular symbolic form followed by
explanations. 

\casesection{Clause Form 1}

For this clause form $\aipol{N}$ is defined as $\aipol{N_1}$.  This case can
be proven with a simplified variant of the proof for clause forms~2--5 below.
The role of $N_2$ in that other proof is taken here by $N_1$ and properties
$\DEFP{F} \entails D_{\LL \emptyseq}$ as well as $\garg{D_{\LL \emptyseq}}
= \emptyset$ can be utilized.

\casesection{Clause Form 2--5}

For these clause forms $\aipol{N}$ is defined as $\bigvee_{i=2}^{k}
\aipol{N_i}$.  Let~$\sti$ be a ground substitution such that $\dom{\sti} =
\parg{\lnot D_{\LL p}}$ and $\lit{N_1} = \lnot D_{\LL p}\sti$.

\begin{caselist}

  \casepara{\waicond~\ref{cond-wai-sem} left}
\prlReset{case-or-sem-left}
\[
\begin{arrayprf}
\prl{1} & \DEFP{F} \entails \forall \dx_{\LL p}\, (D_{\LL p} \imp
D_{\LL p1} \lor \ldots \lor D_{\LL p(k-1)}).\\
\prl{2} & \DEFP{F} \entails D_{\LL p}\sti \imp
D_{\LL p1}\sti \lor \ldots \lor D_{\LL p(k-1)}\sti.\\
\prl{3} & \nbranchA{N} \entails D_{\LL p}\sti.\\
\prl{4} & \DEFP{F} \land \nbranchA{N} \entails
D_{\LL p1}\sti \lor \ldots \lor D_{\LL p(k-1)}\sti.\\
\prl{5} & \DEFP{F} \land \nbranchA{N} \entails
\aipol{N_2} \lor \ldots \lor \aipol{N_k}.\\
\prl{6} & \DEFP{F} \land \nbranchA{N} \entails
\aipol{N}.
\end{arrayprf}
\]

Entailment~\pref{1} holds since its right side is a conjunct of its left
side. Entailment~\pref{2} follows from~\pref{1} by instantiating universally
quantified variables. Entailment~\pref{3} follows from~(LD).
Entailment~\pref{4} follows from~\pref{3} and~\pref{2}. Entailment~\pref{5}
follows from~\pref{4} and~(IH). Entailment~\pref{6} follows from~\pref{5} and
the definition of $\aipol{N}$ for the considered clause forms.

\casepara{\waicond~\ref{cond-wai-sem} right}
\prlReset{case-or-sem-right}
\[
\begin{arrayprf}
\prl{1} & \DEFPNOT{G} \land \nbranchB{N} \entails 
\lnot \aipol{N_2} \land \ldots \land \lnot \aipol{N_k}.\\
\prl{2} & \DEFPNOT{G} \land \nbranchB{N} \entails
\lnot \aipol{N}.\\
\end{arrayprf}
\]
Entailment~\pref{1} follows from~(IH) and~(LR).  Entailment~\pref{2} follows
from~\pref{1} and the definition of $\aipol{N}$ for the considered clause
forms.

\casepara{\waicond~\ref{cond-wai-lyndon} and~\ref{cond-wai-relat}}
Immediate from~(IH) and the definition of $\aipol{N}$ for the considered
clause forms.

\casepara{\waicond~\ref{cond-wai-const}}
\prlReset{case-or-const}
\[
\begin{arrayprf}
\prl{1} & \garg{D_{\LL pi}\sti} \subseteq
\garg{\lnot D_{\LL p}\sti}, \text{ for all } i \in \{1, \ldots, k-1\}.\\
\prl{2} & \garg{\nbranchA{N}} = \garg{\nbranchA{N_i}}, 
      \text{ for all } i \in \{2, \ldots, k\}.\\
\prl{3} & \garg{\nbranchB{N}} = \garg{\nbranchB{N_i}},
      \text{ for all } i \in \{2, \ldots, k\}.\\
\prl{4} & \garg{\aipol{N}}\;  \subseteq\\
      & \garg{\DEFP{F} \land 
        \nbranchA{N}} \cap \garg{\lnot \DEFPNOT{G} \lor \lnot \nbranchB{N}}.
\end{arrayprf}
\]
Subsumption~\pref{1} follows from the specification of the considered clause
forms.  Equality~\pref{2} follows from~\pref{1}, since for $i \in
\{2,\ldots,k\}$ it holds that $\nlit{N_i} = D_{\LL
  p(i-1)}\sti$. Equality~\pref{3} follows from~(LR).  Subsumption~\pref{4}
follows from~(IH), \pref{3} and~\pref{2}, given that $\garg{\aipol{N}} =
\bigcup_{i=2}^{k} \garg{\aipol{N_i}}$.

\end{caselist}

\casesection{Clause Form~6, Case
  {\bfseries \upshape \sffamily
    side(tgt({\itshape \rmfamily N}$\mathbf{_2}$)) = L}$\,$}
In this case $\aipol{N}$ is defined as $\aipol{N_3}$.  Le $\sti$ be a ground
substitution such that $\dom{\sti} = \parg{\lnot D_{\LL p}} \cup \vout_{\LL p}$ and
$\nclause{N} = (\lnot D_{\LL p} \lor \lnot R_{\LL p} \lor D_{\LL p1})\sti$.
We note the following lemma, which can be derived similarly as~(LD):
\begin{tabularlabtext}
(L3) & The literal $R_{\LL p}\sti$ (that is, $\du{\nlit{N_2}}$)
 occurs as a conjunct in $\nbranchA{N}$.
\end{tabularlabtext}

\begin{caselist}

\casepara{\waicond~\ref{cond-wai-sem} left}
\prlReset{case-samecol-wai-sem-left}
\[
\begin{arrayprf}
\prl{1} & \DEFP{F} \land \nbranchA{N_3} \entails \aipol{N_3}.\\
\prl{2} & \DEFP{F} \land \nbranchA{N} \land D_{\LL p1}\sti \entails \aipol{N_3}.\\
\prl{3} & \DEFP{F} \entails 
  \forall \dx_{\LL p} (D_{\LL p} \imp \forall \vout_{\LL p} (\lnot R_{\LL p} \lor D_{\LL p1})).\\
\prl{4} & \DEFP{F} \entails \lnot D_{\LL p}\sti \lor \lnot R_{\LL p}\sti \lor D_{\LL p1}\sti.\\
\prl{5} & \nbranchA{N} \entails D_{\LL p}\sti.\\
\prl{6} & \nbranchA{N} \entails R_{\LL p}\sti.\\
\prl{7} & \DEFP{F} \land \nbranchA{N} \entails \aipol{N_3}.\\
\prl{8} & \DEFP{F} \land \nbranchA{N} \entails \aipol{N}.\\
\end{arrayprf}
\]
Entailment~\pref{1} follows from~(IH).  Entailment~\pref{2} is obtained from~\pref{1} by
expressing $\nbranchA{N_3}$ with its last conjunct made explicit.
Entailment~\pref{3} holds since its right side is a conjunct of its left side.
Entailment~\pref{4} follows from \pref{3} by instantiating universally quantified
variables. Entailments~\pref{5} and \pref{6} follow from~(LD) and~(L3), respectively.
Entailment~\pref{7} follows from~\pref{4}--\pref{6} and \pref{2}.  Step~\pref{8} follows from~\pref{7} since
$\aipol{N} = \aipol{N_3}$.

\casepara{\waicond~\ref{cond-wai-sem} right}
Immediate from (IH) and (LR) since  $\aipol{N} = \aipol{N_3}$.

\casepara{\waicond~\ref{cond-wai-lyndon} and~\ref{cond-wai-relat}}
Immediate from~(IH) since $\aipol{N} = \aipol{N_3}$.

\casepara{\waicond~\ref{cond-wai-const}}
\prlReset{case-samecol-wai-const}
\[
\begin{arrayprf}
\prl{1} & \garg{\lnot R_{\LL p}\sti} \cup \garg{\lnot D_{\LL p}\sti} \subseteq
\garg{\nbranchA{N}}.\\
\prl{2} & \garg{D_{\LL p1}\sti} \subseteq \garg{\lnot R_{\LL p}\sti} \cup \garg{\lnot D_{\LL p}\sti}.\\
\prl{3} & \garg{\nbranchA{N}} = \garg{\nbranchA{N_3}}.\\
\prl{4} & \garg{\nbranchB{N}} = \garg{\nbranchB{N_3}}.\\
\prl{5} & \garg{\aipol{N}}\; \subseteq\\
& \garg{\DEFP{F} \land \nbranchA{N}}
\cap \garg{\lnot \DEFPNOT{G} \lor \lnot \nbranchB{N}}.
\end{arrayprf}
\]
Subsumption~\pref{1} follows from~(LD) and~(L3).  Subsumption~\pref{2} follows
from the definition of clause form~6. Equality~\pref{3} follows from~\pref{2}
and \pref{1} since $\nbranchA{N_3} = \nbranchA{N} \land D_{\LL p1}\sti$.
Equality~\pref{4} follows from~(LR). Subsumption~\pref{5} follows from~(IH),
\pref{4} and~\pref{3}, since $\aipol{N} = \aipol{N_3}$.

\end{caselist}

\casesection{Clause Form~6, Case
  {\bfseries \upshape \sffamily
    side(tgt({\itshape \rmfamily N}$\mathbf{_2}$)) = R}$\,$}

Let $\sti$ be a ground substitution such that $\dom{\sti} = \parg{\lnot D_{\LL
    p}} \cup \vout_{\LL p}$ and $\nclause{N} = (\lnot D_{\LL p} \lor \lnot
R_{\LL p} \lor D_{\LL p1})\sti$.  Formula $\aipol{N}$ is for this case then
defined as \[\forall v_1 \ldots \forall v_n\, \toprevsubst{(\lnot R_{\LL
    p}\sti \lor \aipol{N_3})}{\stx},\] where $\{t_1, \ldots, t_n\} =
\garg{\lnot R_{\LL p}\sti} \setminus \garg{\DEFP{F} \land \nbranchA{N}}$,
$v_1, \ldots, v_n$ are fresh variables, and $\stx = \{v_1 \mapsto t_1, \ldots,
v_n \mapsto t_n\}$.

\begin{caselist}
  \casepara{\waicond~\ref{cond-wai-sem} left}
  \prlReset{case-all-sem-left}%
\[
\begin{arrayprf}
\prl{1} & \DEFP{F} \land \nbranchA{N_3} \entails \aipol{N_3}.\\
\prl{2} & \DEFP{F} \land \nbranchA{N} \land D_{\LL p1}\sti 
\entails \aipol{N_3}.\\
\prl{3} & \DEFP{F} \entails 
\forall \dx_{\LL p}\, (D_{\LL p} \imp \forall \vout_{\LL p}\, (\lnot R_{\LL p} \lor D_{\LL p1})).\\
\prl{4} & \DEFP{F} \entails \lnot D_{\LL p}\sti \lor \lnot R_{\LL p}\sti \lor D_{\LL p1}\sti.\\
\prl{5} & \nbranchA{N} \entails D_{\LL p}\sti.\\
\prl{6} & \DEFP{F} \land \nbranchA{N} 
  \entails \lnot R_{\LL p}\sti \lor \aipol{N_3}.\\
\prl{7} & \DEFP{F} \land \nbranchA{N} 
  \entails \forall v_1 \ldots \forall v_n 
   \toprevsubst{(\lnot R_{\LL p}\sti \lor \aipol{N_3})}{\stx}.\\
\prl{8} & \DEFP{F} \land \nbranchA{N} \entails \aipol{N}.
\end{arrayprf}
\]
Entailments~\pref{1}--\pref{5} follow in the same way as in the as shown above
for clause form~6, case $\nside{\ntgt{N_2}} = \aaa$,
\name{\waicond~\ref{cond-wai-sem} left}.  Entailment~\pref{6} follows
from~\pref{5}, \pref{4} and~\pref{2}.  Given the specified properties of
$\stx$, entailment~\pref{7} follows from~\pref{6} by
Proposition~\ref{prop-inessential-qu-terms}.  Step~\pref{8} is obtained
from~\pref{7} by contracting the definition of $\aipol{N}$ for the considered
case.

\casepara{\waicond~\ref{cond-wai-sem} right}
\prlReset{case-all-sem-right}%
\[
\begin{arrayprf}
\prl{1} & \DEFPNOT{G} \land \nbranchB{N_3} \entails \lnot \aipol{N_3}.\\
\prl{3} & \nbranchB{N} \entails R_{\LL p}\sti.\\
\prl{4} & \DEFPNOT{G} \land \nbranchB{N} \entails 
R_{\LL p}\sti \land \lnot \aipol{N_3}.\\
\prl{5} & \DEFPNOT{G} \land \nbranchB{N} \entails 
      \toprevsubst{\exists v_1 \ldots \exists v_n\, 
      (R_{\LL p}\sti \land \lnot \aipol{N_3})}{\stx}.\\
\prl{6} & \DEFPNOT{G} \land \nbranchB{N} \entails 
      \toprevsubst{\lnot \forall v_1 \ldots \forall v_n\, 
      (\lnot R_{\LL p}\sti \lor \aipol{N_3})}{\stx}.\\
\prl{7} & \DEFPNOT{G} \land \nbranchB{N} \entails \lnot \aipol{N}.\\
\end{arrayprf}
\]
Entailment~\pref{1} follows from~(IH).  Entailment~\pref{3} holds since
$\nlit{N_2} = \lnot R_{\LL p}\sti$ and $\nside{\ntgt{N_2}} = \bbb$.
Entailment~\pref{4} follows from~\pref{3}, (LR) and~\pref{1}.
Entailment~\pref{5} follows from~\pref{4}, since the formula on the right side
of \pref{5} is entailed by the formula on the right side of~\pref{4}. The
formula on the right side of~\pref{6} is equivalent to that on the right side
of~\pref{5}.  Entailment~\pref{7} is obtained from~\pref{6} by contracting the
definition of $\aipol{N}$ for the considered case.

\casepara{\waicond~\ref{cond-wai-lyndon}}
The formula $\aipol{N}$ contains, compared to $\aipol{N_3}$ one additional
predicate occurrence, a negative occurrence of the predicate of $R_{\LL
  p}\sti$.  This predicate occurs in an instance of a clause of form~6, hence
negatively in $\DEFP{F}$. It also occurs positively in the literal label
of $\ntgt{N_2}$, where $\nside{\ntgt{N_2}} = \bbb$, hence positively in a
clause of form~7 obtained from normalizing $\DEFPNOT{G}$, hence negatively in
$\lnot \DEFPNOT{G}$.

\casepara{\waicond~\ref{cond-wai-relat}}
From~(IH) it follows that all existential \bpatts of $\aipol{N_3}$ are covered
by $\lnot \DEFPNOT{G}$ and all universal \bpatts of $\aipol{N_3}$ are covered
by $\DEFP{F}$.  That the \bpatts of $\aipol{N}$, defined as
\[\forall v_1
\ldots \forall v_n\, \toprevsubst{(\lnot R_{\LL p}\sti \lor
  \aipol{N_3})}{\stx},\] are also covered in that way by $\lnot \DEFPNOT{G}$
and $\DEFP{F}$ then follows if the outermost quantification of $\aipol{N}$
is covered by the quantification upon $\forall \vout_{\LL p}$ in the formula
$\forall \dx_{\LL p} (D_{\LL p} \imp \forall \vout_{\LL p}\, (\lnot R_{\LL p}
\lor D_{\LL p1}))$, which is a conjunct of $\DEFP{F}$. This, in turn,
follows if each member of the range of $\stx$ occurs in $R_{\LL p}\sti$ in an
argument position that is an ``output position'' of $R_{\LL p}$, that is, the
argument of $R_{\LL p}$ at that position is a member of $\vout_{\LL p}$.
We show the latter statement.  Let $t$ be a member of the range of $\stx$.
From the definition of $\stx$ it follow that $t \in \garg{\lnot R_{\LL
    p}\sti}$.  Assume that $t$ occurs in $\lnot R_{\LL p}\sti$ in a
``non-output'' position, that is, at an argument position of $R_{\LL p}\sti$
at which the argument of $R_{\LL p}$ is no member of $\vout_{\LL p}$.  From
the definition of the considered clause form~6 it follows that then $t \in
\garg{\lnot D_{\LL p}\sti}$ or $t \in \garg{\lnot R_{\LL p}}$. By~(LD) and
since $\garg{\lnot R_{\LL p}} \subseteq \garg{\DEFP{F}}$ it follows that
$t \in \garg{\nbranchA{N}} \cup \garg{\DEFP{F}}$.  From the specification
of $\stx$ it follows that its $\rng{\stx} \cap (\garg{\nbranchA{N}} \cup
\garg{\DEFP{F}}) = \emptyset$. Hence $t \notin \rng{\stx}$, contradicting
our initial presumption about~$t$. Thus $t$ must occur in $\garg{\lnot R_{\LL
    p}\sti}$ at an ``output position'' of~$R_{\LL p}$.

\casepara{\waicond~\ref{cond-wai-const}}
\prlReset{case-all-const}
\centerlong{
\begin{arrayprflong}
\prl{1} & \garg{\aipol{N_3}} \subseteq \garg{\DEFP{F} \land
    \nbranchA{N_3}}.\\
\prl{2} & \garg{\aipol{N_3}} \subseteq \garg{\DEFP{F} \land
    \nbranchA{N} \land D_{\LL p1}\sti}.\\
\prl{3} & \garg{D_{\LL p1}\sti} \subseteq \garg{\lnot D_{\LL p}\sti} \cup \garg{\lnot R_{\LL p}\sti}
\subseteq \garg{\nbranchA{N}} \cup \garg{\lnot R_{\LL p}\sti}.\\
\prl{4} & \garg{\aipol{N_3}} \subseteq 
\garg{\DEFP{F} \land
    \nbranchA{N}} \cup \garg{{\lnot R_{\LL p}\sti}}.\\
\prl{5} & \garg{\aipol{N}} \subseteq \garg{\aipol{N_3}}.\\
\prl{6} & \garg{\aipol{N}} \cap
(\garg{\lnot R_{\LL p}\sti} \setminus 
 (\garg{\DEFP{F} \land \nbranchA{N}})) = \emptyset.\\
\prl{7} & \garg{\aipol{N}} \subseteq \garg{\DEFP{F} \land
    \nbranchA{N}}.\\
\prl{8} & \garg{\aipol{N}} \subseteq \garg{\lnot \DEFPNOT{G} \lor \lnot \nbranchB{N}}.\\
\prl{9} & \garg{\aipol{N}}\; \subseteq\\
    & \garg{\DEFP{F} \land \nbranchA{N}} \cap
      \garg{\lnot (\DEFPNOT{G} \land \nbranchB{N})}.
\end{arrayprflong}}
Subsumption~\pref{1} follows from~(IH). Subsumption~\pref{2} is obtained
from~\pref{1} by expressing $\nbranchA{N_3}$ with its last conjunct made
explicit.  Subsumptions~\pref{3} follow from the definition of clause form~6
and~(LD).  Subsumption~\pref{4} follows from~\pref{2}
and~\pref{3}. Subsumptions~\pref{5} and~\pref{6} follow from the definition of
$\aipol{N}$ in the considered case.  Subsumption~\pref{7} follows
from~\pref{6}, \pref{5} and~\pref{4}.  Subsumption~\pref{8} can be shown as
follows: Let $t$ be a member of $\garg{\aipol{N}}$. From the definition of
$\aipol{N}$ for the considered case it follows that $t \in \garg{\lnot R_{\LL
    p}\sti} \cup \garg{\aipol{N_3}}$.  If $t \in \garg{\lnot R_{\LL p}\sti}$,
then, since $\nlit{N_2} = \lnot R_{\LL p}\sti$ it also holds that $t \in
\nlit{\ntgt{N_2}}$. Thus, because $\nside{\ntgt{N_2}} = \bbb$ it then holds
that $t \in \garg{\nbranchB{N}}$.  If $t \in \garg{\aipol{N_3}}$, then,
from~(IH) it follows that $t \in \garg{\lnot \DEFPNOT{G} \lor \lnot
  \nbranchB{N_3}}$, hence, by~(LR), $t \in \garg{\lnot \DEFPNOT{G} \lor \lnot
  \nbranchB{N}}$, which completes the proof of~\pref{8}.
Subsumption~\pref{9} follows from~\pref{8} and~\pref{7}.
\qed
\end{caselist}

\vspace{-10pt} 
\casesection{Clause Form~7}

For this clause forms $\aipol{N}$ is defined as $\bigvee_{i=2}^{k} \aipol{N_i}
= \aipol{N_2}$.  Let $\sti$ be a ground substitution such that $\dom{\sti} =
\parg{\lnot D_{\LL p}}$ and $\lit{N_1} = \lnot D_{\LL p}\sti$.

\begin{caselist}
  \casepara{\waicond~\ref{cond-wai-sem} left}
\prlReset{case-or7-sem-left}%
\[
\begin{arrayprf}
\prl{1} & \DEFP{F} \land \nbranchA{N_2} \entails \aipol{N_2}.\\
\prl{2} & \DEFP{F} \land \nbranchA{N} 
\land R_{\LL p}\stcr_{\LL p}\sti \entails \aipol{N_2}.\\
\prl{gamma} & \gamma \eqdef (\stcr_{\LL p}\sti)|_{\vout_{\LL p}}.\\
\prl{3} & \DEFP{F} \land \nbranchA{N^{\prime\prime}} \land R_{\LL p}\stcr_{\LL p}\sti
  \land D_{\LL p1}\stcr_{\LL p}\sti \entails \aipol{N_2}.\\
\prl{5} & \rng{\gamma}
\cap \garg{\nbranchA{N^{\prime\prime}}} = \emptyset.\\
\prl{6} & \rng{\gamma}
\cap \garg{\nbranchB{N_2}} = \emptyset.\\
\prl{7} & \rng{\gamma} \cap \garg{\DEFP{F}} = \emptyset.\\
\prl{8} & \rng{\gamma} \cap \garg{\DEFPNOT{G}} = \emptyset.\\
\prl{9} & \rng{\gamma} \cap \garg{\aipol{N_2}} = \emptyset.\\
\prl{10} & \exists \vout_{\LL p} 
\toprevsubst{(\DEFP{F} \land \nbranchA{N^{\prime\prime}} \land 
  R_{\LL p}\stcr_{\LL p}\sti \land D_{\LL p1}\stcr_{\LL p}\sti)}{\gamma} \entails \aipol{N_2}.\\
\prl{11} & \DEFP{F} \land \nbranchA{N^{\prime\prime}} \land \exists \vout_{\LL p}
  \toprevsubst{(R_{\LL p}\stcr_{\LL p}\sti \land D_{\LL p1}\stcr_{\LL p}\sti)}{\gamma}
  \entails \aipol{N_2}.\\
\prl{12} & \DEFP{F} \land \nbranchA{N} \land \exists \vout_{\LL p}
  \toprevsubst{(R_{\LL p}\stcr_{\LL p}\sti \land D_{\LL p1}\stcr_{\LL p}\sti)}{\gamma}
   \entails \aipol{N_2}.\\
\prl{13} & \DEFP{F} \land \nbranchA{N} \land \exists \vout_{\LL p}
   (R_{\LL p}\sti \land D_{\LL p1}\sti) \entails \aipol{N_2}.\\
\prl{14} & \DEFP{F} \entails \forall \dx_{\LL p} (D_{\LL p} \imp \exists \vout_{\LL p}\, (R_{\LL p}
  \land D_{\LL p1})).\\
\prl{15} & \DEFP{F} \entails D_{\LL p}\sti \imp \exists \vout_{\LL p}\, (R_{\LL p}\sti
  \land D_{\LL p1}\sti).\\
\prl{16} & \DEFP{F} \land \nbranchA{N} \entails D_{\LL p}\sti.\\
\prl{17} & \DEFP{F} \land \nbranchA{N} \entails \exists \vout_{\LL p}\, (R_{\LL p}\sti
  \land D_{\LL p1}\sti).\\
\prl{18} & \DEFP{F} \land \nbranchA{N} \entails \aipol{N_2}.\\
\prl{19} & \DEFP{F} \land \nbranchA{N} \entails \aipol{N}.\\
\end{arrayprf}
\]
Entailment~\pref{1} follows from~(IH).  Entailment~\pref{2} is obtained
from~\pref{1} by expressing $\nbranchA{N_2}$ with its last conjunct made
explicit.  The substitution~$\gamma$, defined in \pref{gamma}, is an injection
and $R_{\LL p}\stcr_{\LL p}\sti$ as well as $D_{\LL p1}\stcr_{\LL p}\sti$ are
the introducer literals for exactly the members of $\rng{\gamma}$.  From the
contiguity property of the tableau it follows that if a node with literal
label $D_{\LL p1}\stcr_{\LL p}\sti$ is an ancestor of~$N_2$, then it is the
parent of $N_2$, that is, $N$.  Hence there exists node $N^{\prime\prime}$
which is the parent of $N$ or identical with $N$ (depending on whether
$\nlit{N} = D_{\LL p1}\stcr_{\LL p}\sti$) such that~\pref{3} holds and, by
Proposition~\ref{prop-aux-skolem-strong}, also~\pref{5} and~\pref{6} hold.
(Entailment~\pref{3} holds also if $\nlit{N} \neq D_{\LL p1}\stcr_{\LL
  p}\sti$.  The conjunct $D_{\LL p1}\stcr_{\LL p}\sti$ then just redundantly
strengthens the left side.)  Equalities~\pref{7} and~\pref{8} hold since the
principal functor of all members of $\rng{\gamma}$ is a Skolem functor and
thus does occur neither in $\DEFP{F}$ nor in $\DEFPNOT{G}$.
Equality~\pref{9} follows from~\pref{8}, \pref{6} and~(IH).
Entailment~\pref{10} follows from \pref{9} and~\pref{3} by
Proposition~\ref{prop-inessential-qu-terms}.  Entailment~\pref{11} follows
from~\pref{10}, \pref{7} and~\pref{5}.  Entailment~\pref{12} follows
from~\pref{11} since $N^{\prime\prime}$ is identical to $N$ or the parent of
$N$.  Entailment~\pref{13} follows from~\pref{12} since $\toprevsubst{R_{\LL
    p}\stcr_{\LL p}\sti}{\gamma} = R_{\LL p}\sti$ and $\toprevsubst{D_{\LL
    p1}\stcr_{\LL p}\sti}{\gamma} = D_{\LL p1}\sti$.  Entailment~\pref{14}
holds since its right side is a conjunct of its left side.
Entailment~\pref{15} follows from~\pref{14} by instantiating universally
quantified variables. Entailment~\pref{16} follows from~(LD).
Entailment~\pref{17} follows from~\pref{16} and~\pref{14}.
Entailment~\pref{18} follows from~\pref{17}
and~\pref{13}. Entailment~\pref{19} follows from~\pref{18} since $\aipol{N} =
\aipol{N_2}$.

\casepara{\waicond~\ref{cond-wai-sem} right}
Immediate from (IH) and (LR) since $\aipol{N} = \aipol{N_2}$.

\casepara{\waicond~\ref{cond-wai-lyndon} and~\ref{cond-wai-relat}}
Immediate from (IH) since $\aipol{N} = \aipol{N_2}$.

\casepara{\waicond~\ref{cond-wai-const}}
\prlReset{case-or7-sem-const}%
\[
\begin{arrayprf}
\prl{1} & t \in \garg{\aipol{N_2}} \cap \garg{R_{\LL p}\stcr_{\LL p}\sti}.\\
\prl{2} & t \notin \rng{\gamma}.\\
\prl{3} & t \in \garg{\lnot D_{\LL p}\sti} \cup \garg{R_{\LL p}}.\\
\prl{4} & t \in \garg{\nbranchA{N}} \cup \garg{R_{\LL p}}.\\
\prl{5} & \garg{\aipol{N_2}} \cap \garg{R_{\LL p}\stcr_{\LL p}\sti} 
\subseteq \garg{\nbranchA{N}} \cup \garg{R_{\LL p}}.\\
\prl{6} & \garg{\aipol{N_2}} \subseteq \garg{\DEFP{F} \land
  \nbranchA{N}}.\\
\prl{7} & \garg{\aipol{N_2}} \subseteq \garg{\lnot \DEFPNOT{G} \lor
  \lnot \nbranchB{N}}.\\
\prl{8} & \garg{\aipol{N}}\; \subseteq\\
    & \garg{\DEFP{F} \land
  \nbranchA{N}} \cap \garg{\lnot \DEFPNOT{G} \lor \lnot \nbranchB{N}}.
\end{arrayprf}
\]
Let $t$ be a term that satisfies \pref{1}. Let substitution~$\gamma$ be
defined as in step~\prefglobal{case-or7-sem-left:gamma} of the proof of
\name{\waicond~\ref{cond-wai-sem} left} above.  Statement~\pref{2} then
follows from step~\prefglobal{case-or7-sem-left::9} of that proof. By
\pref{2}, the literal $R_{\LL p}\stcr_{\LL p}\sti$ is not an introducer
literal for~$t$. Hence~\pref{3} follows from the definition of clause
form~7. Statement~\pref{4} follows from~\pref{3} and~(LD).
Subsumption~\pref{5} then follows since \pref{1} implies~\pref{4}, for all
ground terms~$t$.  Subsumption~\pref{6} follows from~(IH) and~\pref{5} because
$\garg{R_{\LL p}} \subseteq \garg{\DEFP{F}}$ and $\nbranchA{N_2} =
\nbranchA{N} \land R_{\LL p}\stcr_{\LL p}\sti$.  Subsumption~\pref{7} follows
from~(IH) and~(LR).  Since $\aipol{N} = \aipol{N_2}$, subsumption~\pref{8}
follows from~\pref{7} and~\pref{6}.

\end{caselist}

\casesection{Clause Form~8} Can by show in the same way as for clause form~7,
with the roles of $R_{\LL p}\stcr_{\LL p}\sti$ and $D_{\LL p1}\stcr_{\LL
  p}\sti$ switched. \qed
\end{proof}

Theorem~\ref{thm-correct-aipol} stated above
can now be proven on the basis of Lemma~\ref{lem-aipol-invariant}:

\begin{thmref}
  {Correctness of Access Interpolant Extraction from an \ACIT}
  {\ref{thm-correct-aipol}}
  Let $F, G$ be \RQFO sentences such that $F \entails G$ and let $N$ be the
  root of an \acit for $F$ and~$G$. Then $\aipol{N}$ is an access interpolant
  of $F$ and $G$.
\end{thmref}
\begin{proof}
From Lemma~\ref{lem-aipol-invariant} and Proposition~\ref{prop-weak-std-ai} it follows
that $\aipol{N}$ is an access interpolant of $\DEFP{F}$ and $\lnot
\DEFPNOT{G}$.  With Proposition~\ref{prop-fhg-normalized} if follows that
$\aipol{N}$ is an access interpolant of $F$ and $G$.  \qed
\end{proof}

\section{Ensuring the Requirements on \ACITX}
\label{sec-access-convert}

An \acit has clauses of specific forms according to
Lemma~\ref{lem-input-clauseforms} and certain structural properties, namely,
it is closed, regular, leaf-only for the set of all negative literals
occurring as literal labels, and contiguous for certain pairs of literals.  A
closed positive hyper tableau has all these structural properties, with
exception of the contiguity requirement.  Hence, a closed positive hyper
tableau whose clauses match the forms of Lemma~\ref{lem-input-clauseforms}
that also satisfies the required contiguity property can be directly used to
extract an access interpolant.  Actually, contiguity can in this case be
ensured with an inexpensive tableau transformation, shown as
Procedure~\ref{proc-contig} below.

Closed clausal tableaux with arbitrary structure can be restructured to meet
the structural properties required by \acitx with a series of tableau
conversions that we will now specify.  All of them preserve closedness and for
all of them the clauses of the converted tableau are clauses of the respective
input tableau. With exception of the conversion that ensures the
\name{leaf-only} property all considered tableau conversions are
\emph{simplifications}, that is, procedures that require typically linear and
at most polynomial effort.  Termination is for these conversions easy to see.
For the potentially expensive \name{leaf-only} conversion we state it
explicitly as a proposition and provide a proof.  Examples that illustrate the
conversions will be given in Sect.~\ref{sec-access-convert-examples}.  First
we need to specify an additional auxiliary tableau property:
\begin{defn}[Eager]
\label{def-eager}  
A clausal tableaux is called \defname{eager} if and only if no closed node is
a descendant of another closed node.
\end{defn}
The \name{eager} property is typically ensured implicitly by tableau
construction calculi, since there it is pointless to attach children to a
closed node.  In addition, the leaf-only property for the set of all negative
literals, which is presupposed for positive hyper tableaux and \acit, implies
eagerness.
Operations such as instantiating literal labels and tableau structure
transformations as considered here might, however, result in non-eager
tableaux, also for eager inputs, such that it is useful to take the this
property here explicitly into account.

The following conversions to ensure eagerness and regularity are described as
destructive tableau manipulation procedures. The procedure for ensuring
regularity is from \cite{letz:habil} and is illustrated by
Fig.~\ref{fig-removal-irregularities}.  Both conversions can be considered as
tableau simplifications.
\begin{proc}[Removal of Uneagerness]
\label{proc-simp-eager}
\ 

\algoinput A clausal tableau.

\algoskip
\algomethod Repeat the following operation until the resulting tableau is
eager: Select an inner node $N$ that is closed.  Remove the edges originating
in $N$.

\algoskip

\algooutput An eager clausal tableau, whose clauses are also clauses of the
input tableau.  The following properties of the input tableau are preserved:
closed, regular, leaf-only.
\end{proc}

\begin{proc}[Removal of Irregularities {\cite[Section~2.1.3]{letz:habil}}]
\label{proc-simp-regular}
\ 

\algoinput A clausal tableau.

\algoskip

\algomethod Repeat the following operation until the resulting tableau is
regular: Select a node~$N$ in the tableau with an ancestor $N^\prime$ such
that $\nlit{N^\prime} = \nlit{N}$. Remove the edges originating in the parent
$N^{\prime\prime}$ of $N$ and replace them with the edges originating in $N$.

\algoskip

\algooutput A regular clausal tableau whose clauses are also clauses of the
input tableau. The following properties of the
input tableau are preserved: closed, eager, leaf-only.
\end{proc}

\renewcommand{\tableauscale}{0.7}

\begin{figure}
\centering
\caption{Tableau simplification step for removal of irregularities with
  Procedure~\ref{proc-simp-regular} \cite{letz:habil}.  Node $N$ in the procedure
  description corresponds to $N_i$, in the figure, where $1 \leq i \leq k$. It
  is possible that $N^{\prime\prime} = N^\prime$.  A triangle below a node
  represents the edges originating in the node (which might be none) together
  with the descendants of the node and all edges between them.}
\label{fig-removal-irregularities}
\begin{tikzpicture}[scale=\tableauscale,
    baseline=(a.north),
    sibling distance=5em,level distance=10ex,
    every node/.style = {transform shape,anchor=mid}]]
    \node (a) {$N^\prime$}
    child { node {$N^{\prime\prime}$} 
      edge from parent[dotted,thick]
      [sibling distance=4.0em,level distance=10ex]
      child { node (3) {$N_1$} \tria{$T_1$} edge from parent[solid,thin] }
      child[missing] { node {$N_1$} }
      child { node (4) {$N_{i-1}$}  \tria{$T_{i-1}$} edge from parent[solid,thin] }
      child { node {$N_i$} \tria{$T_i$} edge from parent[solid,thin] }
      child { node (5) {$N_{i+1}$} \tria{$T_{i+1}$} edge from parent[solid,thin] }
      child[missing] { node {$N_{i+1}$}  }
      child { node (6) {$N_{k}$} \tria{$T_k$} edge from parent[solid,thin] }
    };
    \path (3) -- node[auto=false]{\ldots} (4);
    \path (5) -- node[auto=false]{\ldots} (6);
\end{tikzpicture}
\raisebox{-13ex}{$\;\;\;\rewrite\;\;\;$}
\begin{tikzpicture}[scale=\tableauscale,
    baseline=(a.north),
    sibling distance=5em,level distance=10ex,
    every node/.style = {transform shape,anchor=mid}]]
    \node (a) {$N^\prime$}
    child { node {$N^{\prime\prime}$} 
      \tria{$T_i$} 
      edge from parent[dotted,thick] };
\end{tikzpicture}
\end{figure}
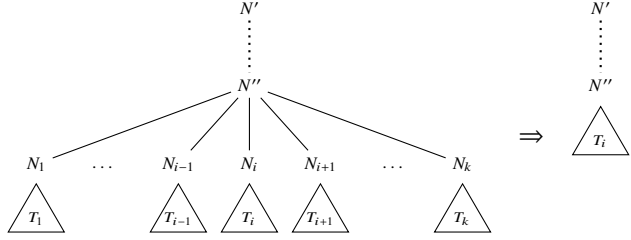

The following conversion ensures the \name{leaf-only} property.  It is again
specified as a procedure that destructively manipulates a tableau.  We use
there the notion of a \defname{fresh copy} of an ordered tree~$T$, which is an
ordered tree~$T^\prime$ with fresh nodes and edges, related to $T$ through a
bijection $c$ such that any node $N$ of $T$ has the same labels (e.g., literal
label and side label) as node $c(N)$ of $T^\prime$ and such that the $i$-th
edge originating in node $N$ of $T$ ends in node~$M$ if and only if the $i$-th
edge originating in node $c(N)$ of $T^\prime$ ends in node $c(M)$.  The
procedure is illustrated by Fig.~\ref{fig-leafonly}
and~\ref{fig-leafonly-fixing}. Its termination is then shown with
Proposition~\ref{proc-leafonly-terminates}.

\begin{proc}[Leaf-Only Conversion]
\label{proc-leafonly}

\algoinput A closed, eager and regular clausal tableau and a set~$S$ of pairwise
non-comple\-mentary literals that occur as literal labels of nodes of the
tableau.

\algoskip

\algomethod Repeat the following operations until the tableau is leaf-only
for~$S$:
\begin{enumerate}
\item \label{step-proc-pick} Let $N$ be the inner node whose literal label is
  in $S$ that is first visited by traversing the tableau in pre-order.  Let
  $N^\prime$ be the parent of $N$.

\item Create a fresh copy~$U$ of the subtree rooted at $N^\prime$.  In~$U$
  remove the edges that originate in the node corresponding to $N$.

\item \label{step-proc-attach} Remove the edges originating in $N^\prime$ and
  replace them with the edges originating in $N$.

\item \label{step-proc-fix} For each leaf descendant~$M$ of $N^\prime$ with
  $\nlit{M} = \du{\nlit{N}}$: Create a fresh copy~$U^\prime$ of $U$. Change
  the origin of the edges originating in the root of $U^\prime$ to~$M$.

\item \label{step-proc-simp} Ensure eagerness and regularity by simplifying
  with Procedure~\ref{proc-simp-eager} and \ref{proc-simp-regular}.
\end{enumerate}

\algooutput A closed, eager and regular clausal tableau whose clauses are also
clauses of the input tableau and which is leaf-only for $S$.
\end{proc}

\begin{figure}[b]
\centering
\caption{Conversion step for ensuring the leaf-only property with
  Procedure~\ref{proc-leafonly}.  Node~$N$ in the procedure description
  corresponds to $N_i$ in the figure, where $1 \leq i \leq k$.  A triangle
  below a node represents the edges originating in the node (which might be
  none, except for $T_i$) together with the descendants of the node and all
  edges between them. Triangle~$T_i^\prime$ is obtained from $T_i$ with steps
  illustrated in Fig.~\ref{fig-leafonly-fixing}.}
\label{fig-leafonly}

\begin{tikzpicture}[scale=\tableauscale,
    baseline=(a.north),
    sibling distance=5em,level distance=10ex,
    every node/.style = {transform shape,anchor=mid}]]
    \node (a) {$N^\prime$}
          [sibling distance=4.0em,level distance=10ex]
          child { node (3) {$N_1$} \tria{$T_1$} }
          child[missing] { node {$N_1$}  }
          child { node (4) {$N_{i-1}$}  \tria{$T_{i-1}$} }
          child { node {$N_i$} \tria{$T_i$} }
          child { node (5) {$N_{i+1}$} \tria{$T_{i+1}$} }
          child[missing] { node {$N_{i+1}$}  }
          child { node (6) {$N_{k}$} \tria{$T_k$} };
    \path (3) -- node[auto=false]{\ldots} (4);
    \path (5) -- node[auto=false]{\ldots} (6);
\end{tikzpicture}
\raisebox{-7ex}{$\;\;\rewrite\;\;$}
\begin{tikzpicture}[scale=\tableauscale,
    baseline=(a.north),
    sibling distance=5em,level distance=10ex,
    every node/.style = {transform shape,anchor=mid}]]
    \node (a) {$N^\prime$} \tria{$T_i^\prime$};
\end{tikzpicture}
\end{figure}
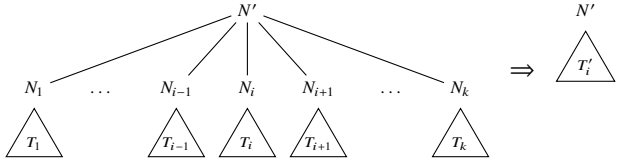

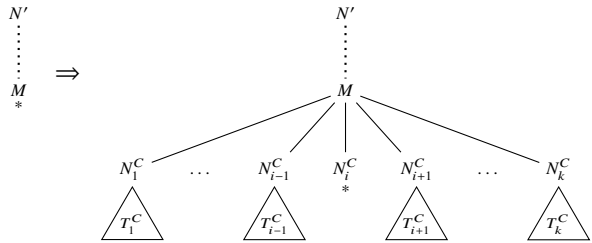
\begin{figure}
\centering
\caption{Conversion step of a leaf $M$ with $\nlit{M} = \du{\nlit{N}}$ in
  Procedure~\ref{proc-leafonly} to obtain $T_i^\prime$ from $T_i$ (see
  Fig.~\ref{fig-leafonly}).  An asterisk indicates leaves for which it is
  ensured that they are closed.  Node $N$ in the procedure description
  corresponds to $N_i$ here.  Superscripts $C$ indicate that copies of the
  subtrees referenced in Fig.~\ref{fig-leafonly} are used.  Node $N^\prime$,
  depicted also in Fig.~\ref{fig-leafonly}, is included here just to indicate
  explicitly that all affected nodes $M$ are descendants of $N^\prime$.}
\label{fig-leafonly-fixing}

\begin{tikzpicture}[scale=\tableauscale,
    baseline=(a.north),
    sibling distance=5em,level distance=10ex,
    every node/.style = {transform shape,anchor=mid}]]
    \node (a) {$N^\prime$}
    child { \closednode {$M$} 
    edge from parent[dotted,thick]};
\end{tikzpicture}
\raisebox{-7ex}{$\;\;\rewrite\;\;$}
\begin{tikzpicture}[scale=\tableauscale,
    baseline=(a.north),
    sibling distance=5em,level distance=10ex,
    every node/.style = {transform shape,anchor=mid}]]
    \node (a) {$N^\prime$}
    child { node {$M$}
      [sibling distance=4.0em,level distance=10ex]
      edge from parent[dotted,thick]
      child { node (3) {$N_1^C$} \tria{$T_1^C$} edge from parent[solid,thin] }
      child[missing] { node {$N_1$}  }
      child { node (4) {$N_{i-1}^C$}  \tria{$T_{i-1}^C$} edge from parent[solid,thin] }
      child { \closednode {$N_i^C$} edge from parent[solid,thin] }
      child { node (5) {$N_{i+1}^C$} \tria{$T_{i+1}^C$} edge from parent[solid,thin] }
      child[missing] { node {$N_{i+1}^C$}  }
      child { node (6) {$N_{k}^C$} \tria{$T_k^C$} edge from parent[solid,thin] }
    };
    \path (3) -- node[auto=false]{\ldots} (4);
    \path (5) -- node[auto=false]{\ldots} (6);
\end{tikzpicture}
\end{figure}

\begin{prop}[Termination of Leaf-Only Conversion]
\label{proc-leafonly-terminates}
Procedure~\ref{proc-leafonly} terminates.
\end{prop}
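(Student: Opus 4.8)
The plan is to show that the loop in Procedure~\ref{proc-leafonly} can only run finitely often by exhibiting a measure on tableaux that strictly decreases in a well-founded order at every pass. The first step is to pin down the invariants the loop maintains. Every pass preserves closedness, eagerness and regularity (these are exactly the output guarantees of the clean-up in step~\ref{step-proc-simp} via Procedures~\ref{proc-simp-eager} and~\ref{proc-simp-regular}), and, crucially, every clause of the current tableau is a clause of the \emph{input} tableau, so no new literal labels are ever introduced. Hence the set~$\Lambda$ of literals occurring as labels is fixed and finite throughout the run. Together with regularity (Definition~\ref{def-regular}) this bounds every branch length by $D\eqdef|\Lambda|$, so the tableau depth stays $\le D$; since the branching degree is bounded by the maximal length of an input clause, the whole tableau has boundedly many nodes at \emph{every} iteration. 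Consequently all tableaux arising in the run lie in one fixed finite set, and it suffices to find \emph{any} measure into a well-founded order that strictly decreases per pass.

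Next I would describe the effect of a single pass precisely, following Figures~\ref{fig-leafonly} and~\ref{fig-leafonly-fixing}. Let $N$ be the pre-order-first inner node with $\nlit{N}\in S$, let $\ell\eqdef\nlit{N}$, let $N'$ be its parent and $d$ its depth. The pass splices $N$ out (its subtree is lifted one level onto $N'$) and grafts, below each leaf $M$ with $\nlit{M}=\du{\ell}$, a fresh copy of the context subtree at $N'$ in which the copy of $N$ has been turned into a leaf. I would record three structural facts. First, since $S$ is pairwise non-complementary, $\du{\ell}\notin S$, so the leaves $M$, although turned into inner nodes, are \emph{not} bad. Second, by regularity no node of the lifted subtree of $N$ carries $\ell$, and every grafted node lies strictly below depth $d$; hence every inner occurrence of $\ell$ created inside the modified region is strictly deeper than $d$. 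Third, relocation and copying preserve the leaf/inner status of every node, and the clean-up of step~\ref{step-proc-simp} only ever turns inner nodes into leaves (never the reverse) and so cannot create bad inner nodes. In particular a pass cannot produce inner occurrences of any member of $S$ except by relocating existing ones \emph{upward} or duplicating existing ones strictly \emph{downward}.

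Finally I would build the measure, and here lies the main obstacle. For a \emph{single} label $\ell$ the argument is clean: the multiset of depths of the inner occurrences of $\ell$ strictly decreases in the ordering in which greater depth counts as smaller (the occurrence at $N$ is removed outright and each duplicated occurrence is replaced by strictly deeper copies), and this order is well founded because depths are bounded by $D$. The difficulty is that the procedure always processes the pre-order-first bad node, mixing labels: lifting the subtree of $N$ may move inner occurrences of \emph{other} labels to shallower depths, while the grafting simultaneously duplicates context occurrences to greater depths. These two effects move depth in opposite directions, so neither the plain count of bad inner nodes, nor any single depth- or co-depth-multiset, nor an exponential-in-depth potential is monotone (the upward lift defeats any ``shallow-heavy'' potential, while the downward duplication defeats any ``deep-heavy'' one). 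My plan is therefore to combine the bounded depth into a lexicographic measure organized by depth level (refined by the per-label multisets and by the pre-order position of the first bad node), and to verify that at the shallowest level actually affected the situation strictly improves once the upward relocation and the strictly-deeper duplication are both accounted for, using the fact that leaf-only-ness for any already-treated sub-collection of $S$ is preserved. Discharging this reconciliation, together with the (routine) observation that the step~\ref{step-proc-simp} clean-up never increases the measure, is the crux and the only genuinely delicate part of the proof.
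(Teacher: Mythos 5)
Your first two paragraphs are essentially sound, and your diagnosis that neither the raw count of bad inner nodes nor a single depth multiset is monotone is accurate. But the proposal stops exactly where the proof has to happen: you yourself label ``discharging this reconciliation'' as the undone crux, so what you have is a plan with a genuine gap, not an argument. The missing idea is that all of the depth bookkeeping can be discarded, because of a fact your second paragraph passes over: after step~\ref{step-proc-fix}, every occurrence of $\ell=\nlit{N}$ that remains below $N^\prime$ lies inside a grafted copy and hence strictly below a node $M$ with $\nlit{M}=\du{\ell}$; such a node is \emph{closed} in the sense of Definition~\ref{def-tab-closed}, and since the tableau is closed, the eagerness simplification in step~\ref{step-proc-simp} (Procedure~\ref{proc-simp-eager}) necessarily prunes its outgoing edges and turns it into a leaf. (Occurrences of $\ell$ in the lifted subtree of $N$ are already excluded by regularity, as you note, and neither Procedure~\ref{proc-simp-eager} nor Procedure~\ref{proc-simp-regular} ever turns a leaf back into an inner node.) So at the end of the round \emph{no} inner descendant of $N^\prime$ carries $\ell$ at all --- your concession that duplicated occurrences of $\ell$ survive as inner nodes ``strictly deeper than $d$'' gives away precisely the point that makes a simple measure work. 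Combined with your own observation that the only label newly appearing on inner nodes below $N^\prime$ is $\du{\ell}\notin S$, this shows that the \emph{set} of members of $S$ labelling inner descendants of $N^\prime$ --- the paper's $\nbadlits{N^\prime}$ --- loses $\ell$ and gains nothing, so its cardinality strictly drops. Being a set of labels localized at $N^\prime$ rather than a multiset of occurrences, this quantity is immune to both of the opposing depth effects you struggle with: upward relocation and downward duplication change neither which labels occur on inner nodes below $N^\prime$ nor, a fortiori, their number.

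What then remains is only the relocation of the selected node between rounds, which you correctly identify as an issue but propose to absorb into an unverified depth-stratified lexicographic order. The paper instead uses a positional prefix: the measure is the string $\ncode{N^\prime}|\nbadlits{N^\prime}|$, where $\ncode{N^\prime}$ lists along the root-to-$N^\prime$ path the number of right siblings of each node, terminated by $\omega$, compared lexicographically; regularity bounds the string length, giving a well-order, and the fact that all modifications of a round stay inside the subtree rooted at $N^\prime$ confines where the next selected node can lie. I would urge you to replace the depth-multiset machinery by this pair $(\ncode{N^\prime},\,|\nbadlits{N^\prime}|)$: without the pruning observation above I do not see how your ``shallowest affected level'' claim could be verified, since at depth $d$ the lifting of $N$'s subtree can introduce new inner occurrences of \emph{other} members of $S$ even as the occurrence of $\ell$ there disappears, and the grafting genuinely multiplies inner occurrences of those other labels further down.
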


\begin{proof}
We give a measure that strictly decreases in each round of the procedure.
Consider a single round of the steps~1.--5. of Procedure~\ref{proc-leafonly} with
$N$ and $N^\prime$ as determined in step~1. Then:
\begin{enumerate}[label={\roman*}.,leftmargin=2em]
\item \label{item-proc-leafonly-allbelow} All tableau modifications made in
  the round are in the subtree rooted at~$N^\prime$.
\item \label{item-proc-leafonly-toleaf} At finishing the round all descendants
  of $N^\prime$ with the same literal label as~$N$ are leaves.
\item \label{item-proc-leafonly-nonewnonleaf} All literal labels of inner
  nodes that are descendants of $N^\prime$ and are different from
  $\du{\nlit{N}}$ at finishing the round are already literal labels of inner
  nodes that are descendants of~$N^\prime$ when entering the round.
\end{enumerate}
We can now specify the measure that strictly decreases in each round of
Procedure~\ref{proc-leafonly}. For a node $N$ define $\nbadlits{N}$ as the set
of literal labels that occur in inner (i.e., non-leaf) descendants of~$N$ and
are members of~$S$.  From the above items~(\ref{item-proc-leafonly-toleaf})
and~(\ref{item-proc-leafonly-nonewnonleaf}) it follows that for $N^\prime$ as
determined in step~1 of Procedure~\ref{proc-leafonly} the cardinality of
$\nbadlits{N^\prime}$ is strictly decreased in a round of steps~1.--5. of the
procedure.  However, a different node might be determined as~$N^\prime$ in
step~1 of the next round.  To specify a globally decreasing measure we define
a further auxiliary notion: Let $N_n$ be a node whose ancestors are in
root-to-leaf order the nodes $N_1,\ldots,N_{n-1}$. Define $\ncode{N_n}$ as the
string $I_1\ldots I_n \omega$ of numbers, where for $i \in \{1,\ldots,n\}$ the
number $I_i$ is the number of right siblings of $N_i$.  With
item~(\ref{item-proc-leafonly-allbelow}) it then follows that the following
string of numbers, determined at step~1 of a round, is strictly reduced from
round to round w.r.t. the lexicographical order of strings of numbers:
 \[\ncode{N^\prime}|\nbadlits{N^\prime}|.\]
Regularity ensures that the length of the strings to be considered can not be
larger than the finite number of literal labels of nodes of the input tableau
plus~$3$ (a leading $0$ for the root, which has no literal label; $\omega$;
and $|\nbadlits{N^\prime}|$).  With the lexicographical order restricted to
strings up to that length we have a well-order and the strict reduction
ensures termination.
\qed
\end{proof}

The following conversion ensures contiguity as far as required for \acitx.  It
is illustrated by Fig.~\ref{fig-simp-contig}.

\begin{proc}[Ensuring Contiguity in Special Cases]
\label{proc-contig}
\ 

\algoinput An eager and regular clausal tableau and a set $\SetOfS$ of
unordered pairs of literals such that for each such pair $\{L_1, L_2\}$ it
holds that:
\begin{itemize}
\item $L_1$ and $L_2$ occur as literal labels of nodes of the tableau.
\item There is a literal $L_0$ such that all clauses of the tableau in which
  $L_1$ or $L_2$ occur as literals are of the form $L_0 \lor L_1$ or $L_0 \lor
  L_2$.
\item All nodes of the tableau with $L_0$ as literal label are leaves.
\end{itemize}

\algomethod Repeat the following until the resulting tableau is contiguous for
all members of~$\SetOfS$:
\begin{enumerate}
\item \label{proc-contig-step-select} Select an inner node $N$ that has a
  descendant $M$ such that $\{\nlit{N}, \nlit{M}\} \in \SetOfS$ and there is a
  third node that is a descendant of $N$ and an ancestor of $M$.
\item Create fresh nodes $M_0^\prime$ and $M^\prime$ where $M_0^\prime$ has
  the same label values (i.e., the literal label and, if applicable, the side
  label) as the sibling of $M$, and $M^\prime$ has the same label values as
  $M$.
\item Remove the outgoing edges from $N$ and attach them to $M^\prime$.
\item \label{proc-contig-step-final-trafo} Add $M_0^\prime$ and $M^\prime$ as
  children to $N$.
\item Apply Procedure~\ref{proc-simp-regular} to ensure regularity.
\end{enumerate}

\algooutput An eager and regular clausal tableau whose clauses are also
clauses of the input tableau and which is contiguous for all members of the
input set~$\SetOfS$. The following further properties of the input tableau are
preserved: closed, leaf-only for a set of literals that does not contain
members of the pairs in $\SetOfS$.
\end{proc}

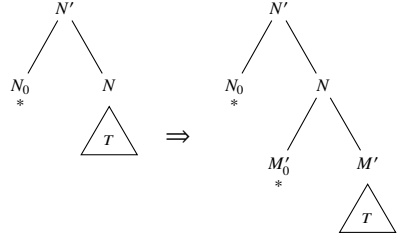
\begin{figure}
\centering
\caption{First steps of a round for establishing contiguity with
  Procedure~\ref{proc-contig}. Node~$M$ and the third node between $N$ and $M$
  mentioned in the procedure definition are descendants of $N$ and not shown
  explicitly.  Nodes $N_0$ and $M_0^\prime$ have the same labels. The depicted
  conversion in
  steps~\ref{proc-contig-step-select}--\ref{proc-contig-step-final-trafo} is
  followed by the application of Procedure~\ref{proc-simp-regular} to re-establish
  regularity.}
\label{fig-simp-contig}
\begin{tikzpicture}[scale=\tableauscale,
    baseline=(a.north),
    sibling distance=5em,level distance=10ex,
    every node/.style = {transform shape,anchor=mid}]]
    \node (a) {$N^{\prime}$}
    child { \closednode {$N_0$} } 
    child { node {$N$} \tria{$T$} };
\end{tikzpicture}
\raisebox{-13ex}{$\;\;\;\rewrite\;\;\;$}
\begin{tikzpicture}[scale=\tableauscale,
    baseline=(a.north),
    sibling distance=5em,level distance=10ex,
    every node/.style = {transform shape,anchor=mid}]]
    \node (a) {$N^{\prime}$}
    child { \closednode {$N_0$} } 
    child { node {$N$} 
      child { \closednode {$M_0^\prime$} }
      child { node {$M^\prime$} \tria{$T$} }
    };
\end{tikzpicture}
\end{figure}

Termination of Procedure~\ref{proc-contig} follows since the number of nodes
that can be selected in step~\ref{proc-contig-step-select} strictly decreases
in each round.  Like Procedure~\ref{proc-simp-eager} and
\ref{proc-simp-regular}, the procedure can be considered as a tableau
simplification.

The procedures defined in this section suggest to apply them in the presented
order, that is, Procedure~\ref{proc-simp-eager} (eagerness),
Procedure~\ref{proc-simp-regular} (regularity), Procedure~\ref{proc-leafonly}
(leaf-only property) and Procedure~\ref{proc-contig} (contiguity) to the
closed clausal tableau obtained by a prover from the structure preserving
clausifications of $\DEFP{F}$ and $\DEFPNOT{G}$. The converted tableau is
then an \acit, suitable for extracting the access interpolant according to
Definition~\ref{def-aipol}. If the closed clausal tableau obtained by the
prover is already a positive hyper tableau, then it is, of course, sufficient
ensure contiguity with Procedure~\ref{proc-contig}.

\section{Examples for Conversion to \ACITX}
\label{sec-access-convert-examples}

In this section the definitional normalization of \RQFO formulas for access
interpolation and the conversion of closed clausal tableaux for them to \acitx
is illustrated with examples. We consider computing an access interpolant for
the single \RQFO sentence
\begin{equation}
\label{eq-examp-fg}
F \;=\; G \;=\;\forall x\, (\lnot \rk(x) \lor \exists y\, (\sk(x,y) \land \true))
\end{equation}
in the role of both interpolation inputs.  Of course, the sentence itself is
then trivially also an access interpolant. Nevertheless, with this example
different structuring possibilities of clausal tableaux as obtained by provers
and the effects of the conversions show up.

The definitional normal forms of $\DEFP{F}$ and $\DEFPNOT{G}$, conjoined
together, yield the following clausal formula, where $\ffk$ and $\ggk$ are
Skolem functions, as basis for interpolant computation.  The respective clause
form according to Lemma~\ref{lem-input-clauseforms} is there annotated in the
right column.  Clauses obtained from $\DEFP{F}$ are shown against
\taaatxt{grey background.}
\begin{equation}
\label{eq-formula-examp-xy}
\begin{array}{llllc}
\rowcolor{tcolaaabg}
\taaa{\dk_{\LL \emptyseq}} && \taaa{\land} & \hspace*{3em} & 1\\
\rowcolor{tcolaaabg}
\taaa{(\lnot \dk_{\LL \emptyseq} \lor \lnot \rk(x) \lor \dk_{\LL 1}(x))} && \taaa{\land} && 5\\
\rowcolor{tcolaaabg}
\taaa{(\lnot \dk_{\LL 1}(x) \lor \sk(x,\ffk(x)))} && \taaa{\land} && 7\\
\tbbb{\dk_{\RR \emptyseq}} && \tbbb{\land} && 9 \\
\tbbb{(\lnot \dk_{\RR \emptyseq} \lor \rk(\ggk))} && \tbbb{\land} && 14\\
\tbbb{(\lnot \dk_{\RR \emptyseq} \lor \dk_{\RR 1}(\ggk))} && \tbbb{\land} && 15\\
\tbbb{(\lnot \dk_{\RR 1}(x) \lor \lnot \sk(x,y) \lor \dk_{\RR 11})} && \tbbb{\land} && 11\\
\tbbb{\lnot \dk_{\RR 11}} &&&& 10\\
\end{array}
\end{equation}
Using global position specifiers (Definition~\ref{def-globalpos}), the value
of some of the symbolic designators in Lemma~\ref{lem-input-clauseforms} is as
follows: $R_{\LL \emptyset} = R_{\RR \emptyset} = \rk$, $R_{\LL 1} = R_{\RR 1}
= \sk$, $\sigma_{\LL 1} = \{y \mapsto \skf(x)\}$, $\sigma_{\RR 0} = \{x
\mapsto \skg\}$.  The Skolem functions $\f{f}_{\la \LL 1, 1\ra}$ and
$\f{f}_{\la \RR \emptyset, 1\ra}$ are expressed by $\ffk$ and $\ggk$,
respectively, for readability.

In the examples shown below we will consider closed clausal tableaux for the
clausal formula~(\ref{eq-formula-examp-xy}), where the tableau clauses are the
following instances of the clauses of
formula~(\ref{eq-formula-examp-xy}). Again the respective clause form
according to Lemma~\ref{lem-input-clauseforms} is annotated in the right
column.
\begin{equation}
\label{eq-clauses-examp}
\begin{array}{llc}
\rowcolor{tcolaaabg}
\taaa{\dk_{\LL \emptyseq}} & \hspace*{3em} & 1\\
\rowcolor{tcolaaabg}
\taaa{\lnot \dk_{\LL \emptyseq} \lor \lnot \rk(\ggk) \lor \dk_{\LL 1}(\ggk)} && 5\\
\rowcolor{tcolaaabg}
\taaa{\lnot \dk_{\LL 1}(\ggk) \lor \sk(\ggk,\ffk(\ggk))} && 7\\
\tbbb{\dk_{\RR \emptyseq}} && 9 \\
\tbbb{\lnot \dk_{\RR \emptyseq} \lor \rk(\ggk)} && 14\\
\tbbb{\lnot \dk_{\RR \emptyseq} \lor \dk_{\RR 1}(\ggk)} && 15\\
\tbbb{\lnot \dk_{\RR 1}(\ggk) \lor \lnot \sk(\ggk,\ffk(\ggk)) \lor \dk_{\RR 11}} && 11\\
\tbbb{\lnot \dk_{\RR 11}} && 10\\
\end{array}
\end{equation}

\noindent
To qualify as \acit the ground tableau then has to be leaf-only for the set
\begin{equation}
\label{eq-examp-leafonly-set}
\{\taaa{\lnot \dk_{\LL \emptyseq},\, \lnot \dk_{\LL 1}(\ggk),\,
\lnot \rk(\ggk)},\, \tbbb{\lnot \dk_{\RR \emptyseq},\, \lnot \dk_{\RR 1}(\ggk),\, \lnot \dk_{\RR 11},\, \lnot \sk(\ggk,\ffk(\ggk))}\}
\end{equation}
and contiguous for the pair
\begin{equation}
\label{eq-examp-contiguous-set}
\{\tbbb{\rk(\ggk),\, \dk_{\RR 1}(\ggk)}\}.
\end{equation}

As noted in Sect.~\ref{sec-access-convert}, positive hyper tableaux which
satisfy a certain contiguity condition are already \acitx.  Such a tableau is
typically constructed by ``bottom-up'' calculi that would start with the
positive ``root definers'' $\dk_{\LL \emptyseq}$ and $\dk_{\RR \emptyseq}$ and
proceed by ``applying'' clauses like rules that fire in a forward-chaining
manner, that is, extending a branch only with a clause whose negative literals
all have complements in the branch.  The following tableau gives an example:
\begin{examp}[Positive Hyper Tableau]
\label{examp-hyper}
Figure~\ref{fig-tab-ex-4} shows a closed positive hyper tableau for the
clausal formula~(\ref{eq-formula-examp-xy}) that is an \acit for $F$ and $G$
and thus allows direct extraction of an access interpolant.  Nodes with side
label $\aaa$ are shown with \taaatxt{grey background.}

\medskip

\renewcommand{\extabscale}{0.7}
\renewcommand{\extabld}{9ex}

\begin{tableaufig}{Example~\ref{examp-hyper} -- Positive Hyper Tableau}
\label{fig-tab-ex-4}
\begin{tikzpicture}[scale=\extabscale,
    baseline=(a.north),
    sibling distance=5em,level distance=\extabld,
    every node/.style = {transform shape,anchor=mid}]]
    \node (a) {$\taaa{\dk_{\LL \emptyseq}}$}
    child { node {$\tbbb{\dk_{\RR \emptyseq}}$ }
      child { node {$\tbbb{\lnot \dk_{\RR \emptyseq}}$} }
      child { node {$\tbbb{\rk(\ggk)}$}
        child { node {$\tbbb{\lnot \dk_{\RR \emptyseq}}$} }
        child { node {$\tbbb{\dk_{\RR 1}(\ggk)}$} 
          child { node {$\taaa{\lnot \dk_e}$} }
          child { node {$\taaa{\lnot \rk(\ggk)}$} }
          child { node {$\taaa{\dk_{\LL 1}(\ggk)}$ } 
            child { node {$\taaa{\lnot \dk_{\LL 1}(\ggk)}$} }
            child { node {$\taaa{\sk(\ggk,\ffk(\ggk))}$} 
              child { node {$\tbbb{\lnot \dk_{\RR 1}(\ggk)}$} }
              child { node {$\tbbb{\lnot \sk(\ggk,\ffk(\ggk))}$} }
              child { node {$\tbbb{\dk_{\RR 11}}$} 
                child { node {$\tbbb{\lnot \dk_{\RR 11}}$}}
                }
              }
            }
          }
        }
    };
\end{tikzpicture}
\end{tableaufig}

\end{examp}

The remaining examples shown in this section follow start from ``connection
tableaux'', or, more precisely, \name{tightly connected} tableaux (see, e.g.,
\cite{letz:stenz:handbook}): Each inner node with exception of the root has a
child with complementary literal label.  Such tableaux are constructed from
provers based on model elimination or the connection method, which maintain
the \name{tightly connected} property throughout tableau construction.
Typically they build the tableau ``top-down'' in a goal-sensitive way by
starting in a theorem proving setting with a clause obtained from the theorem
in contrast to the axioms.  This connectedness property of the tableau
returned by provers might get lost by our conversion to \acitx. Moreover,
also the weaker property of \name{path connectedness}, that is, among siblings
(except for the root and its children) there exists a node that has an
\emph{ancestor} with complementary literal label, is not ensured by the
conversions.

\renewcommand{\extabscale}{0.5}
\renewcommand{\extabld}{10ex}

\begin{examp}[Connection Tableau I]
\label{examp-conn-1}
Figure~\ref{fig-tab-ex-1-1} shows a closed tightly connected clausal tableau
for the clausal formula~(\ref{eq-formula-examp-xy}).  Nodes with side label
$\aaa$ are shown with \taaatxt{grey background.} Edges that connect nodes with
complementary literal labels are emphasized.  The node picked as $N$ in the
\emph{next} round of Procedure~\ref{proc-leafonly} is marked by a surrounding
rectangle.  Figure~\ref{fig-tab-ex-1-2} shows the result of applying a round
of Procedure~\ref{proc-leafonly}.  Again the node picked as $N$ in the next
round is marked. Further rounds yield the tableaux of
Fig.~\ref{fig-tab-ex-1-3} and Fig.~\ref{fig-tab-ex-1-4}.  The latter is
leaf-only for the set~(\ref{eq-examp-leafonly-set}) of literals, but not
contiguous for the pair~(\ref{eq-examp-contiguous-set}).  The literals that
are chosen as $N$ and $M$ in Procedure~\ref{proc-contig} are displayed in oval
markings. The result of applying Steps~1.-4. of Procedure~\ref{proc-contig} is
then shown in Fig.~\ref{fig-tab-ex-1-5}.  The tableau now also is contiguous
for the pair~(\ref{eq-examp-contiguous-set}), but violates regularity with the
nodes marked by a flag. The regularity simplification of
Procedure~\ref{proc-simp-regular} finally yields the tableau in
Fig.~\ref{fig-tab-ex-1-6}, which is an \acit and actually identical to the
positive hyper tableau in Fig.~\ref{fig-tab-ex-4}.

\renewcommand{\extabld}{9ex}
\renewcommand{\extabscale}{0.7}

\begin{tableaufigTwoCol}{Example~\ref{examp-conn-1} -- Stage 1}
\label{fig-tab-ex-1-1}
\begin{tikzpicture}[scale=\extabscale,
    baseline=(a.north),
    sibling distance=9em,level distance=\extabld,
    every node/.style = {transform shape,anchor=mid}]]
    \node (a) {$\taaa{\dk_{\LL \emptyseq}}$}
    child { node {$\taaa{\lnot \dk_{\LL \emptyseq}}$} \linked }
    child { \nodeMarkLeafonly {$\taaa{\lnot \rk(\ggk)}$} 
      [sibling distance=5em]
      child { node {$\tbbb{\lnot \dk_{\RR \emptyseq}}$}
        child { node {$\tbbb{\dk_{\RR \emptyseq}}$} \linked }
      }
      child { node {$\tbbb{\rk(\ggk)}$} \linked }
    }
    child { node {$\taaa{\dk_{\LL 1}(\ggk)}$}
      [sibling distance=5em]
      child { node {$\taaa{\lnot \dk_{\LL 1}(\ggk)}$} \linked }
      child { node {$\taaa{\sk(\ggk,\ffk(\ggk))}$} 
        child { node {$\tbbb{\lnot \dk_{\RR 1}(\ggk)}$}
          child { node {$\tbbb{\lnot \dk_{\RR \emptyseq}}$} 
            child { node {$\tbbb{\dk_{\RR \emptyseq}}$} \linked }
          }
          child { node {$\tbbb{\dk_{\RR 1}(\ggk)}$} \linked }
        }
        child { node {$\tbbb{\lnot \sk(\ggk,\ffk(\ggk))}$} \linked }
        child { node {$\tbbb{\dk_{\RR 11}}$} 
          child { node {$\tbbb{\lnot \dk_{\RR 11}}$} \linked }
        }
      }
    };
\end{tikzpicture}
\end{tableaufigTwoCol}
\begin{tableaufigTwoCol}{Example~\ref{examp-conn-1} -- Stage 2}
\label{fig-tab-ex-1-2}
\begin{tikzpicture}[scale=\extabscale,
    baseline=(a.north),
    sibling distance=10em,level distance=\extabld,
    every node/.style = {transform shape,anchor=mid}]]
    \node (a) {$\taaa{\dk_{\LL \emptyseq}}$}
    child { \nodeMarkLeafonly {$\tbbb{\lnot \dk_{\RR \emptyseq}}$}
      child { node {$\tbbb{\dk_{\RR \emptyseq}}$}  \linked }
    }
    child { node {$\tbbb{\rk(\ggk)}$} 
      [sibling distance=5em]
      child { node {$\taaa{\lnot \dk_{\LL \emptyseq}}$} }
      child { node {$\taaa{\lnot \rk(\ggk)}$ } \linked
        [sibling distance=5em]
      }
      child { node {$\taaa{\dk_{\LL 1}(\ggk)}$}
        [sibling distance=5em]
        child { node {$\taaa{\lnot \dk_{\LL 1}(\ggk)}$} \linked }
        child { node {$\taaa{\sk(\ggk,\ffk(\ggk))}$} 
          child { node {$\tbbb{\lnot \dk_{\RR 1}(\ggk)}$}
            child { node {$\tbbb{\lnot \dk_{\RR \emptyseq}}$} 
              child { node {$\tbbb{\dk_{\RR \emptyseq}}$} \linked }
            }
            child { node {$\tbbb{\dk_{\RR 1}(\ggk)}$} \linked }
          }
          child { node {$\tbbb{\lnot \sk(\ggk,\ffk(\ggk))}$} \linked }
          child { node {$\tbbb{\dk_{\RR 11}}$} 
            child { node {$\tbbb{\lnot \dk_{\RR 11}}$} \linked }
          }
        }
      }
    };
\end{tikzpicture}
\end{tableaufigTwoCol}

\begin{tableaufigTwoCol}{Example~\ref{examp-conn-1} -- Stage 3}
\label{fig-tab-ex-1-3}
\begin{tikzpicture}[scale=\extabscale,
    baseline=(a.north),
    sibling distance=5em,level distance=\extabld,
    every node/.style = {transform shape,anchor=mid}]]
    \node (a) {$\taaa{\dk_{\LL \emptyseq}}$}
    child { node {$\tbbb{\dk_{\RR \emptyseq}}$}
      child { node {$\tbbb{\lnot \dk_{\RR \emptyseq}}$} \linked
      }
      child { node {$\tbbb{\rk(\ggk)}$} 
        [sibling distance=5em]
        child { node {$\taaa{\lnot \dk_{\LL \emptyseq}}$} }
        child { node {$\taaa{\lnot \rk(\ggk)}$ } \linked
          [sibling distance=5em]
        }
        child { node {$\taaa{\dk_{\LL 1}(\ggk)}$}
          [sibling distance=5em]
          child { node {$\taaa{\lnot \dk_{\LL 1}(\ggk)}$} \linked }
          child { node {$\taaa{\sk(\ggk,\ffk(\ggk))}$} 
            child { \nodeMarkLeafonly {$\tbbb{\lnot \dk_{\RR 1}(\ggk)}$}
              child { node {$\tbbb{\lnot \dk_{\RR \emptyseq}}$} 
              }
              child { node {$\tbbb{\dk_{\RR 1}(\ggk)}$} \linked }
            }
            child { node {$\tbbb{\lnot \sk(\ggk,\ffk(\ggk))}$} \linked }
            child { node {$\tbbb{\dk_{\RR 11}}$} 
              child { node {$\tbbb{\lnot \dk_{\RR 11}}$} \linked }
            }
          }
        }
      }
    };
\end{tikzpicture}
\end{tableaufigTwoCol}
\begin{tableaufigTwoCol}{Example~\ref{examp-conn-1} -- Stage 4}
\label{fig-tab-ex-1-4}
\begin{tikzpicture}[scale=\extabscale,
    baseline=(a.north),
    sibling distance=5em,level distance=\extabld,
    every node/.style = {transform shape,anchor=mid}]]
    \node (a) {$\taaa{\dk_{\LL \emptyseq}}$}
    child { node {$\tbbb{\dk_{\RR \emptyseq}}$}
      child { node {$\tbbb{\lnot \dk_{\RR \emptyseq}}$} \linked
      }
      child { \nodeMarkContig {$\tbbb{\rk(\ggk)}$} 
        [sibling distance=5em]
        child { node {$\taaa{\lnot \dk_{\LL \emptyseq}}$} }
        child { node {$\taaa{\lnot \rk(\ggk)}$ } \linked
          [sibling distance=5em]
        }
        child { node {$\taaa{\dk_{\LL 1}(\ggk)}$}
          [sibling distance=5em]
          child { node {$\taaa{\lnot \dk_{\LL 1}(\ggk)}$} \linked }
          child { node {$\taaa{\sk(\ggk,\ffk(\ggk))}$} 
            child { node {$\tbbb{\lnot \dk_{\RR \emptyseq}}$} 
            }
            child { \nodeMarkContig {$\tbbb{\dk_{\RR 1}(\ggk)}$} 
              child { node {$\tbbb{\lnot \dk_{\RR 1}(\ggk)}$ } \linked
              }
              child { node {$\tbbb{\lnot \sk(\ggk,\ffk(\ggk))}$}  }
              child { node {$\tbbb{\dk_{\RR 11}}$} 
                child { node {$\tbbb{\lnot \dk_{\RR 11}}$} \linked }
              }
            }
          }
        }
      }
    };
\end{tikzpicture}
\end{tableaufigTwoCol}

\begin{tableaufigTwoCol}{Example~\ref{examp-conn-1} -- Stage 5}
\label{fig-tab-ex-1-5}
\begin{tikzpicture}[scale=\extabscale,
    baseline=(a.north),
    sibling distance=5em,level distance=\extabld,
    every node/.style = {transform shape,anchor=mid}]]
    \node (a) {$\taaa{\dk_{\LL \emptyseq}}$}
    child { node {$\tbbb{\dk_{\RR \emptyseq}}$}
      child { node {$\tbbb{\lnot \dk_{\RR \emptyseq}}$} \linked
      }
      child { node  {$\tbbb{\rk(\ggk)}$} 
        [sibling distance=5em]
        child { node  {$\tbbb{\lnot \dk_{\RR \emptyseq}}$ }}
        child { \nodeMarkRegular {$\tbbb{\dk_{\RR 1}(\ggk)}$} 
          child { node {$\taaa{\lnot \dk_{\LL \emptyseq}}$} }
          child { node {$\taaa{\lnot \rk(\ggk)}$ }
            [sibling distance=5em]
          }
          child { node {$\taaa{\dk_{\LL 1}(\ggk)}$}
            [sibling distance=5em]
            child { node {$\taaa{\lnot \dk_{\LL 1}(\ggk)}$} \linked }
            child { node {$\taaa{\sk(\ggk,\ffk(\ggk))}$} 
              child { node {$\tbbb{\lnot \dk_{\RR \emptyseq}}$} 
              }
              child { \nodeMarkRegular {$\tbbb{\dk_{\RR 1}(\ggk)}$} 
                child { node {$\tbbb{\lnot \dk_{\RR 1}(\ggk)}$ } \linked
                }
                child { node {$\tbbb{\lnot \sk(\ggk,\ffk(\ggk))}$}  }
                child { node {$\tbbb{\dk_{\RR 11}}$} 
                  child { node {$\tbbb{\lnot \dk_{\RR 11}}$} \linked }
                }
              }
            }
          }
        }
      }
    };
\end{tikzpicture}
\end{tableaufigTwoCol}
\begin{tableaufigTwoCol}{Example~\ref{examp-conn-1} -- Stage 6}
\label{fig-tab-ex-1-6}
\begin{tikzpicture}[scale=\extabscale,
    baseline=(a.north),
    sibling distance=5em,level distance=\extabld,
    every node/.style = {transform shape,anchor=mid}]]
    \node (a) {$\taaa{\dk_{\LL \emptyseq}}$}
    child { node {$\tbbb{\dk_{\RR \emptyseq}}$}
      child { node {$\tbbb{\lnot \dk_{\RR \emptyseq}}$} \linked
      }
      child { node  {$\tbbb{\rk(\ggk)}$} 
        [sibling distance=5em]
        child { node  {$\tbbb{\lnot \dk_{\RR \emptyseq}}$ }}
        child { node {$\tbbb{\dk_{\RR 1}(\ggk)}$} 
          child { node {$\taaa{\lnot \dk_{\LL \emptyseq}}$} }
          child { node {$\taaa{\lnot \rk(\ggk)}$ }
            [sibling distance=5em]
          }
          child { node {$\taaa{\dk_{\LL 1}(\ggk)}$}
            [sibling distance=5em]
            child { node {$\taaa{\lnot \dk_{\LL 1}(\ggk)}$} \linked }
            child { node {$\taaa{\sk(\ggk,\ffk(\ggk))}$} 
              child { node {$\tbbb{\lnot \dk_{\RR 1}(\ggk)}$ }
              }
              child { node {$\tbbb{\lnot \sk(\ggk,\ffk(\ggk))}$}  \linked }
              child { node {$\tbbb{\dk_{\RR 11}}$} 
                child { node {$\tbbb{\lnot \dk_{\RR 11}}$} \linked }
              }
            }
          }
        }
      }
    };
\end{tikzpicture}
\end{tableaufigTwoCol}
\end{examp}

\begin{examp}[Connection Tableau II]
\label{examp-conn-2}
Like Example~\ref{examp-conn-1}, this example starts with a closed tightly
connected clausal tableau for the clausal formula~(\ref{eq-formula-examp-xy})
and proceeds in rounds of Procedure~\ref{proc-leafonly}
(Fig.~\ref{fig-tab-ex-2-1}--\ref{fig-tab-ex-2-3}), steps 1.-4. of
Procedure~\ref{proc-contig} (Fig.~\ref{fig-tab-ex-2-4}) and regularity
simplification with Procedure~\ref{proc-simp-regular} to an \acit
(Fig.~\ref{fig-tab-ex-2-5}).

\renewcommand{\extabscale}{0.7}
\renewcommand{\extabld}{10ex}

\begin{tableaufigTwoCol}{Example~\ref{examp-conn-2} -- Stage 1}
\label{fig-tab-ex-2-1}
\begin{tikzpicture}[scale=\extabscale,
    baseline=(a.north),
    sibling distance=5em,level distance=\extabld,
    every node/.style = {transform shape,anchor=mid}]]
    \node (a) {$\tbbb{\dk_{\RR \emptyseq}}$}
    child { node {$\tbbb{\lnot \dk_{\RR \emptyseq}}$} \linked }
    child { node {$\tbbb{\rk(\ggk)}$} 
      child {  \nodeMarkLeafonly {$\taaa{\lnot \dk_{\LL \emptyseq}}$}
        child { node {$\taaa{\dk_{\LL \emptyseq}}$} \linked }
      }
      child { node {$\taaa{\lnot \rk(\ggk)}$} \linked }
      child { node {$\taaa{\dk_{\LL 1}(\ggk)}$} 
        child { node {$\taaa{\lnot \dk_{\LL 1}(\ggk)}$} \linked}
        child { node {$\taaa{\sk(\ggk,\ffk(\ggk))}$ }
          child { node {$\tbbb{\lnot \dk_{\RR 1}(\ggk)}$}
            child { node {$\tbbb{\lnot \dk_{\RR \emptyseq}}$} 
            }
            child { node {$\tbbb{\dk_{\RR 1}(\ggk)}$} \linked }
          }
          child { node {$\tbbb{\lnot \sk(\ggk,\ffk(\ggk))}$} \linked }
          child { node {$\tbbb{\dk_{\RR 11}}$} 
            child { node {$\tbbb{\lnot \dk_{\RR 11}}$} \linked }
          }
        }
      }
    };
\end{tikzpicture}
\end{tableaufigTwoCol}    
\begin{tableaufigTwoCol}{Example~\ref{examp-conn-2} -- Stage 2}
\label{fig-tab-ex-2-2}
\begin{tikzpicture}[scale=\extabscale,
    baseline=(a.north),
    sibling distance=5em,level distance=\extabld,
    every node/.style = {transform shape,anchor=mid}]]
    \node (a) {$\tbbb{\dk_{\RR \emptyseq}}$}
    child { node {$\tbbb{\lnot \dk_{\RR \emptyseq}}$} \linked }
    child { node {$\tbbb{\rk(\ggk)}$}
      child { node {$\taaa{\dk_{\LL \emptyseq}}$}
        child { node {$\taaa{\lnot \dk_{\LL \emptyseq}}$} \linked
        }
        child { node {$\taaa{\lnot \rk(\ggk)}$} }
        child { node {$\taaa{\dk_{\LL 1}(\ggk)}$} 
          child { node {$\taaa{\lnot \dk_{\LL 1}(\ggk)}$} \linked}
          child { node {$\taaa{\sk(\ggk,\ffk(\ggk))}$ }
            child { \nodeMarkLeafonly {$\tbbb{\lnot \dk_{\RR 1}(\ggk)}$}
              child { node {$\tbbb{\lnot \dk_{\RR \emptyseq}}$} 
              }
              child { node {$\tbbb{\dk_{\RR 1}(\ggk)}$} \linked }
            }
            child { node {$\tbbb{\lnot \sk(\ggk,\ffk(\ggk))}$} \linked }
            child { node {$\tbbb{\dk_{\RR 11}}$} 
              child { node {$\tbbb{\lnot \dk_{\RR 11}}$} \linked }
            }
          }
        }
      }
    };
\end{tikzpicture}
\end{tableaufigTwoCol}

\begin{tableaufigTwoCol}{Example~\ref{examp-conn-2} -- Stage 3}
\label{fig-tab-ex-2-3}
\begin{tikzpicture}[scale=\extabscale,
    baseline=(a.north),
    sibling distance=5em,level distance=\extabld,
    every node/.style = {transform shape,anchor=mid}]]
    \node (a) {$\tbbb{\dk_{\RR \emptyseq}}$}
    child { node {$\tbbb{\lnot \dk_{\RR \emptyseq}}$} \linked }
    child { \nodeMarkContig {$\tbbb{\rk(\ggk)}$}
      child { node {$\taaa{\dk_{\LL \emptyseq}}$}
        child { node {$\taaa{\lnot \dk_{\LL \emptyseq}}$} \linked
        }
        child { node {$\taaa{\lnot \rk(\ggk)}$} }
        child { node {$\taaa{\dk_{\LL 1}(\ggk)}$} 
          child { node {$\taaa{\lnot \dk_{\LL 1}(\ggk)}$} \linked}
          child { node {$\taaa{\sk(\ggk,\ffk(\ggk))}$ }
            child { node {$\tbbb{\lnot \dk_{\RR \emptyseq}}$} 
            }
            child { \nodeMarkContig {$\tbbb{\dk_{\RR 1}(\ggk)}$}
              child { node {$\tbbb{\lnot \dk_{\RR 1}(\ggk)}$}  \linked
              }
              child { node {$\tbbb{\lnot \sk(\ggk,\ffk(\ggk))}$} }
              child { node {$\tbbb{\dk_{\RR 11}}$} 
                child { node {$\tbbb{\lnot \dk_{\RR 11}}$} \linked }
              }
            }
          }
        }
      }
    };
\end{tikzpicture}
\end{tableaufigTwoCol} 
\begin{tableaufigTwoCol}{Example~\ref{examp-conn-2} -- Stage 4}
\label{fig-tab-ex-2-4}
\begin{tikzpicture}[scale=\extabscale,
    baseline=(a.north),
    sibling distance=5em,level distance=\extabld,
    every node/.style = {transform shape,anchor=mid}]]
    \node (a) {$\tbbb{\dk_{\RR \emptyseq}}$}
    child { node {$\tbbb{\lnot \dk_{\RR \emptyseq}}$} \linked }
    child { node {$\tbbb{\rk(\ggk)}$}
      child { node {$\tbbb{\lnot \dk_{\RR \emptyseq}}$} }
      child { \nodeMarkRegular {$\tbbb{\dk_{\RR 1}(\ggk)}$} 
        child { node {$\taaa{\dk_{\LL \emptyseq}}$}
          child { node {$\taaa{\lnot \dk_{\LL \emptyseq}}$} \linked
          }
          child { node {$\taaa{\lnot \rk(\ggk)}$} }
          child { node {$\taaa{\dk_{\LL 1}(\ggk)}$} 
            child { node {$\taaa{\lnot \dk_{\LL 1}(\ggk)}$} \linked}
            child { node {$\taaa{\sk(\ggk,\ffk(\ggk))}$ }
              child { node {$\tbbb{\lnot \dk_{\RR \emptyseq}}$} 
              }
              child { \nodeMarkRegular {$\tbbb{\dk_{\RR 1}(\ggk)}$}
                child { node {$\tbbb{\lnot \dk_{\RR 1}(\ggk)}$}  \linked
                }
                child { node {$\tbbb{\lnot \sk(\ggk,\ffk(\ggk))}$} }
                child { node {$\tbbb{\dk_{\RR 11}}$} 
                  child { node {$\tbbb{\lnot \dk_{\RR 11}}$} \linked }
                }
              }
            }
          }
        }
      }
    };
\end{tikzpicture}
\end{tableaufigTwoCol} 

\begin{tableaufig}{Example~\ref{examp-conn-2} -- Stage 5}
\label{fig-tab-ex-2-5}
\begin{tikzpicture}[scale=\extabscale,
    baseline=(a.north),
    sibling distance=5em,level distance=\extabld,
    every node/.style = {transform shape,anchor=mid}]]
    \node (a) {$\tbbb{\dk_{\RR \emptyseq}}$}
    child { node {$\tbbb{\lnot \dk_{\RR \emptyseq}}$} \linked }
    child { node {$\tbbb{\rk(\ggk)}$}
      child { node {$\tbbb{\lnot \dk_{\RR \emptyseq}}$} }
      child { node {$\tbbb{\dk_{\RR 1}(\ggk)}$} 
        child { node {$\taaa{\dk_{\LL \emptyseq}}$}
          child { node {$\taaa{\lnot \dk_{\LL \emptyseq}}$} \linked
          }
          child { node {$\taaa{\lnot \rk(\ggk)}$} }
          child { node {$\taaa{\dk_{\LL 1}(\ggk)}$} 
            child { node {$\taaa{\lnot \dk_{\LL 1}(\ggk)}$} \linked}
            child { node {$\taaa{\sk(\ggk,\ffk(\ggk))}$ }
              child { node {$\tbbb{\lnot \dk_{\RR 1}(\ggk)}$}  \linked
              }
              child { node {$\tbbb{\lnot \sk(\ggk,\ffk(\ggk))}$} }
              child { node {$\tbbb{\dk_{\RR 11}}$} 
                child { node {$\tbbb{\lnot \dk_{\RR 11}}$} \linked }
              }
            }
          }
        }
      }
    };
\end{tikzpicture}
\end{tableaufig} 

\end{examp}

\pagebreak
\begin{examp}[Connection Tableau III]
\label{examp-conn-3}
Like Example~\ref{examp-conn-1} and~\ref{examp-conn-2}, this example starts
with a closed tightly connected clausal tableau for the clausal
formula~(\ref{eq-formula-examp-xy}) and proceed in rounds of
Procedure~\ref{proc-leafonly}
(Fig.~\ref{fig-tab-ex-3-1}--\ref{fig-tab-ex-3-5}), steps 1.-4. of
Procedure~\ref{proc-contig} (Fig.~\ref{fig-tab-ex-3-6}) and regularity
simplification with Procedure~\ref{proc-simp-regular} to an \acit
(Fig.~\ref{fig-tab-ex-3-7}).

\renewcommand{\extabscale}{0.62}
\renewcommand{\extabld}{12ex}

\begin{tableaufigTwoCol}{Example~\ref{examp-conn-3} -- Stage 1}
\label{fig-tab-ex-3-1}
\begin{tikzpicture}[scale=\extabscale,
    baseline=(a.north),
    sibling distance=5em,level distance=\extabld,
    every node/.style = {transform shape,anchor=mid}]]
    \node (a) {$\tbbb{\dk_{\RR \emptyseq}}$}
    child { node {$\tbbb{\lnot \dk_{\RR \emptyseq}}$ } \linked }
    child { node {$\tbbb{\dk_{\RR 1}(\ggk)}$} 
      [sibling distance=9em]
      child { node {$\tbbb{\lnot \dk_{\RR 1}(\ggk)}$} \linked }
      child { \nodeMarkLeafonly {$\tbbb{\lnot \sk(\ggk,\ffk(\ggk))}$} 
        [sibling distance=6em]
        child { node {$\taaa{\lnot \dk_{\LL 1}(\ggk)}$} 
          [sibling distance=5em]
          child { node {$\taaa{\lnot \dk_{\LL \emptyseq}}$}
            child { node {$\taaa{\dk_{\LL \emptyseq}}$} \linked }
          }
          child { node {$\taaa{\lnot \rk(\ggk)}$}
            [sibling distance=5em]
            child { node {$\tbbb{\lnot \dk_{\RR \emptyseq}}$} }
            child { node {$\tbbb{\rk(\ggk)}$} \linked}
          }
          child { node {$\taaa{\dk_{\LL 1}(\ggk)}$} \linked}
        }
        child { node {$\taaa{\sk(\ggk,\ffk(\ggk))}$} \linked }
      }
      child { node {$\tbbb{\dk_{\RR 11}}$} 
        child { node {$\tbbb{\lnot \dk_{\RR 11}}$} \linked }
      }
    };
\end{tikzpicture}
\end{tableaufigTwoCol}
\begin{tableaufigTwoCol}{Example~\ref{examp-conn-3} -- Stage 2}
\label{fig-tab-ex-3-2}
\begin{tikzpicture}[scale=\extabscale,
    baseline=(a.north),
    sibling distance=5em,level distance=\extabld,
    every node/.style = {transform shape,anchor=mid}]]
    \node (a) {$\tbbb{\dk_{\RR \emptyseq}}$}
    child { node {$\tbbb{\lnot \dk_{\RR \emptyseq}}$ } \linked }
    child { node {$\tbbb{\dk_{\RR 1}(\ggk)}$} 
      [sibling distance=14em]
      child { \nodeMarkLeafonly {$\taaa{\lnot \dk_{\LL 1}(\ggk)}$} 
        [sibling distance=5em]
        child { node {$\taaa{\lnot \dk_{\LL \emptyseq}}$}
          child { node {$\taaa{\dk_{\LL \emptyseq}}$} \linked }
        }
        child { node {$\taaa{\lnot \rk(\ggk)}$}
          [sibling distance=5em]
          child { node {$\tbbb{\lnot \dk_{\RR \emptyseq}}$} }
          child { node {$\tbbb{\rk(\ggk)}$} \linked}
        }
        child { node {$\taaa{\dk_{\LL 1}(\ggk)}$} \linked}
      }
      child { node {$\taaa{\sk(\ggk,\ffk(\ggk))}$} 
        [sibling distance=5em]
        child { node {$\tbbb{\lnot \dk_{\RR 1}(\ggk)}$} }
        child { node {$\tbbb{\lnot \sk(\ggk,\ffk(\ggk))}$} \linked }
        child { node {$\tbbb{\dk_{\RR 11}}$} 
          child { node {$\tbbb{\lnot \dk_{\RR 11}}$} \linked }
        }
      }
    };
\end{tikzpicture}
\end{tableaufigTwoCol}

\begin{tableaufigTwoCol}{Example~\ref{examp-conn-3} -- Stage 3}
\label{fig-tab-ex-3-3}
\begin{tikzpicture}[scale=\extabscale,
    baseline=(a.north),
    sibling distance=5em,level distance=\extabld,
    every node/.style = {transform shape,anchor=mid}]]
    \node (a) {$\tbbb{\dk_{\RR \emptyseq}}$}
    child { node {$\tbbb{\lnot \dk_{\RR \emptyseq}}$ } \linked }
    child { node {$\tbbb{\dk_{\RR 1}(\ggk)}$} 
      [sibling distance=9em]
      child { \nodeMarkLeafonly {$\taaa{\lnot \dk_{\LL \emptyseq}}$}
        child { node {$\taaa{\dk_{\LL \emptyseq}}$} \linked }
      }
      child { node {$\taaa{\lnot \rk(\ggk)}$}
        [sibling distance=5em]
        child { node {$\tbbb{\lnot \dk_{\RR \emptyseq}}$} }
        child { node {$\tbbb{\rk(\ggk)}$} \linked}
      }
      child { node {$\taaa{\dk_{\LL 1}(\ggk)}$}
        [sibling distance=5em]
        child { node {$\taaa{\lnot \dk_{\LL 1}(\ggk)}$} \linked }
        child { node {$\taaa{\sk(\ggk,\ffk(\ggk))}$} 
          [sibling distance=5em]
          child { node {$\tbbb{\lnot \dk_{\RR 1}(\ggk)}$} }
          child { node {$\tbbb{\lnot \sk(\ggk,\ffk(\ggk))}$} \linked }
          child { node {$\tbbb{\dk_{\RR 11}}$} 
            child { node {$\tbbb{\lnot \dk_{\RR 11}}$} \linked }
          }
        }
      }
    };
\end{tikzpicture}
\end{tableaufigTwoCol}
\begin{tableaufigTwoCol}{Example~\ref{examp-conn-3} -- Stage 4}
\label{fig-tab-ex-3-4}
\begin{tikzpicture}[scale=\extabscale,
    baseline=(a.north),
    sibling distance=5em,level distance=\extabld,
    every node/.style = {transform shape,anchor=mid}]]
    \node (a) {$\tbbb{\dk_{\RR \emptyseq}}$}
    child { node {$\tbbb{\lnot \dk_{\RR \emptyseq}}$ } \linked }
    child { node {$\tbbb{\dk_{\RR 1}(\ggk)}$} 
      child { node {$\taaa{\dk_{\LL \emptyseq}}$}
        [sibling distance=9em]
        child { node {$\taaa{\lnot \dk_{\LL \emptyseq}}$} \linked
        }
        child { \nodeMarkLeafonly {$\taaa{\lnot \rk(\ggk)}$}
          [sibling distance=5em]
          child { node {$\tbbb{\lnot \dk_{\RR \emptyseq}}$} }
          child { node {$\tbbb{\rk(\ggk)}$} \linked}
        }
        child { node {$\taaa{\dk_{\LL 1}(\ggk)}$}
          [sibling distance=5em]
          child { node {$\taaa{\lnot \dk_{\LL 1}(\ggk)}$} \linked }
          child { node {$\taaa{\sk(\ggk,\ffk(\ggk))}$} 
            [sibling distance=5em]
            child { node {$\tbbb{\lnot \dk_{\RR 1}(\ggk)}$} }
            child { node {$\tbbb{\lnot \sk(\ggk,\ffk(\ggk))}$} \linked }
            child { node {$\tbbb{\dk_{\RR 11}}$} 
              child { node {$\tbbb{\lnot \dk_{\RR 11}}$} \linked }
            }
          }
        }
      }
    };
\end{tikzpicture}
\end{tableaufigTwoCol}

\begin{tableaufigTwoCol}{Example~\ref{examp-conn-3} -- Stage 5}
\label{fig-tab-ex-3-5}
\begin{tikzpicture}[scale=\extabscale,
    baseline=(a.north),
    sibling distance=5em,level distance=\extabld,
    every node/.style = {transform shape,anchor=mid}]]
    \node (a) {$\tbbb{\dk_{\RR \emptyseq}}$}
    child { node {$\tbbb{\lnot \dk_{\RR \emptyseq}}$ } \linked }
    child { \nodeMarkContig {$\tbbb{\dk_{\RR 1}(\ggk)}$} 
      child { node {$\taaa{\dk_{\LL \emptyseq}}$}
        [sibling distance=5em]
        child { node {$\tbbb{\lnot \dk_{\RR \emptyseq}}$} }
        child { \nodeMarkContig {$\tbbb{\rk(\ggk)}$} 
          [sibling distance=5em]
          child { node {$\taaa{\lnot \dk_{\LL \emptyseq}}$}
          }
          child { node {$\taaa{\lnot \rk(\ggk)}$}  \linked
          }
          child { node {$\taaa{\dk_{\LL 1}(\ggk)}$}
            [sibling distance=5em]
            child { node {$\taaa{\lnot \dk_{\LL 1}(\ggk)}$} \linked }
            child { node {$\taaa{\sk(\ggk,\ffk(\ggk))}$} 
              [sibling distance=5em]
              child { node {$\tbbb{\lnot \dk_{\RR 1}(\ggk)}$} }
              child { node {$\tbbb{\lnot \sk(\ggk,\ffk(\ggk))}$} \linked }
              child { node {$\tbbb{\dk_{\RR 11}}$} 
                child { node {$\tbbb{\lnot \dk_{\RR 11}}$} \linked }
              }
            }
          }
        }
      }
    };
\end{tikzpicture}
\end{tableaufigTwoCol}
\begin{tableaufigTwoCol}{Example~\ref{examp-conn-3} -- Stage 6}
\label{fig-tab-ex-3-6}
\begin{tikzpicture}[scale=\extabscale,
    baseline=(a.north),
    sibling distance=5em,level distance=\extabld,
    every node/.style = {transform shape,anchor=mid}]]
    \node (a) {$\tbbb{\dk_{\RR \emptyseq}}$}
    child { node {$\tbbb{\lnot \dk_{\RR \emptyseq}}$ } \linked }
    child { node {$\tbbb{\dk_{\RR 1}(\ggk)}$} 
      child { node {$\tbbb{\lnot \dk_{\RR \emptyseq}}$} }
      child { \nodeMarkRegular {$\tbbb{\rk(\ggk)}$}       
        child { node {$\taaa{\dk_{\LL \emptyseq}}$}
          [sibling distance=5em]
          child { node {$\tbbb{\lnot \dk_{\RR \emptyseq}}$} }
          child { \nodeMarkRegular {$\tbbb{\rk(\ggk)}$} 
            [sibling distance=5em]
            child { node {$\taaa{\lnot \dk_{\LL \emptyseq}}$}
            }
            child { node {$\taaa{\lnot \rk(\ggk)}$}  \linked
            }
            child { node {$\taaa{\dk_{\LL 1}(\ggk)}$}
              [sibling distance=5em]
              child { node {$\taaa{\lnot \dk_{\LL 1}(\ggk)}$} \linked }
              child { node {$\taaa{\sk(\ggk,\ffk(\ggk))}$} 
                [sibling distance=5em]
                child { node {$\tbbb{\lnot \dk_{\RR 1}(\ggk)}$} }
                child { node {$\tbbb{\lnot \sk(\ggk,\ffk(\ggk))}$} \linked }
                child { node {$\tbbb{\dk_{\RR 11}}$} 
                  child { node {$\tbbb{\lnot \dk_{\RR 11}}$} \linked }
                }
              }
            }
          }
        }
      }
    };
\end{tikzpicture}
\end{tableaufigTwoCol}

\begin{tableaufig}{Example~\ref{examp-conn-3} -- Stage 7}
\label{fig-tab-ex-3-7}
\begin{tikzpicture}[scale=\extabscale,
    baseline=(a.north),
    sibling distance=5em,level distance=\extabld,
    every node/.style = {transform shape,anchor=mid}]]
    \node (a) {$\tbbb{\dk_{\RR \emptyseq}}$}
    child { node {$\tbbb{\lnot \dk_{\RR \emptyseq}}$ } \linked }
    child { node {$\tbbb{\dk_{\RR 1}(\ggk)}$} 
      child { node {$\tbbb{\lnot \dk_{\RR \emptyseq}}$} }
      child { node {$\tbbb{\rk(\ggk)}$}       
        child { node {$\taaa{\dk_{\LL \emptyseq}}$}
          [sibling distance=5em]
          child { node {$\taaa{\lnot \dk_{\LL \emptyseq}}$}
          }
          child { node {$\taaa{\lnot \rk(\ggk)}$}  \linked
          }
          child { node {$\taaa{\dk_{\LL 1}(\ggk)}$}
            [sibling distance=5em]
            child { node {$\taaa{\lnot \dk_{\LL 1}(\ggk)}$} \linked }
            child { node {$\taaa{\sk(\ggk,\ffk(\ggk))}$} 
              [sibling distance=5em]
              child { node {$\tbbb{\lnot \dk_{\RR 1}(\ggk)}$} }
              child { node {$\tbbb{\lnot \sk(\ggk,\ffk(\ggk))}$} \linked }
              child { node {$\tbbb{\dk_{\RR 11}}$} 
                child { node {$\tbbb{\lnot \dk_{\RR 11}}$} \linked }
              }
            }
          }
        }
      }
    };
\end{tikzpicture}
\end{tableaufig}

\end{examp}

\section{Access Interpolation: Refinements, Issues and Related Work}

\label{sec-ai-conclusion}

\subsection{Alternate Tableaux -- Alternate Interpolants}

In general there are different closed clausal tableaux for a single given
unsatisfiable clausal formula. A different interpolant would be extracted from
each tableau. For an application such as query reformulation some of these
interpolants might be more preferable than others.  For example, a query might
be preferably reformulated in terms of more specific access patterns such that
available parameter instantiations are best utilized when the reformulated
query is evaluated.

In \cite{toman:wedell:book,toman:2015:tableaux,toman:2017} query reformulation
is indeed based on computing alternate interpolants -- each considered as
representing a query plan -- and comparing them with a cost function. This
approach has been refined in \cite{toman:2015:tableaux,toman:2017} with a
condensed representation of a set of tableaux in a single structure.  An
advanced system that interleaves the generation of a pair of such condensed
tableaux, one for each of the two interpolation input formulas, with detecting
when their combination would be closed is described in \cite{toman:2017}.

Enumeration of closed clausal tableaux for a given set of clauses is quite
natural for goal-sensitive clausal tableau methods such as model elimination
and the connection method which operate with backtracking anyways.  If such a
calculus is not stopped after finishing construction of a closed clausal
tableau it backtracks to generate further closed tableaux (\name{CM}
\cite{cw-pie} can for example be configured in that way).  However,
requirements for theorem proving and for the computation of interpolants as
query plans seem to contradict: Theorem provers typically aim to prevent the
search for alternate proofs as much as possible without compromising
completeness or experimental success, whereas, if interpolants are considered
as query plans with associated costs it would make sense to compare even
proofs with ``trivial'' differences, for example, if they correspond to
interpolants just distinguished by a different order of conjuncts in a
subformula.

Heuristics that can be configured to give priority to tableaux that are
preferred with respect to the application seem in general useful. For
goal-sensitive provers based on model elimination or on the connection
calculus the order in which clauses are picked for inclusion in the tableau is
relatively easy to influence.  With backtracking and iterative deepening
completeness is preserved by ensuring that lower ranked clauses will
eventually be considered as long as no closed tableau with more highly ranked
clauses has been found. A common heuristics is to rank clauses by their
length, shortest first.  Application specific orderings can, for example,
effect that clauses associated with more specific reformulations are given
priority.\footnote{The \name{CM} prover included with the \name{PIE} system
  \cite{cw-pie} supports this with an experimental option. For an example see
  Section~2.2 in
  \url{http://cs.christophwernhard.com/pie/downloads/pie/scratch/scratch_views_lit.pdf}.}

\subsection{Preprocessing Issues and Resolution}

Preserving the second-order equivalence~(\ref{eq-so-simp}) as discussed in
Sect.~\ref{sec-preproc} for preprocessing inputs of Craig-Lyndon
interpolation is too weak for access interpolation, as the latter depends on
further constraints on the clause form.  It seems, however, possible to
generalize some of the clause forms of Lemma~\ref{lem-input-clauseforms} such
that they are closed under certain preprocessing steps that break apart and
join clauses. Forms~\mbox{2--5}, for example, could be easily generalized to a
single form with one negative and an arbitrary number of positive definer
literals whose arguments, and hence also free variables, all occur in the
negative literal. This generalized form is closed under resolution. The
forms~2--5 are already handled in a generalized way together in proof of
Lemma~\ref{lem-aipol-invariant}.  Form~6 could be generalized by allowing
further positive literals. However, unrestricted resolution among such clauses
may yield clauses with multiple negative ``$R$''-literals (literals that are
not definers). Exploring ways to restrict resolution for these clauses is an
issue for future research.

\subsection{Application to Query Reformulation} The actual specification of
axiom schemas used as inputs of interpolation tasks for query reformulation is
beyond the scope of this paper. We refer to the literature, in particular to
\cite{benedikt:book,benedikt:2017} and \cite{toman:wedell:book}.
Nevertheless, certain aspects of the axiom schemas in \cite{benedikt:book}
seem to be closely connected to the precise way in which input formulas are
expressed and processed by a theorem prover, suggesting further investigations
in the context of our method.  This concerns in particular the consideration
of access interpolation for ``non-Boolean'' queries, that is, queries whose
results are relations of non-zero arity, discussed in
\cite[Sec.~3.8]{benedikt:book} and the axiom schema $\f{AltAcSch}^{\equi}$
from \cite[Section~3.6]{benedikt:book}, which is apparently obtained from a more
abstract specification $\f{AcSch}^{\equi}$ with the involvement of unfolding
predicates, raising the question whether a more condensed representation
without the need of unfolding is possible.

The considered approach to query reformulation is not principally limited to
relational database queries, but applies to logic-based knowledge
representation mechanisms in general.  So far, the main application direction
is to optimize a given query with respect to a given access schema.  The
approach can, however, in also be applied conversely to determine from given
parameterized queries an access schema that would be required to answer these
queries.  That inferred access schema can then be used to determine caches and
precomputed indexes as basis for evaluating the queries at a later time.

\subsection{Implementations of Interpolation for Query
  Reformulation} 
\label{sec-implem-qr}
In \cite{benedikt:2017} different approaches to implement query reformulation
based on Craig-Lyndon interpolation have been experimentally investigated,
however only for ground inputs.  One approach considered there was an
extension of the general first-order prover \name{Vampire} that supports
interpolant computation \cite{vampire:interpol:2012}. Apparently it does not
ensure the polarity constraint on Craig-Lyndon interpolants (the ``Lyndon
property''), had many timeouts and produced poor result formulas, indicating
that the requirements of interpolation in verification, the main objective of
the \name{Vampire} extension, and in query reformulation are quite
different.\footnote{Aside of \name{Vampire}, also the \name{Princess} theorem
  prover (\url{http://www.philipp.ruemmer.org/princess.shtml})
  \cite{ruemmer:ipol:jar:2011} supports Craig interpolation, mainly with
  respect to theories targeted at applications in verification.  The
  \name{PIE} system is another first-order prover with support for Craig
  interpolation (see Sect.~\ref{sec-cli-implem}).}  Better results were
obtained with methods for interpolant extraction from resolution proofs, based
on algorithms from \cite{huang:95,bonacina:11,mcmillan:2003}.  The
\name{MathSAT} SMT solver \cite{mathsat:13} and the \name{E} first-order
prover \cite{eprover:13} have been used to compute the underlying resolution
proofs (apparently \name{E} was the only first-order resolution prover that
produced sufficiently detailed proofs). \name{MathSAT} was there superior to
\name{E}. For extraction of the propositional interpolants an optimized
variant of the algorithm of \cite{huang:95} introduced in \cite{benedikt:2017}
as well as the method of \cite{mcmillan:2003} showed best. Also a method based
on the chase technique, implemented with \name{DLV} \cite{dlv} as model
generator, has been evaluated in \cite{benedikt:2017}.  In essence, a
\emph{uniform} interpolant (that is, the result of predicate elimination) is
computed there by removing literals whose predicate is not allowed in the
interpolant from a disjunctive normal form.  In cases where the expected
reformulation is a disjunctive normal form this approach is only slightly
worse than the best resolution-based approach.
It seems that access interpolation has so far not been implemented.  As
already mentioned, the approach of
\cite{toman:wedell:book,toman:2015:tableaux} has been implemented with
advanced dedicated techniques \cite{toman:2017}.  Small examples with a simple
form of axiom schemas that are processed by Craig-Lyndon interpolation on the
basis of a general first-order clausal tableau prover come with the \name{PIE}
system \cite{cw-pie}.\footnote{%
  \url{http://cs.christophwernhard.com/pie/downloads/pie/scratch/scratch_access_demo_01.pdf},
  \url{http://cs.christophwernhard.com/pie/downloads/pie/scratch/scratch_views_lit.pdf}}

\section{Conclusion}
\label{sec-conclusion}

We investigated the computation of Craig-Lyndon interpolants and of access
interpolants, a recent form of interpolation with applications in query
reformulation, by means of clausal tableau methods.  Aspects of the elegance
of an established interpolant construction based on non-clausal tableaux were
combined with the suitability of clausal tableaux for machine processing.  The
framework of clausal tableaux as a basis in contrast to resolution leads to a
natural way to decompose the overall task of computing first-order
interpolants into subtasks, which seems useful for theoretical considerations
as well as practical implementation.  This new modularization concerns three
different aspects: the lifting from ground interpolants to quantified
formulas, the roles of ``local'' versus ``non-local'' techniques, and the
interplay of proof search with structural requirements on the proof:

\begin{enumerate}

\item \label{abstr-lifting} Computation of Craig-Lyndon interpolants is
  performed on the basis of clausal tableaux in two stages, like in most
  resolution-based interpolation methods. In contrast to these, however, the
  second stage, the lifting to a quantified first-order formula, is performed
  here on an actual Craig-Lyndon interpolant of a finite unsatisfiable subset
  of the Herbrand expansion of the Skolemized and clausified input formulas
  instead of some formula that is characterized as ``almost'' a Craig-Lyndon
  interpolant of the original input formulas.  This allows to consider the
  lifting conversion more abstractly, just based on Herbrand's theorem,
  independently from a particular calculus.  For practical implementation, it
  means that any method that computes a closed clausal tableau for an
  unsatisfiable first-order input formula can be directly applied to
  interpolant computation, without need to alter its internal workings,
  benefiting directly from refinements and efficient data structures.

\item \label{abstr-local} Differently from resolution, the construction of a
  clausal tableau does not involve breaking apart and joining
  clauses. However, such operations can be performed during preprocessing for
  a clausal tableau prover.  We then get an overall picture of the
  tableau-based interpolation where first operations that break apart and join
  clauses, like resolution, are performed only ``locally'', that is, on each
  of the two input formulas individually.  These operations must preserve the
  semantics of the predicates and functions that are allowed in the
  interpolant, but can eliminate or semantically alter other predicates and
  functions.  After this preprocessing stage is completed, for example,
  because no further conversion operations are possible or because potential
  further operations would increase the formula size in an undesired way, the
  actual tableau construction comes in to handle the ``non-local'' joint
  processing of both preprocessed inputs.

\item \label{abstr-struct} In our approach to the computation of access
  interpolants the underlying clausal tableaux are required to meet specific
  structural constraints.  ``Bottom-up'' methods for constructing clausal
  tableaux, such as the hyper tableau calculus, typically can be configured to
  compute tableaux that satisfy these constraints. For other calculi the
  tableau construction is split in two phases: First, a theorem prover
  computes an arbitrarily structured closed clausal tableau.  Second, the
  structure of the tableau output by the prover is converted such that it
  satisfies the required restrictions. This conversion is, however,
  potentially costly since it involves steps that might involve duplication of
  subtableaux.
\end{enumerate}  

These aspects suggest a number of challenging follow-up questions. With
respect to aspect~(\ref{abstr-lifting}.): Can the justification of lifting for
Craig-Lyndon interpolants also be applied to resolution-based interpolation
methods?  Can this lifting method, which prepends a single quantifier prefix
to the whole formula be reconciled with requirements of relativized
quantification as in access interpolation, where quantifier scopes that are
limited to subformulas seem essential?  With respect to~(\ref{abstr-local}.):
Does the observation of the roles of ``local'' versus ``non-local'' inferences
in interpolation based on clausal tableaux indicate some interesting property
where clausal tableaux and resolution diverge?  Is interpolation a field for
which clausal tableau methods are better suited than resolution in some
substantial sense?  With respect to~(\ref{abstr-struct}.): Are there
interesting implications of the ``calculus that preserves a structure'' versus
``calculus has more freedom followed by potentially costly conversion''
approaches?  Can it be shown that the first approach implies that the costs of
conversion have to be incorporated in essence into the proof search?

The presented material provides foundations to practically implement
Craig-Lyndon interpolation and access interpolation on the basis of a variety
of machine-oriented theorem-proving methods for first-order logic that can be
considered as constructing a closed clausal tableaux.  These fall into two
families, goal-oriented ``top-down'' methods such as model elimination and the
connection method, and data-oriented ``bottom-up'' methods such as the hyper
tableau calculus. A first implementation of Craig-Lyndon interpolation on the
basis of a prover of the first family is already available \cite{cw-pie}.
Since the presented methods and most discussed refinements incorporate
first-order provers that compute clausal tableaux abstractly, without imposing
requirements on tableau construction methods, it should only be a short way
from the foundations provided with this work to experimental evaluations.

\subsubsection*{Acknowledgments.}
This research was supported by Deutsche Forschungsgemeinschaft with
grant~\mbox{WE~5641/1-1} for the project \name{The Second-Order Approach and its
  Application to View-Based Query Processing} hosted by Tech\-ni\-sche
Uni\-ver\-si\-tät Dresden, Germany.

\bibliographystyle{spmpsci}
\bibliography{bibelim10short}

\closeout\keysfile
\closeout\keyslogfile
\end{document}